\tikzset{->,auto}
\tikzset{state/.style={shape=circle, draw, fill=white, initial text=,
    inner sep=.5mm, minimum size=2mm}}
\tikzset{state with output/.style={shape=rectangle split, rectangle
    split parts=2, draw, fill=white,
    initial text=, inner sep=1mm}}
\tikzset{every node={font=\footnotesize}}
\definecolor{mustard}{RGB}{255, 219, 88}
\definecolor{steelblue}{RGB}{86 , 181 , 185}
\colorlet{darkMustard}{mustard!60!black}
\colorlet{lightSteelblue}{steelblue!60!white}
\colorlet{middleSteelblue}{steelblue!25!white}
\colorlet{darkSteelblue}{steelblue!80!black}
\let\epsilon\varepsilon
\let\phi\varphi
\let\mathbbm\mathbb
\newcommand*\para{\mathop{\|}} 
\newcommand*\loset[1]{\left[\begin{smallmatrix}#1\end{smallmatrix}\right]}
\newcommand*\Nat{\mathbbm{N}}
\newcommand*\Real{\mathbbm{R}}
\newcommand*\Realnn{\Real_{ \ge 0}}
\newcommand*\sem[1]{\llbracket #1\rrbracket}
\newcommand*\delayMove[1]{\mathbin{\leadsto^{#1}}}
\newcommand*\actionMove[1]{\mathbin{\overset{#1}{\leadsto}}}
\newcommand*\ev{\textup{\textsf{ev}}}
\newcommand*\Lang{\mathcal{L}}
\newcommand*\TW{\textup{\textsf{TW}}}
\newcommand*\evord{\dashrightarrow}
\newcommand*\ie{\textit{i.e.},}
\newcommand*\ibullet{\vcenter{\hbox{\tiny $\bullet$}}}
\newcommand*\ilo[3]{\leftidx{_{#1}}{#2}{_{#3}}}
\newcommand*\cat[1]{\text{\textup{\textsf{#1}}}}
\newcommand*\iiPoms{\cat{iPoms}} 
\newcommand*\bigloset[1]{\left[\begin{matrix}#1\end{matrix}\right]}
\newcommand*\pibullet{\phantom{\ibullet}}
\newcommand*\subsu{\sqsubseteq} 
\newcommand*\rest[1]{{}_{| #1}}
\newcommand*\starter[2]{{_{#2\!}}{{\uparrow}#1}{}}
\newcommand*\terminator[2]{{#1}{\downarrow}_{#2}}
\newcommand*\id{\textup{\textsf{id}}}
\newcommand*\St{\textup{\textsf{St}}}
\newcommand*\Te{\textup{\textsf{Te}}}
\newcommand*\Id{\textup{\textsf{Id}}}
\newcommand*\Cohonthenose{\textup{\textsf{Coh}}}
\newcommand*\Coh{\Cohonthenose}
\newcommand*\StepS{\textup{\textsf{SSeq}}}
\newcommand*\Glue{\textup{\textsf{Glue}}}
\newcommand*\arrO[1]{\mathrel{\nearrow^{#1}}}
\newcommand*\arrI[1]{\mathrel{\searrow_{#1}}}
\newcommand*\src{\textsf{src}}
\newcommand*\tgt{\textsf{tgt}}
\newcommand*\down{\mathord{\downarrow}}
\newcommand*\stc{\ensuremath{\rho}}
\newcommand*\inv{\textup{\textsf{inv}}}
\newcommand*\exit{\textup{\textsf{exit}}}
\newcommand*\upMove[1]{\mathbin{\leadsto^{#1}}}
\newcommand*\downMove[1]{\mathbin{\leadsto_{#1}}}
\newcommand*\yesthatsanldammit{l}
\newcommand*{\ssserif}[1]{{\sffamily #1}}
\newcommand*\tCoh{\textup{\textsf{tCoh}}}
\newcommand*\IDW{\textup{\textsf{IDW}}}
\newcommand*\tGlue{\textup{\textsf{tGlue}}}
\newcommand*\unt{\textnormal{unt}} 
\newcommand*{\first}{\textup{\textsf{first}}}
\newcommand*{\last}{\textup{\textsf{last}}}
\newcommand*\cf{\textit{cf.}}
\begin{document}

\title{Higher-Dimensional Timed Automata for Real-Time Concurrency}

\author{%
  Amazigh Amrane\inst1 \and
  Hugo Bazille\inst1 \and
  Emily Clement\inst2 \and
  Uli Fahrenberg\inst1\thanks{%
    Corresponding author}
  \and
  Philipp Schlehuber-Caissier\inst3\thanks{Partially funded by the Academic and Research Chair « Architecture des Systèmes Complexes »
  (Complex Systems Architecture) - Dassault Aviation, Naval Group, Dassault Systèmes, KNDS
  France, Agence de l'Innovation de Défense, Institut Polytechnique de Paris.}
}

\authorrunning{Amrane, Bazille, Clement, Fahrenberg, and Schlehuber-Caissier}

\institute{%
  EPITA Research Lab (LRE), France \and
  CNRS, LIPN UMR 7030, Université Sorbonne Paris Nord, France \and
  SAMOVAR, Télécom SudParis, Institut Polytechnique de Paris, France}

\maketitle

\begin{abstract}
  We present a new language semantics for real-time concurrency.
  Its operational models are higher-dimensional timed automata (HDTAs),
  a generalization of both higher-dimensional automata and timed automata.
  In real-time concurrent systems,
  both concurrency of events and timing and duration of events are of interest.
  Thus, HDTAs combine
  the non-interleaving concurrency model of higher-dimensional automata
  with the real-time modeling, using clocks, of timed automata.
  We define languages of HDTAs
  as sets of interval-timed pomsets with interfaces.

  We show that language inclusion of HDTAs is undecidable.
  On the other hand, using a region construction we can show that
  untimings of HDTA languages have enough regularity
  so that untimed language inclusion is decidable.
  On a more practical note, we give new insights on when practical 
  applications, like checking reachability, might benefit from using HDTAs 
  instead of classical timed automata.

  \keywords{%
    higher-dimensional timed automaton,
    real-time concurrency,
    timed automaton,
    higher-dimensional automaton
  }
\end{abstract}

\section{Introduction}

In order to model non-interleaving concurrency,
models such as
Petri nets \cite{book/Petri62},
event structures \cite{DBLP:journals/tcs/NielsenPW81},
configuration structures \cite{%
  DBLP:journals/tcs/GlabbeekP09,
  DBLP:conf/lics/GlabbeekP95},
or higher-dimensional automata (HDAs) \cite{%
  DBLP:conf/popl/Pratt91,
  Glabbeek91-hda,
  Hdalang}
allow several events to happen simultaneously.
The interest of such models,
compared to other models such as automata or transition systems,
is the possibility to distinguish concurrent and interleaving executions;
using CCS notation~\cite{book/Milner89},
parallel compositions $a\para b$ are not the same as choices $a.b+b.a$.

\begin{figure}[tbp]
  \centering
  \begin{tikzpicture}[>=stealth']
    \begin{scope}[state/.style={shape=circle, draw, fill=white, initial text=, inner sep=1mm, minimum size=3mm}]
      \node[state, black] (10) at (0,0) {};
      \node[state, rectangle] (20) at (1,0) {$\vphantom{b}a$};
      \node[state] (30) at (2,0) {};
      \node[state, rectangle] (40) at (3,0) {$\vphantom{b}c$};
      \node[state] (50) at (4,0) {};
      \node[state, black] (11) at (0,-1) {};
      \node[state, rectangle] (21) at (2,-1) {$b$};
      \node[state] (31) at (4,-1) {};
      \path (10) edge (20);
      \path (20) edge (30);
      \path (30) edge (40);
      \path (40) edge (50);
      \path (11) edge (21);
      \path (21) edge (31);
    \end{scope}
    \begin{scope}[shift=({6,-1.3}), x=1cm, y=.8cm]
      \path[fill=black!15] (0,0) -- (4,0) -- (4,2) -- (0,2);
      \node[state, initial left] (00) at (0,0) {};
      \node[state] (10) at (2,0) {};
      \node[state] (20) at (4,0) {};
      \node[state] (01) at (0,2) {};
      \node[state] (11) at (2,2) {};
      \node[state, accepting] (21) at (4,2) {};
      \path (00) edge node[swap] {$a$} (10);
      \path (10) edge node[swap] {$c$} (20);
      \path (00) edge node {$b$} (01);
      \path (10) edge (11);
      \path (20) edge node[swap] {$b$} (21);
      \path (01) edge node {$a$} (11);
      \path (11) edge node {$c$} (21);
    \end{scope}
  \end{tikzpicture}
  \caption{Petri net and HDA models for $a c\para b$}
  \label{fig:ac|b}
\end{figure}

Semantically, concurrency in non-interleaving models is represented by the fact
that their languages do not consist of words but rather of partially ordered multisets (\emph{pomsets}).
As an example, Figure~\ref{fig:ac|b} shows
Petri net and HDA models which execute the parallel composition of $a.c$ and $b$;
their language is generated by the pomset $\loset{a\to c\\b}$
in which there is no order relation between $a$ and $b$ nor between $b$ and $c$.
However, these models and pomsets use logical time
and make no statements about the precise durations or timings of events.

When using models for real-time systems,
such as for example timed automata \cite{%
  DBLP:journals/tcs/AlurD94,
  DBLP:conf/icalp/AlurD90}
which can model precise durations and timings of events,
the distinction between concurrency and interleaving
is usually left behind.
Their languages are sets of \emph{timed words},
that is, sequences of symbols
each of which is associated with a timestamp
that records when the associated event took place.

In this article, our goal is to propose a language-based semantics for concurrent real-time systems.
Our aim is to combine the two semantics above,
timed words for interleaving real time
and pomsets for non-interleaving logical time.

Another such proposal was developed in \cite{DBLP:journals/fmsd/BalaguerCH12},
where, going back to~\cite{DBLP:journals/tcs/AuraL00},
languages of time Petri nets
are given as sets of pomsets with timestamps on events,
see also \cite{%
  DBLP:conf/formats/ChatainJ13,
  thesis/Chatain13}.
Nevertheless, this creates problems of causality,
as explained in~\cite{DBLP:conf/formats/ChatainJ13} which notes that
\textit{``[t]ime and causality [do] not necessarily blend well in [...] Petri nets''}.

We put forward a different language-based semantics for real-time concurrency,
inspired by recent work on interval-order semantics of higher-dimensional automata
\cite{%
  DBLP:journals/iandc/FahrenbergJSZ22,
  Hdalang,
  DBLP:conf/ictac/AmraneBFZ23}.
We use pomsets with \emph{interval timestamps} on events,
that is, every event has a start time and an end time,
and the partial order respects these timestamps.

Our operational models for real-time concurrent systems are higher-di\-men\-sion\-al timed automata (HDTAs),
a simultaneous extension
of timed automata 
and higher-dimensional automata
(which in turn generalize (safe) Petri nets \cite{DBLP:journals/tcs/Glabbeek06}),
see Figure~\ref{fig:venn-diagram} for a taxonomy.
These have been introduced in \cite{%
  DBLP:conf/adhs/Fahrenberg18,
  DBLP:journals/lites/Fahrenberg22},
where it is shown in particular that
reachability for HDTAs may be decided using zones like for timed automata.
We adapt the definition of HDTAs to better conform with the event-based setting of \cite{Hdalang}
and introduce languages of HDTAs as sets of pomsets with interval timestamps.

\begin{figure}[tp]
  \centering
  \begin{tikzpicture}[scale=0.8]
    \fill[lightSteelblue, opacity=0.3] (0,0) ellipse (7cm and 2.8cm); 
    \begin{scope}[blend group=soft light]
      \draw[darkSteelblue, thick] (0,0) ellipse (7cm and 2.8cm); 
      \fill[red!30!white, draw=red!80!black, thick]   (-2.2,0) ellipse (3.5cm and 2cm); 
      \fill[green!30!white, draw = green!60!black, thick] (2.4,0) ellipse (3.5cm and 2cm); 
      \fill[blue!40!white, draw=black, thick]  (-2.2, -0.5) ellipse (3cm and 1.15cm); 
    \end{scope}
    \node[text width=5.5cm] at (.2,-2.3)    {Higher-dimensional timed automata} ;
    \node[text width=3cm, align=center] at (-2.3,1.2)    {Higher-dimensional automata } ;
    \node[] at (3.2,0)   {Timed automata};
    \node[] at  (-2.4, -0.5)   {Safe Petri nets};
    \node[] at (0.2,0) {Automata};
  \end{tikzpicture}
  \caption{Taxonomy of some models for time and concurrency}
  \label{fig:venn-diagram}
\end{figure}

This work is organised as follows.
We begin in Section \ref{se:2.2} by recalling timed automata
and expressing their language semantics using two perspectives:
delay words and timed words.
Both of these approaches will be useful in the rest of the paper.
In Section \ref{se:hda}, we revisit higher-dimensional automata and their languages,
again focusing on two complementary perspectives,
of step sequences and pomsets with interfaces.
In Section \ref{se:hdta} we recall the definition of higher-dimensional timed automata,
and Section \ref{se:ex} is devoted to an extensive example.

The following sections present our proper contributions.
In Section \ref{se:timed-lang-words},
we present two formalisms for languages for real-time concurrency:
interval delay words and timed pomsets with interfaces,
generalizing the dual view on languages of timed automata and of HDAs,
and show their equivalence.
Then in Section \ref{se:languages},
we define languages of higher-dimensional timed automata
using the formalisms previously introduced.
In the final two sections we prove two main results:
language inclusion is undecidable for HDTAs,
but untimed language inclusion is decidable.

This article is a revised and extended version of \cite{DBLP:conf/apn/AmraneBCF24} which has been presented at the 45th International Conference on Theory and Application of Petri Nets and Concurrency in June 2024 in Geneva, Switzerland.
Compared to \cite{DBLP:conf/apn/AmraneBCF24}, motivation and examples have been further expanded, proofs of all results have been provided, and definitions have been updated to take advantage of further improvements presented in \cite{conf/ramics/AmraneBCFZ24}.  We have also added a comprehensive example in Section \ref{se:ex}.

\section{Timed automata and their languages}
\label{se:2.2}

Timed automata extend finite automata with clock variables, guards and
invariants which permit the modeling of real-time properties.

For a set $C$ (of \emph{clocks}), $\Phi(C)$ denotes the set
of \emph{clock constraints} defined as
\begin{equation*}
  \Phi(C)\ni \phi_1, \phi_2::= c\bowtie k\mid \phi_1\land
  \phi_2 \qquad
  (c\in C, k\in \Nat, \mathord{\bowtie}\in\{ \mathord<, \mathord\le,
  \mathord\ge, \mathord>\})\,.
\end{equation*}
Hence a clock constraint is a conjunction of comparisons of clocks to
integers.

A \emph{clock valuation} is a mapping $v: C\to \Realnn$, where
$\Realnn$ denotes the set of non-negative real numbers.  The
\emph{initial} clock valuation is $v^0: C\to \Realnn$ given by
$v^0(c)= 0$ for all $c\in C$.  For $v: C\to \Realnn$, $d\in \Realnn$,
and $R\subseteq C$, the clock valuations $v+d$ and $v[R\gets 0]$
are defined by
\begin{equation*}
  (v+d)(c)= v(c)+d \qquad v[R\gets 0](c)=
  \begin{cases}
    0 &\text{if } c\in R, \\
    v(c) &\text{if } c\notin R.
  \end{cases}
\end{equation*}
Intuitively, in the first case all clocks evolve in lockstep, 
which is equivalent to letting $d$ time units pass.
The second is a discrete action and resets some clocks to zero
while leaving the other unchanged.

For $v\in \Realnn^C$ and $\phi\in \Phi(C)$, we write $v\models \phi$
if the valuation $v$ satisfies the clock constraints $\phi$.

\begin{definition}
A \emph{timed automaton} is a structure $(\Sigma, C, Q, \bot, \top, I, E)$, where
$\Sigma$ is a finite set (alphabet),
$C$ is a finite set of clocks,
$Q$ is a finite set of locations
with initial and accepting locations $\bot, \top\subseteq Q$,
$I: Q\to \Phi(C)$ assigns invariants to states, and
$E\subseteq Q\times \Phi(C)\times \Sigma\times 2^C\times Q$ is a set
of guarded transitions.
\end{definition}

We will often take the liberty to omit $\Sigma$ and $C$ from the signature of timed automata.

\begin{remark}
Timed automata have a long and successful history in the modeling and
verification of real-time computing systems.  Several tools exist
such as
Uppaal\footnote{\url{https://uppaal.org/}}
\cite{DBLP:journals/sttt/LarsenPY97, DBLP:conf/sfm/BehrmannDL04, BehrmannDLHPYH06}, 
TChecker\footnote{\url{https://github.com/ticktac-project/tchecker}}
\cite{TCHECKER}, 
IMITATOR\footnote{\url{https://www.imitator.fr/}}
\cite{Andre09,
  Andre21}, 
Romeo\footnote{\url{http://romeo.rts-software.org/}}
\cite{GardeyLMR05, LimeRST09},
and PAT\footnote{\url{https://pat.comp.nus.edu.sg/}}
\cite{SunLDP09},
some of which are routinely applied in industry.
The interested reader is
referred to~\cite{DBLP:reference/mc/BouyerFLMO018,
  DBLP:series/natosec/LarsenFL17, book/AcetoILS07}.
\end{remark}

\begin{definition}
The \emph{operational semantics} of a timed automaton $A=(Q, \bot, \top,$ $I, E)$
is the
transition system
$\sem A=(S, S^\bot, S^\top, {\leadsto})$, with
${\leadsto}\subseteq S\times (\Sigma\cup \Realnn) \times S$,
given as follows:
\begin{align*}
  S &=\{(q, v)\in Q\times \Realnn^C\mid v\models I(q)\} \\
  S^\bot &= \{(q, v^0)\mid q\in \bot\} \qquad S^\top= S\cap \top\times \Realnn^C \\
  {\leadsto} &= \{((q, v), d, (q, v+d))\mid \forall 0\le d'\le d: v+d'\models I(q)\} \\
  &\; {}\cup\{((q, v), a, (q', v'))\mid \exists(q, \phi, a, R, q')\in
  E: v\models \phi, v'= v[R\gets 0]\models I(q')\}
\end{align*}
\end{definition}

Tuples in $\leadsto$ of the first type are called \emph{delay moves} and denoted $\delayMove{d}$,
tuples of the second kind are called \emph{action moves}
and denoted $\actionMove{a}$.

The definition of $\leadsto$ ensures that actions are immediate:
for any transition $(q, \phi, a, R, q')\in E$, $A$ passes from $(q, v)$
to $(q', v')$ without any delay.  Time progresses only during delays
$(q, v)\leadsto (q, v+d)$ in locations.
A path $\pi$ in $\sem A$ is a finite non-empty sequence of consecutive moves of $\leadsto$:
\begin{equation}
  \label{eq:tpath}
  \pi = (q_0, v_0) \leadsto (q_1, v_1) \leadsto \dots \leadsto (q_{n-1}, v_{n-1}) \leadsto (q_n, v_n)
\end{equation}
It is accepting if $(q_0,v_0)\in S^\bot$ and $(q_n,v_n) \in S^\top$.
(We do not need to consider empty paths,
because we always have delay transitions $\delayMove{0}$ available, see Remark~\ref{rem:empty} below.)

The \emph{language semantics} of timed automata is defined in terms of timed words.
There are two versions of these in the literature, and we will use them both.
The first, which we call \emph{delay words} here, is defined as follows.
The label of a delay move $\delta = (q, v)\delayMove{d} (q, v+d)$ is $\ev(\delta)=d$.
That of an action move $\sigma = (l, v)\actionMove{a} (l', v')$  is $\ev(\sigma)=a$.
Finally, the label $\ev(\pi)$ of $\pi$ as in~\eqref{eq:tpath} is
the concatenation
\begin{equation*}
  \ev(\pi) = \ev((q_0,v_0) \leadsto (q_1,v_1)) \dotsm \ev((q_{n-1},v_{n-1}) \leadsto (q_n,v_n)).
\end{equation*}

Delay words are elements of the quotient of the free semigroup on $\Sigma\cup \Realnn$
by the equivalence relation $\sim$
which allows to add up subsequent delays and to remove zero delays.
Formally, $\sim$ is the congruence on $(\Sigma\cup \Realnn)^+$
generated by the relations
\begin{equation}
  \label{eq:sim-dwords}
  d_1d_2\sim d_1+d_2 \quad(d_1,d_2 \in \Realnn),
  \qquad
  0a\sim a, \qquad a0\sim a \quad (a \in \Sigma).
\end{equation}

\begin{remark}
  \label{rem:empty}
  In the quotient $(\Sigma\cup \Realnn)^+{}\!_{/{\sim}}$,
  the element $0$ (formally, $[0]_{\sim}$) acts as a unit for concatenation.
  So even though we do not allow empty paths in the operational semantics of timed automata,
  and we consider the semigroup $(\Sigma\cup \Realnn)^+$ without unit,
  we recover a unit in the quotient.
\end{remark}

\begin{definition}
The \emph{delay language} $\Lang(A)$ of the timed automaton $A$
is the set of $\sim$-equivalence classes of
delay words labeling accepting paths in $\sem A$:
\begin{equation*}
  \Lang(A) = \{[\ev(\pi)]_\sim\mid \pi \text{ accepting path in } \sem{A}\} \subseteq (\Sigma\cup \Realnn)^+{}\!_{/{\sim}}
\end{equation*}
\end{definition}

The following lemma states the well-known fact that any delay word
can be turned into an equivalent one where $d_i\in \Realnn$ and symbols $a_i\in \Sigma$ are alternating.
Its proof is a simple application of the rewriting rules given in \eqref{eq:sim-dwords}.

\begin{lemma}
  Any equivalence class of delay words has a unique representative of the form
  $d_0 a_1 d_1 a_2\dotsc a_n d_n$. \qed
\end{lemma}

The second language semantics of timed automata is given using words with timestamps,
which we will call \emph{timed words} here.
In the literature \cite{book/AcetoILS07, DBLP:reference/mc/BouyerFLMO018, DBLP:series/natosec/LarsenFL17}
these are usually defined as elements of the free monoid on $\Sigma\times \Realnn$
in which the real components form an increasing sequence.
Formally, this is the subset $\TW'\subseteq (\Sigma\times \Realnn)^*$ given as
\begin{equation*}
  \TW'=\{w=(a_1, t_1)\dotsc (a_n, t_n)\mid n\ge 0, \forall i=1,\dotsc, n-1: t_i\le t_{i+1}\}.
\end{equation*}
Here $t_i$ represents the time at which $a_i$ is executed.

The notions of delay words and timed words do not match completely,
as delay words allow for a delay \emph{at the end} of a run
while timed words terminate with the last timestamped symbol.
In order for the language semantics to better match the operational semantics,
we prefer to allow for these extra delays.
Let $\TW\subseteq (\Sigma\times \Realnn)^*\: \Realnn$
be the subset
\begin{equation*}
  \TW=\{w=(a_1, t_1)\dotsc (a_n, t_n)\, t_{n+1}\mid n\ge 0, \forall i=1,\dotsc, n: t_i\le t_{i+1}\}.
\end{equation*}

Concatenation of timed words in $\TW$ is defined by shifting timestamps.
For $w=(a_1, t_1)\dotsc (a_n, t_n)\, t_{n+1}, w'=(a_1', t_1')\dotsc (a_n', t_n')\, t_{n+1}'\in \TW$:
\begin{equation*}
  w w' = (a_1, t_1)\dotsc (a_n, t_n) (a_1', t_{n+1}+t_1')\dotsc (a_n', t_{n+1}+t_n') (t_{n+1}+t_{n+1}').
\end{equation*}

\begin{lemma}
  \label{le:dw_to_tw}
  The monoids $(\Sigma\cup \Realnn)^+{}\!_{/{\sim}}$ and $\TW$ are isomorphic
  via the mapping
  \begin{multline*}
    \Phi: d_0 a_1 d_1 a_2\dotsc a_n d_n \mapsto \\
    (a_1, d_0)\, (a_2, d_0+d_1)\dotsc (a_n, d_0+\dotsm+ d_{n-1})\, (d_0+\dotsm+ d_n).
  \end{multline*}
\end{lemma}

\begin{proof}
  It is clear that $\Phi$ respects $\sim$-equivalence classes. In particular, $\Phi(d) = d \in \Realnn$.
  \qed
\end{proof}

\begin{definition}
  The \emph{timed language} of a timed automaton $A$
  is the image of its delay language $\Lang(A)$ under $\Phi$.
\end{definition}

\section{Higher-dimensional automata and their languages}
\label{se:hda}

Higher-dimensional automata (HDAs) extend finite automata
with extra structure which permits to specify independence or concurrency of events.
We focus in this section on the \emph{languages} of HDAs and refer to
\cite{DBLP:journals/fuin/FahrenbergZ24, Hdalang} for more details on HDAs themselves.

\subsection{Higher-dimensional automata}

An HDA is a set $X$ of \emph{cells} which are connected by \emph{face maps}.
Each cell has a list of events which are active,
and face maps permit to pass from a cell to another in which some events have not yet started or are terminated.

We make this precise.
A \emph{conclist} (\emph{concurrency list}) over a finite alphabet $\Sigma$
is a tuple $(U, {\evord}, \lambda)$,
consisting of a finite set $U$ (of events),
a strict total order ${\evord}\subseteq U\times U$ (the event order),\footnote{%
  A strict \emph{partial} order is a relation which is irreflexive and transitive;
  a strict \emph{total} order is a relation which is irreflexive, transitive, and total.
}
and a labeling $\lambda: U\to \Sigma$.
We will often refer to a conclist $(U, {\evord}, \lambda)$ by its set of events $U$ or by the word it induces.
Let $\square=\square(\Sigma)$ denote the set of conclists over $\Sigma$.

\begin{definition}
A \emph{precubical set} on a finite alphabet $\Sigma$,
\begin{equation*}
  X=(X, {\ev}, \{\delta_{A, U}^0, \delta_{A, U}^1\mid U\in \square, A\subseteq U\}),
\end{equation*}
consists of a set of cells $X$
together with a function $\ev: X\to \square$.
For $U\in \square$ we write $X[U]=\{x\in X\mid \ev(x)=U\}$.
Further, for every $U\in \square$ and $A\subseteq U$ there are face maps
$\delta_{A, U}^0, \delta_{A, U}^1: X[U]\to X[U\setminus A]$
which satisfy
\begin{equation}
  \label{eq:precid}
  \delta_{A, U}^\nu \delta_{B, U\setminus A}^\mu = \delta_{B, U}^\mu \delta_{A, U\setminus B}^\nu
\end{equation}
for $A\cap B=\emptyset$ and $\nu, \mu\in\{0, 1\}$.
\end{definition}

We will omit the extra subscript ``$U$'' in the face maps from here on.
The \emph{upper} face maps $\delta_B^1$ transform a cell $x$ into one in which the events in $B$ have terminated;
the \emph{lower} face maps $\delta_A^0$ transform $x$ into a cell where the events in $A$ have not yet started.
The \emph{precubical identity} \eqref{eq:precid}
expresses the fact that these transformations commute for disjoint sets of events.

\begin{definition}
A \emph{higher-dimensional automaton (HDA)}
$A=(\Sigma, X, \bot, \top)$ consists of
a finite alphabet $\Sigma$,
a finite precubical set $X$ on $\Sigma$,
and subsets $\bot, \top\subseteq X$ of initial and accepting cells.
\end{definition}

The \emph{dimension} of an HDA $A$ is $\dim(A)=\max\{|\ev(x)|\mid x\in X\}$.

\begin{example}
  \label{ex:1dimhda}	
  A standard finite automaton is the same as a \emph{one}-dimensional HDA $X$
  with the property that for all $x \in \bot_X \cup \top_X$, $\ev(x) = \emptyset$:
  cells in $X[\emptyset]$ are states,
  cells in $X[\{a\}]$ for $a\in \Sigma$ are $a$-labelled transitions,
  and face maps $\delta_{\{a\}}^0$ and $\delta_{\{a\}}^1$
  attach source and target states to transitions.
\end{example}

\begin{figure}[tbp]
	\centering
	\begin{tikzpicture}[>=stealth', x=1cm, y=.6cm]
		\begin{scope}
			\path[fill=black!15] (0,0) -- (4,0) -- (4,4) -- (0,4);
			\node[state, initial left] (00) at (0,0) {$q_0$};
			\node[state] (10) at (4,0) {$q_1$};
			\node[state] (01) at (0,4) {$q_2$};
			\node[state, accepting] (11) at (4,4) {$q_3$};
			\path (00) edge node[below] {$e_1 \qquad a$} (10);
			\path (00) edge node[right] {$\vphantom{e_1}b$} node[left] {$\vphantom{b}e_2$} (01);
			\path (10) edge node[left] {$\vphantom{e_1}b$} node[right] {$\vphantom{b}e_3$} (11);
			\path (01) edge node[above] {$e_4 \qquad a$} (11);
			\node at (2,2) {$u$};
		\end{scope}
	\end{tikzpicture}
	\caption{HDA of Example~\ref{ex:hda-ex1}.
		The grayed area indicates that $a$ and $b$ may occur concurrently,
		\ie~there is a two-dimensional cell, $u$ in this instance}
	\label{fi:hda-ex7}
\end{figure}

\begin{example}
  \label{ex:hda-ex1}
The HDA of Figure~\ref{fi:hda-ex7} is two-dimensional and consists of nine cells:
the corner cells $X_0 = \{q_0,q_1,q_2,q_3\}$ in which no event is active
(for all $z \in X_0$, $\ev(z) = \emptyset$),
the transition cells $X_1 = \{e_1,e_2,e_3,e_4\}$ in which one event is active
($\ev(e_1) = \ev(e_4) = a$ and  $\ev(e_2) = \ev(e_3) = b$),
and the square cell $u$ where $a$ and $b$ are active: $\ev(u) = \loset{a \\ b}$.
When we have two concurrent events $a$ and $b$ with $a\evord b$, we will draw $a$ horizontally and $b$ vertically.
Concerning face maps, we have for example $\delta^1_{a b}(u) = q_3$  and $\delta^0_{a b}(u) = q_0$.
\end{example}

\subsection{Pomsets with interfaces}

The \emph{language semantics} of HDAs is defined in terms of ipomsets
which we introduce now;
see again \cite{DBLP:journals/fuin/FahrenbergZ24, Hdalang} for more details.
First, a \emph{partially ordered multiset}, or \emph{pomset}, over a finite alphabet $\Sigma$
is a structure $P=(P, {<}, {\evord}, \lambda)$ consisting of a finite set $P$,
two strict partial orders ${<}, {\evord}\subseteq P\times P$
(precedence and event order), and a labeling
$\lambda: P\to \Sigma$, such that for each $x\ne y$ in $P$, at least
one of $x< y$, $y< x$, $x\evord y$, or $y\evord x$ holds.

A \emph{pomset with interfaces}, or \emph{ipomset},
over a finite alphabet $\Sigma$ is a tuple $(P, {<}, {\evord}, S, T, \lambda)$,
consisting of a pomset $(P, {<}, {\evord}, \lambda)$ and subsets
$S, T\subseteq P$ of \emph{source} and \emph{target}
\emph{interfaces} such that the elements of $S$ are
\mbox{$<$-minimal} and those of $T$ are \mbox{$<$-maximal}.
We will often just write $P$ and refer to its components using a subscript.
Note that different events of ipomsets may carry the same label;
in particular we do \emph{not} exclude autoconcurrency.
Source and target events are marked by ``$\ibullet$'' at the left or right side,
and if the event order is not shown, we assume that it goes downwards.
Pomsets are ipomsets with empty interfaces,
and in any ipomset $P$, the substructures induced by $S$ and $T$ are conclists.

\begin{remark}
Ipomsets extend pomsets to allow gluing composition along interfaces, as discussed below. 
The event order $\evord$ is essential for determining \emph{unique} isomorphisms between interfaces of ipomsets,
so that gluings of ipomsets are well-defined.
See \cite{DBLP:journals/fuin/FahrenbergZ24, DBLP:journals/iandc/FahrenbergJSZ22, Hdalang} for details.
\end{remark}

An ipomset $P$ is a~\emph{word} (with interfaces) if $<_P$ is total.
Conversely, $P$ is \emph{discrete} if $<_P$ is empty (hence $\evord_P$ is total).
We often use $U$ for discrete ipomsets and denote them $\ilo{S}{U}{T}=(U, \emptyset, {\evord}, S, T, \lambda)$,
leaving $\evord$ and $\lambda$ implicit.

An ipomset $P$ is \emph{interval}
if its precedence order $<_P$ is an interval order~\cite{book/Fishburn85},
that is, if it admits an interval representation
given by functions $\sigma^-, \sigma^+: P\to \Real$ such that
$\sigma^-(x)\le \sigma^+(x)$ for all $x\in P$ and
$x<_P y$ iff $\sigma^+(x)<\sigma^-(y)$ for all $x, y\in P$.
We will only use interval ipomsets here and hence omit the qualification ``interval''.
The set of (interval) ipomsets over $\Sigma$ is denoted $\iiPoms$.
Figure \ref{fi:iposets1} shows some examples.

\begin{figure}[tbp]
  \centering
  \begin{tikzpicture}[x=.95cm, scale=1, every node/.style={transform shape}]
    \def\possh{3}
    \begin{scope}[shift={(9.6,0)}]
      \def\hw{0.3}
      \filldraw[fill=green!50!white,-](0,1.2)--(1.2,1.2)--(1.2,1.2+\hw)--(0,1.2+\hw);
      \filldraw[fill=pink!50!white,-](0.3,0.7)--(1.9,0.7)--(1.9,0.7+\hw)--(0.3,0.7+\hw)--(0.3,0.7);
      \filldraw[fill=blue!20!white,-](0.5,0.2)--(1.7,0.2)--(1.7,0.2+\hw)--(0.5,0.2+\hw)--(0.5,0.2);
      \draw[thick,-](0,0)--(0,1.2) (0,1.2+\hw)--(0,1.7);
      \draw[thick,-](2.2,0)--(2.2,1.7);
      \node at (0.6,1.2+\hw*0.5) {$a$};
      \node at (1.1,0.7+\hw*0.5) {$b$};
      \node at (1.1,0.2+\hw*0.5) {$c$};
    \end{scope}
    \begin{scope}[shift={(9.6,\possh)}]
      \node (a) at (0.4,0.7) {$\ibullet a$};
      \node (c) at (0.4,-0.7) {$c$};
      \node (b) at (1.8,0) {$b$};
      \path[densely dashed, gray] (a) edge (b) (b) edge (c) (a) edge (c);
    \end{scope}
    \begin{scope}[shift={(9.6,-.4*\possh)}]
      \node at (1.1,0) {$\bigloset{\ibullet a \\ \pibullet b \\[-.3ex] \pibullet c}$};
    \end{scope}
    \begin{scope}[shift={(6.4,0)}]
      \def\hw{0.3}
      \filldraw[fill=green!50!white,-](0,1.2)--(1.2,1.2)--(1.2,1.2+\hw)--(0,1.2+\hw);
      \filldraw[fill=pink!50!white,-](1.3,0.7)--(1.9,0.7)--(1.9,0.7+\hw)--(1.3,0.7+\hw)--(1.3,0.7);
      \filldraw[fill=blue!20!white,-](0.5,0.2)--(1.7,0.2)--(1.7,0.2+\hw)--(0.5,0.2+\hw)--(0.5,0.2);
      \draw[thick,-](0,0)--(0,1.2) (0,1.2+\hw)--(0,1.7);
      \draw[thick,-](2.2,0)--(2.2,1.7);
      \node at (0.6,1.2+\hw*0.5) {$a$};
      \node at (1.6,0.7+\hw*0.5) {$b$};
      \node at (1.1,0.2+\hw*0.5) {$c$};
    \end{scope}
    \begin{scope}[shift={(6.4,\possh)}]
      \node (a) at (0.4,0.7) {$\ibullet a$};
      \node (c) at (0.4,-0.7) {$c$};
      \node (b) at (1.8,0) {$b$};
      \path (a) edge (b);
      \path[densely dashed, gray]  (b) edge (c) (a) edge (c);
    \end{scope}
    \begin{scope}[shift={(6.4,-.4*\possh)}]
      \node at (1.1,0) {$\bigloset{\ibullet a \to b \\ c}$};
    \end{scope}
    \begin{scope}[shift={(3.2,0)}]
      \def\hw{0.3}
      \filldraw[fill=green!50!white,-](0,1.2)--(1.2,1.2)--(1.2,1.2+\hw)--(0,1.2+\hw);
      \filldraw[fill=pink!50!white,-](1.3,0.7)--(1.9,0.7)--(1.9,0.7+\hw)--(1.3,0.7+\hw)--(1.3,0.7);
      \filldraw[fill=blue!20!white,-](0.5,0.2)--(1.1,0.2)--(1.1,0.2+\hw)--(0.5,0.2+\hw)--(0.5,0.2);
      \draw[thick,-](0,0)--(0,1.2) (0,1.2+\hw)--(0,1.7);
      \draw[thick,-](2.2,0)--(2.2,1.7);
      \node at (0.6,1.2+\hw*0.5) {$a$};
      \node at (1.6,0.7+\hw*0.5) {$b$};
      \node at (0.8,0.2+\hw*0.5) {$c$};
    \end{scope}
    \begin{scope}[shift={(3.2,\possh)}]
      \node (a) at (0.4,0.7) {$\ibullet a$};
      \node (c) at (0.4,-0.7) {$c$};
      \node (b) at (1.8,0) {$b$};
      \path (a) edge (b) (c) edge (b);
      \path[densely dashed, gray]  (a) edge (c);
    \end{scope}
    \begin{scope}[shift={(3.2,-.4*\possh)}]
      \node at (1.1,0) {$\left[\!\!\!\!\vcenter{\hbox{
              \begin{tikzpicture}[x=5.5ex, y=1.5ex]
                \node (a) at (0,0) {$\ibullet a$};
                \node (c) at (0,-2) {$\pibullet c$};
                \node (b) at (1,-1) {$b$};
                \path (a) edge (b) (c) edge (b);
              \end{tikzpicture}}}
        \!\!\right]$};
    \end{scope}
    \begin{scope}[shift={(0.0,0)}]
      \def\hw{0.3}
      \filldraw[fill=green!50!white,-](0,1.2)--(0.4,1.2)--(0.4,1.2+\hw)--(0,1.2+\hw);
      \filldraw[fill=pink!50!white,-](1.3,0.7)--(1.9,0.7)--(1.9,0.7+\hw)--(1.3,0.7+\hw)--(1.3,0.7);
      \filldraw[fill=blue!20!white,-](0.5,0.2)--(1.1,0.2)--(1.1,0.2+\hw)--(0.5,0.2+\hw)--(0.5,0.2);
      \draw[thick,-](0,0)--(0,1.2) (0,1.2+\hw)--(0,1.7);
      \draw[thick,-](2.2,0)--(2.2,1.7);
      \node at (0.2,1.2+\hw*0.5) {$a$};
      \node at (1.6,0.7+\hw*0.5) {$b$};
      \node at (0.8,0.2+\hw*0.5) {$c$};
    \end{scope}
    \begin{scope}[shift={(0.0,\possh)}]
      \node (a) at (0.4,0.7) {$\ibullet a$};
      \node (c) at (0.4,-0.7) {$c$};
      \node (b) at (1.8,0) {$b$};
      \path (a) edge (b) (c) edge (b) (a) edge (c);
    \end{scope}
    \begin{scope}[shift={(0,-.28*\possh)}]
      \node at (1.1,0) {$\ibullet a \to c \to b$};
    \end{scope}
    \begin{scope}[shift={(0,-.4125*\possh)}]
      \node[rotate=90] at (1.2,0) {$=$};
    \end{scope}
    \begin{scope}[shift={(0,-.52*\possh)}]
      \node at (1.1,0) {$\ibullet a c b$};
    \end{scope}
  \end{tikzpicture}
  \caption{Examples of ipomsets (top),
    corresponding interval representations (middle),
    and their compact representations (bottom).
    Full arrows indicate precedence order;
    dashed arrows indicate event order;
    bullets indicate interfaces.
    In the compact representation the event order
    is going from top to bottom.
  }
  \label{fi:iposets1}
\end{figure}

For ipomsets $P$ and $Q$ we say that $Q$ \emph{subsumes} $P$ and write $P\subsu Q$
if there is a bijection $f: P\to Q$ for which
\begin{enumerate}[label=(\arabic*)]
\item $f(S_P)=S_Q$, $f(T_P)=T_Q$, and $\lambda_Q\circ f=\lambda_P$;
\item \label{en:subsu.prec}
  $f(x)<_Q f(y)$ implies $x<_P y$;
\item \label{en:subsu.evord}
  $x\not<_P y$, $y\not<_P x$ and $x\evord_P y$ imply $f(x)\evord_Q f(y)$.
\end{enumerate}
That is, $f$ respects interfaces and labels, reflects precedence, and preserves essential event order.
(Event order is essential for concurrent events,
but by transitivity, it also appears between non-concurrent events.
Subsumptions ignore such non-essential event order.)
In Figure \ref{fi:iposets1}, there is a sequence of subsumptions from left to right.

\emph{Isomorphisms} of ipomsets are invertible subsumptions,
\ie~bijections $f$ for which items \ref{en:subsu.prec} and \ref{en:subsu.evord} above
are strengthened to
\begin{enumerate}[label=(\arabic*$'$)]
  \setcounter{enumi}1
\item $f(x)<_Q f(y)$ iff $x<_P y$;
\item $x\not<_P y$ and $y\not<_P x$ imply that $x\evord_P y$ iff $f(x)\evord_Q f(y)$.
\end{enumerate}
Due to the requirement that all elements are ordered by $<$ or $\evord$,
there is at most one isomorphism between any two ipomsets.
Hence we may switch freely between ipomsets and their isomorphism classes.
We will also call these equivalence classes ipomsets and often conflate equality and isomorphism.

Serial composition of pomsets \cite{DBLP:journals/fuin/Grabowski81} generalises to a \emph{gluing} composition for ipomsets
which continues interface events across compositions and is defined as follows.
Let $P$ and $Q$ be ipomsets such that $T_P=S_Q$,
$x\evord_P y$ iff $x\evord_Q y$ for all $x, y\in T_P=S_Q$, and
the restrictions $\lambda_P\rest{T_P}=\lambda_Q\rest{S_Q}$,
then
\begin{equation*}
  P*Q=(P\cup Q, {<}, {\evord}, S_P, T_Q, \lambda),
\end{equation*}
where
\begin{itemize}
\item $x<y$ if $x<_P y$, $x<_Q y$, or $x\in P\setminus T_P$ and $y\in Q\setminus S_Q$;
\item $\evord$ is the transitive closure of ${\evord_P}\cup {\evord_Q}$;
\item $\lambda(x)=\lambda_P(x)$ if $x\in P$ and $\lambda(x)=\lambda_Q(x)$ if $x\in Q$.
\end{itemize}
Gluing is, thus, only defined if the targets of $P$ are equal to the sources of $Q$ 
\emph{as conclists}.

The \emph{parallel} composition of ipomsets $P$ and $Q$ is
\begin{equation*}
  P\para Q=(P\sqcup Q, {<}, {\evord}, S_P\cup S_Q, T_P\cup T_Q, \lambda),
\end{equation*}
where $P\sqcup Q$ denotes disjoint union and
\begin{itemize}
	\item $x<y$ if $x<_P y$ or $x<_Q y$;
	\item $x\evord y$ if $x\evord_P y$, $x\evord_Q y$, or $x\in P$ and $y\in Q$;
	\item $\lambda(x)=\lambda_P(x)$ if $x\in P$ and $\lambda(x)=\lambda_Q(x)$ if $x\in Q$.
\end{itemize}
Note that parallel composition of ipomsets is generally not commutative,
and ipomsets are not closed under parallel composition due to the formation of the $2+2$ subposet,
see \cite{DBLP:journals/iandc/FahrenbergJSZ22} for details.

Conclists are discrete ipomsets without interfaces.
A \emph{starter} is a discrete ipomset $U$ with $T_U=U$,
a \emph{terminator} one with $S_U=U$.
The intuition is that a starter does nothing but start the events in $A=U-S_U$,
and a terminator terminates the events in $B=U-T_U$.
These will be so important later that we introduce special notation,
writing $\starter{U}{A}$ and $\terminator{U}{B}$ for the above.
Discrete ipomsets $U$ with $S_U=T_U=U$ are identities for the gluing composition and written~$\id_U$.
Note that $\id_U=\starter{U}{\emptyset}=\terminator{U}{\emptyset}$.
The empty ipomset is $\id_\emptyset$.

\subsection{Step sequences}
\label{se:stepseq}

Any ipomset can be decomposed as a gluing of starters and terminators~\cite{%
  DBLP:journals/iandc/FahrenbergJSZ22,
  DBLP:journals/fuin/JanickiK19, conf/ramics/AmraneBCFZ24}.
Such a presentation is called a \emph{step decomposition}.
We develop an algebra of these.

Let $\St, \Te, \Id\subseteq \iiPoms$ denote the sets of starters, terminators, and identities over $\Sigma$,
then $\Id=\St\cap \Te$.
Let $\St_+=\St\setminus \Id$ and $\Te_+=\Te\setminus \Id$.
The following notion was introduced in \cite{DBLP:conf/ictac/AmraneBFZ23}.

\begin{definition}
  \label{de:coh}
  A word $P_1\dots P_n\in (\St\cup \Te)^+$ is \emph{coherent}
  if the gluing $P_1*\dots* P_n$ is defined.
\end{definition}

Let $\Coh\subseteq (\St\cup \Te)^+$ denote the subset of coherent words and
$\sim$ the congruence on $\Coh$ generated by the relations
\begin{equation}
  \label{eq:sim-stepseq}
  \begin{gathered}
    S_1 S_2\sim S_1*S_2 \quad (S_1, S_2\in \St),
    \qquad T_1 T_2\sim T_1*T_2 \quad (T_1, T_2\in \Te).
  \end{gathered}
\end{equation}
\begin{definition}
  \label{de:sseq}
  A \emph{step sequence} is an element of the quotient\/ $\StepS=\Coh{}_{/{\sim}}$.
\end{definition}

\begin{lemma}
  \label{le:iipoms-sparse}
  Every element of\/ $\StepS$ has a unique representative
  $P_1\dots P_n$ for $n>1$,
  with the property that $(P_i, P_{i+1})\in \St_+\times \Te_+\cup \Te_+\times \St_+$ for all $1\le i\le n-1$.
  Such a representative is called \emph{sparse}.
\end{lemma}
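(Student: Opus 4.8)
The plan is to factor the statement into a part about the boundary identities, which is trivial, and a part about the inner word, which is the uniqueness of sparse step decompositions in disguise. Since $\StepS=\Id.\Coh_{/{\sim}}.\Id$ is a product, an element is a triple $(I_0,[c],I_{n+1})$ with $I_0,I_{n+1}\in\Id$ fixed and $[c]\in\Coh_{/{\sim}}$, and a representative of it is any word $I_0\,c\,I_{n+1}$ with $c\in\Coh$ representing $[c]$; two representatives denote the same element exactly when they agree on $I_0$, on $I_{n+1}$, and have $\sim$-equivalent inner words. Hence it suffices to prove that every $\sim$-class of $\Coh$ has a unique representative $P_1\dots P_n$ in which no letter is an identity and starters and terminators strictly alternate; the boundary identities are then simply copied over and contribute nothing to existence or uniqueness.

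For existence, I would start from an arbitrary representative $c=Q_1\dots Q_m$ and normalise it by two kinds of $\sim$-moves: first delete every letter lying in $\Id$ (using $I\sim\epsilon$), then repeatedly replace a maximal block of consecutive starters by its gluing and a maximal block of consecutive terminators by its gluing (using $S_1S_2\sim S_1*S_2$, $T_1T_2\sim T_1*T_2$ and associativity of $*$). Every such move stays inside $\Coh$ --- $\sim$ is a congruence on $\Coh$, and one also checks directly that deleting an identity or fusing a same-type block leaves the relevant source/target conclists unchanged, so the surrounding gluings stay defined --- and each move strictly shortens the word, so the procedure terminates at some $P_1\dots P_n$. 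The gluing of non-identity starters is again a non-identity starter, and dually for terminators, so after the fusions no $P_i$ is an identity; and since blocks were fused maximally, no two consecutive $P_i$ have the same type. Thus $P_1\dots P_n$ is sparse and $\sim c$. (When all letters cancel one is in the degenerate identity-only case, which gives the empty, respectively length-one identity, normal form and is dealt with by hand.)

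For uniqueness, let $P_1\dots P_n\sim P_1'\dots P_{n'}'$ be two sparse representatives of one class. The key point is that, on every $\sim$-class which contains a letter from $\St_+\cup\Te_+$, assigning to a coherent word $Q_1\dots Q_m$ its gluing $Q_1*\dots*Q_m\in\iiPoms$ is well defined: each of the three generating relations of $\sim$ leaves this gluing unchanged, by the identity laws and the definition of $*$, the sole exception being the relation identifying a lone identity with $\epsilon$, which cannot arise within such a class. Hence $P_1*\dots*P_n=P_1'*\dots*P_{n'}'=:Q$, so both words are presentations of $Q$ as a gluing of strictly alternating non-trivial starters and terminators --- that is, sparse step decompositions of $Q$ --- and by \cite[Prop.~4]{DBLP:conf/apn/FahrenbergZ23} such a decomposition is unique, giving $n=n'$ and $P_i=P_i'$. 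The remaining classes, whose glued ipomset is some identity $\id_V$, are finitely many and easily enumerated, each with the claimed unique sparse form. I expect the main obstacle to be precisely this reduction to \cite[Prop.~4]{DBLP:conf/apn/FahrenbergZ23}: one must check that ``sparse'' in the present sense coincides with ``sparse step decomposition'' there (in particular that interior identities are excluded on both sides), establish the $\sim$-invariance of the glued ipomset, and clear away the degenerate identity-only classes; everything else is routine rewriting.
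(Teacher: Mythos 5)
Your proof is correct and follows essentially the same route as the paper, whose entire proof is the one-line citation ``Directly from \cite[Prop.~4]{DBLP:conf/apn/FahrenbergZ23}'': like the paper, you ultimately reduce uniqueness to the unique sparse step decomposition of the glued ipomset from that proposition. You simply make explicit the bookkeeping the paper leaves implicit (stripping the boundary identities, the normalising rewrites for existence, the $\sim$-invariance of $\Glue$, and the degenerate identity-only classes), all of which checks out.
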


\begin{proof}
  Directly from \cite[Proposition~3.5]{DBLP:journals/fuin/FahrenbergZ24}.
\end{proof}

In a sparse step sequence $w$, proper starters and terminators are thus alternating
(unless $w=I$ for some identity $I\in \Id$).
Given that \eqref{eq:sim-stepseq} permits to compose subsequent starters and terminators,
the existence part of the lemma is trivial.
Uniqueness is more difficult, see \cite{DBLP:journals/fuin/FahrenbergZ24}.

Concatenation of step sequences is inherited from the semigroup $(\St\cup \Te)^+$,
but is now a partial operation:
$P_1\dots P_n \cdot P_1'\dots P_n'$ is defined iff $P_n*P_1'$ is.
Note that concatenations of sparse step sequences may not be sparse.

\begin{remark}
  Step sequences may be seen as morphisms in the quotient by $\sim$ of $\Coh$ considered as a category:
  the free category generated from the multigraph where the vertices are conclists
  and edges from $U$ to $V$ are starters and terminators $P$ with $S_P=U$ and $T_P=V$.
  We refer to \cite{conf/ramics/AmraneBCFZ24} for details.
\end{remark}

For a coherent word $P_1\dots P_n$
we define $\Glue(P_1\dots P_n)=P_1*\dots*P_n$.
It is clear that $w_1\sim w_2$ implies $\Glue(w_1)=\Glue(w_2)$.
That is, $\Glue$ induces a mapping $\Glue: \StepS \to \iiPoms$.

\begin{lemma}[{\cite[Theorem~13]{conf/ramics/AmraneBCFZ24}}]
  \label{le:glue}
  $\Glue: \StepS \to \iiPoms$ is a bijection.
\end{lemma}

In the setting of \cite{conf/ramics/AmraneBCFZ24},
$\Glue$ defines an isomorphism of categories; but we will not need that here.
See below for examples of step sequences and ipomsets.

\subsection{Languages of HDAs}

The language semantics of HDAs is defined using paths,
which are sequences of starts and terminations of events \cite{Hdalang}.
We detail the construction below
and then also give an \emph{operational} semantics to HDAs,
using the \emph{ST-automata} introduced in \cite{conf/ramics/AmraneBCFZ24}.

\emph{Paths} in an HDA $X$ are sequences
$\pi=(x_0, \phi_1, x_1, \dotsc,$ $x_{n-1}, \phi_n, x_n)$
consisting of cells $x_i$ of $X$ and symbols $\phi_i$ which indicate face map types:
for every $i\in\{1,\dotsc, n\}$, $(x_{i-1}, \phi_i, x_i)$ is either
\begin{itemize}
\item $(\delta^0_A(x_i), \arrO{A}, x_i)$ for $A\subseteq \ev(x_i)$ (an \emph{upstep}) or
\item $(x_{i-1}, \arrI{A}, \delta^1_A(x_{i-1}))$ for $A\subseteq \ev(x_{i-1})$ (a \emph{downstep}).
\end{itemize}
Downsteps terminate events, following upper face maps,
whereas upsteps start events by following inverses of lower face maps.
Both types of steps may be empty, and ${\arrO{\emptyset}}={\arrI{\emptyset}}$.

The \emph{source} and \emph{target} of $\pi$ as above are $\src(\pi)=x_0$ and $\tgt(\pi)=x_n$.
A~path $\pi$ is \emph{accepting} if $\src(\pi)\in \bot_X$ and $\tgt(\pi)\in \top_X$.
Paths $\pi$ and $\psi$ may be concatenated
if $\tgt(\pi)=\src(\psi)$.
Their concatenation is written $\pi \psi$.

The observable content or \emph{event ipomset} $\ev(\pi)$
of a path $\pi$ is defined recursively as follows:
\begin{itemize}
\item if $\pi=(x)$, then $\ev(\pi)=\id_{\ev(x)}$;
\item if $\pi=(y\arrO{A} x)$, then $\ev(\pi)=\starter{\ev(x)}{A}$;
\item if $\pi=(x\arrI{A} y)$, then $\ev(\pi)=\terminator{\ev(x)}{A}$;
\item if $\pi=\pi_1 \pi_2$ is a concatenation, then
  $\ev(\pi)=\ev(\pi_1)*\ev(\pi_2)$.
\end{itemize}
Note that upsteps in $\pi$ correspond to starters in $\ev(\pi)$ and downsteps correspond to terminators.

\begin{example}
  \label{ex:hda-ex7}
  The HDA of Figure~\ref{fi:hda-ex7} (page \pageref{fi:hda-ex7}) admits several accepting paths, for example
  \begin{alignat*}{2}
    \pi_1 &= q_0\arrO{ab} u\arrI{ab} q_3, &\qquad
    \pi_2 &= q_0\arrO{a} e_1\arrO{b} u\arrI{b} e_4 \arrI{a} q_3, \\
    \pi_3 &= q_0\arrO{a} e_1\arrI{a} q_1 \arrO{b} e_3 \arrI{b} q_3, &\qquad
    \pi_4 &=  q_0\arrO{b} e_2\arrI{b} q_2 \arrO{a} e_4 \arrI{b} q_3,
  \end{alignat*}
  where
  \begin{alignat*}{2}
    \ev(\pi_1) &= \loset{a \ibullet \\ b \ibullet} * \loset{\ibullet a \\ \ibullet b} = \loset{a \\ b}, &\qquad
    \ev(\pi_2) &= a \ibullet * \loset{\ibullet a \ibullet \\ b \ibullet}*\loset{\ibullet a \ibullet \\ \ibullet b}* \ibullet a = \loset{a \\ b}, \\
    \ev(\pi_3) &= a \ibullet * \ibullet a * b \ibullet * \ibullet b = ab, &\qquad\qquad
    \ev(\pi_4) &= b \ibullet * \ibullet b * a \ibullet * \ibullet a = ba.
  \end{alignat*}
  Its language is $\{\loset{a \\ b}\} \down=\{\loset{a \\ b}, ab, ba\}$.
  Observe that $\pi_1$ and $\pi_2$ induce the coherent words $w_1 = \loset{a \ibullet \\ b \ibullet} \loset{\ibullet a \\ \ibullet b}$ and $w_2 = a \ibullet  \loset{\ibullet a \ibullet \\ b \ibullet}\loset{\ibullet a \\ \ibullet b \ibullet} \ibullet b$ such that $w_1 \sim w_2$ with  $w_1$  the corresponding sparse step sequence and $\Glue(w_1) = \loset{a \\ b}$.
\end{example}

\begin{definition}
The \emph{language} of an HDA $X$ is
\begin{equation*}
  \Lang(X) = \{\ev(\pi)\mid \pi \text{ accepting path in } X\}.
\end{equation*}
\end{definition}

\subsection{Operational semantics of HDAs}

\begin{definition}
  \label{de:staut}
  An \emph{ST-automaton} is a structure $A=(Q, E, I, F, \stc)$
  consisting of sets $Q$, $E\subseteq Q\times (\St\cup \Te)\times Q$, $I, F\subseteq Q$,
  and a function $\stc: Q\to \square$ such that
  for all $(q, \ilo{S}{U}{T}, r)\in E$, $\stc(q)=S$ and $\stc(r)=T$.
\end{definition}

This is thus a plain automaton (finite or infinite) over the alphabet $\St\cup \Te$
with an additional labeling of states with conclists that is consistent with the labeling of edges.
(Note that the alphabet $\St\cup \Te$ itself is infinite.)

\begin{definition}
  The \emph{operational semantics} of an HDA $(\Sigma, X, \bot, \top)$
  is the ST-automaton $\sem{X}=(X, E, \bot, \top, \ev)$ with
  \begin{equation*}
    E = \{(\delta_A^0(q), \starter{\ev(q)}{A}, q)\mid A\subseteq \ev(q)\}
    \cup \{(q, \terminator{\ev(q)}{A}, \delta_A^1(q))\mid A\subseteq \ev(q)\}.
  \end{equation*}
\end{definition}

That is, the transitions of $\sem{X}$ precisely mimic the starting and terminating of events in~$X$;
note that lower faces in $X$ are inverted to get the starting transitions.
Transitions of $\sem{X}$ are labeled with \emph{proper} starters and terminators,
hence its alphabet is a subset of $\St_+\cup \Te_+ = (\St\cup \Te) \setminus \Id$.

Paths in ST-automata are defined as usual for automata,
and the label of $\pi=(q_0, e_1, q_1,\dots, e_n, q_n)$ is
$[\id_{\stc(q_0)} P_1\, \id_{\stc(q_1)}\dots P_n\, \id_{\stc(q_n)}]_\sim$,
the equivalence class under $\sim$ \eqref{eq:sim-stepseq}.
Also languages of ST-automata are defined as usual,
and unrolling the definitions it is now clear that the languages of $X$ and $\sem{X}$ correspond under Lemma \ref{le:glue}:

\begin{proposition}[{\cite[Theorem~31]{conf/ramics/AmraneBCFZ24}}]
  For any HDA $X$, $\Lang(X)=\Glue(\Lang(\sem{X}))$.
\end{proposition}

\subsection{Tensor products of HDAs}

Parallel composition of HDAs is provided by tensor products which we describe now.
Let $X_1$ and $X_2$ be HDAs, then their tensor product is $X_1\otimes X_2=(\Sigma, X, \bot, \top)$ defined as follows.
First, $\Sigma=\Sigma_1\cup \Sigma_2$.
Then, for all $U\in \square(\Sigma)$,
\begin{equation*}
  X[U] = \bigcup \big\{ X_1[U_1]\times X_2[U_2]\mid U_1\in \square(\Sigma_1), U_2\in \square(\Sigma_2), U_1\para U_2=U \big\},
\end{equation*}
and $\forall x_1 \in X_1[U_1],\ x_2 \in X_2[U_2], A,B \in U$
\begin{equation*}
  \delta_{A, B}((x_1, x_2)) = \big( \delta_{A\cap U_1, B\cap U_1}(x_1), \delta_{A\cap U_2, B\cap U_2}(x_2) \big).
\end{equation*}
Finally, $\bot_X=\bot_{X_1}\times \bot_{X_2}$ and $\top_X=\top_{X_1}\times \top_{X_2}$.

That is, cells of $X$ are pairs of cells of the two components $X_1$ and $X_2$, the respective conclist corresponds to the parallel composition of the conclist of the components and the face maps are adjusted accordingly.

Note that conclists are discrete ipomsets ($<$ is empty, $\evord$ is total) and therefore their parallel composition is a special case of the parallel composition for ipomsets introduced before.
Further, while the parallel composition of ipomsets is not necessarily an ipomset, the parallel composition of conclists is always again a conclist.

\begin{theorem}[\cite{DBLP:conf/concur/FahrenbergJSZ22}]
  For all HDAs $X_1$ and $X_2$,
  $\Lang(X_1\otimes X_2)=\Lang(X_1)\para \Lang(X_2)$.
\end{theorem}

Tensor product is, thus, asynchronous parallel composition (or independent product):
for every pair of events $a_1$ in $X_1$ and $a_2$ in $X_2$
there is an event $a_1\para a_2$ in $X_1\otimes X_2$.
Using relabeling and restriction, any other parallel composition operator may be obtained,
see \cite{WinskelN95-Models, DBLP:conf/fossacs/Fahrenberg05}.

\subsection{Regular and rational languages}

A language is \emph{regular} if it is the language of a finite HDA.

For $A \subseteq \iiPoms$,
$A\down = \{P \in \iiPoms \mid \exists Q \in A \colon  P \subsu Q\}$
denotes its subsumption closure.
Languages of HDAs are closed under subsumption, that is,
if $L$ is regular, then $L \down = L$ \cite{Hdalang, DBLP:conf/concur/FahrenbergJSZ22}.

The \emph{rational operations} on subsets of $\iiPoms$
are $\cup$, $*$, $\|$, and (Kleene plus)~$^+$, defined as follows.
\begin{align*}
  L*M &= \{P*Q\mid P\in L,\; Q\in M,\; T_P=S_Q\}\down, \\
  L\para M &= \{P\para Q\mid P\in L,\; Q\in M,\; T_P=S_Q\}\down, \\
  L^+ &= \smash[t]{\bigcup\nolimits_{n\ge 1}} L^n,\qquad \text{ for } L^1=L, L^{n+1}= L*L^n.
\end{align*}
The class of \emph{rational languages} is the smallest subset of $2^\iiPoms$
that contains $\emptyset$, $\{\id_\emptyset\}$ and $\{a\}$, $\{\ibullet a\}$, $\{a\ibullet\}$, $\{\ibullet a\ibullet\}$ for every $a\in \Sigma$
and is closed under the rational operations. 
The class of \emph{ST-rational languages} is the smallest subset of $2^\iiPoms$ that contains $\St\cup \Te$
and is closed under the rational operations \emph{excluding $\|$}.

\begin{theorem}[\cite{DBLP:conf/concur/FahrenbergJSZ22}]
  \label{th:kleene}
  A language is regular iff it is rational, iff it is ST-rational.
\end{theorem}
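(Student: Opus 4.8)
This is a Kleene-type theorem, so the plan is to prove the two inclusions separately. The direction ``rational $\Rightarrow$ regular'' is a routine structural induction; the direction ``regular $\Rightarrow$ rational'' is the \emph{substantial} one and needs an elimination argument adapted to the up/down structure of paths in HDAs.

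For rational $\Rightarrow$ regular I would induct on the structure of a rational expression and build a finite HDA for each. The base cases are easy: $\emptyset$ is the language of the empty HDA, $\{\id_\emptyset\}$ of the one-cell HDA over the empty conclist with that cell both initial and accepting, and a discrete ipomset $P$ with carrier $U$, source interface $S$ and target interface $T$ is realised by the HDA consisting of a cell $c$ with $\ev(c)=U$ and all its faces, with $\delta^0_{U\setminus S}(c)$ the only initial cell and $\delta^1_{U\setminus T}(c)$ the only accepting cell: an accepting path goes up from $\delta^0_{U\setminus S}(c)$ to $c$ and back down to $\delta^1_{U\setminus T}(c)$, with event ipomset $\starter{U}{U\setminus S}*\terminator{U}{U\setminus T}=P$, and $P\down=\{P\}$ since $P$ is discrete. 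For the inductive step, $L_1\cup L_2$ is realised by disjoint union; $L_1*L_2$ by identifying each accepting cell of the first HDA with each initial cell of the second carrying the same conclist (matching the requirement $T_P=S_Q$ in the definition of $*$); and $L^+$ by the analogous feedback construction. In each case one checks that accepting paths of the new HDA are exactly the expected concatenations of accepting paths of the constituents and reads off the language; finiteness is preserved, and since HDA languages are automatically subsumption-closed, this is consistent with the closed semantics of the rational operations.

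For regular $\Rightarrow$ rational I would run a Kleene-style cell elimination. For a finite HDA $X$, cells $x,y$ and a set $Z$ of cells, let $L_Z(x,y)$ be the set of event ipomsets of paths from $x$ to $y$ all of whose intermediate cells lie in $Z$; then $\Lang(X)=\bigcup_{x\in\bot,\,y\in\top}L_X(x,y)$, so it suffices to show each $L_Z(x,y)$ is rational, by induction on $|Z|$. For $Z=\emptyset$, a path from $x$ to $y$ without intermediate cells is trivial (only when $x=y$, giving $\id_{\ev(x)}$) or a single up- or down-step, so $L_\emptyset(x,y)$ is a finite set of starters, terminators and identities, all discrete ipomsets, hence rational. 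For the step, pick $z\in Z$ and split each path at its visits to $z$:
\begin{equation*}
  L_Z(x,y)=L_{Z\setminus\{z\}}(x,y)\ \cup\ L_{Z\setminus\{z\}}(x,z)*L_{Z\setminus\{z\}}(z,z)^+*L_{Z\setminus\{z\}}(z,y),
\end{equation*}
where the Kleene plus already absorbs the ``no return to $z$'' case because $L_{Z\setminus\{z\}}(z,z)\ni\id_{\ev(z)}$. All gluings here are defined: a path ipomset ending at $z$ has target conclist $\ev(z)$ and one starting at $z$ has source conclist $\ev(z)$, so the partiality condition on $*$ holds automatically. By induction every factor is rational, hence so is $L_Z(x,y)$.

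The main obstacle is the interaction with the subsumption closure. The rational operations are defined to incorporate $\down$, whereas the intermediate languages $L_Z(x,y)$ with $Z$ a proper subset of cells are in general not subsumption-closed (only the full language $\Lang(X)$ is, by the results cited above). So the elimination identity, read with the closed semantics, does not literally compute $L_Z(x,y)$. The resolution — the heart of the argument in \cite{DBLP:conf/concur/FahrenbergJSZ22} — is to run the whole elimination on the naive, closure-free semantics, obtaining $\Lang(X)$ as the value of a closure-free rational expression, and then show, using structural properties of regular HDA languages (essentially that $\down$ distributes appropriately over $\cup$, $*$ and $^+$ for regular languages, so that re-inserting the closures does not change the denoted set), that the corresponding closed rational expression denotes the same language. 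Keeping track of which event orders are ``essential'', and hence survive subsumption, under the gluings and loops is the genuinely \emph{delicate} bookkeeping; the rest is a direct adaptation of the classical Kleene construction.
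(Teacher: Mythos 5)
First, a point of order: the paper does not prove Theorem~\ref{th:kleene} at all --- it is imported verbatim from \cite{DBLP:conf/concur/FahrenbergJSZ22}, so there is no in-paper argument to compare yours against. Judged on its own terms, your sketch has the right overall shape (structural induction for rational~$\Rightarrow$~regular, cell elimination for the converse), but it contains one concrete error and leaves the genuinely hard point unproved.

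The error is in your base case: ``$P\down=\{P\}$ since $P$ is discrete'' is exactly backwards. Since $A\down=\{P\mid\exists Q\in A\colon P\subsu Q\}$ and subsumption \emph{reflects} precedence, the closure of $\{Q\}$ consists of ipomsets with \emph{more} order than $Q$; it is therefore the \emph{words} (totally $<$-ordered ipomsets) whose closure is trivial, while discrete ipomsets have the largest possible closures. Compare Example~\ref{ex:hda-ex7}: the single-square HDA for $\loset{a\\b}$ has language $\{\loset{a\\b}\}\down=\{\loset{a\\b},ab,ba\}$, precisely because besides the path $\delta^0_{ab}(c)\arrO{ab}c\arrI{ab}\delta^1_{ab}(c)$ that you describe there are accepting paths along the boundary whose event ipomsets are proper subsumptions of $\loset{a\\b}$. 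Your construction for the atom still works --- because the rational semantics assigns $\{P\}\down$ to the atom $P$ --- but the accepting-path analysis and the justification you give for it are wrong as stated.

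The larger gap is the one you yourself flag: both directions hinge on controlling the subsumption closure, and at that point you defer to the cited paper rather than argue. In the elimination step, the identity $L_Z(x,y)=L_{Z\setminus\{z\}}(x,y)\cup L_{Z\setminus\{z\}}(x,z)*L_{Z\setminus\{z\}}(z,z)^+*L_{Z\setminus\{z\}}(z,y)$ is correct at the level of paths, but the closed rational operations can denote strictly more than the closure-free gluings of the (non-closed) intermediate languages, and showing that the final closed expression still denotes exactly $\Lang(X)$ is the substance of the theorem, not bookkeeping. Similarly, in the rational~$\Rightarrow$~regular direction, identifying every accepting cell of $X_1$ with every initial cell of $X_2$ carrying the same conclist can merge distinct higher-dimensional cells and thereby create accepting paths that cross back and forth through the identified cells, which is why \cite{DBLP:conf/concur/FahrenbergJSZ22} works with HDAs with interfaces and a more careful gluing. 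As it stands your text is a plausible plan plus a citation for its hardest steps, not a proof.
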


\begin{remark}
  One direction of Theorem \ref{th:kleene} may easily be obtained by passing through ST-automata:
  if $L=\Lang(X)$ is regular, then $\sem{X}$ may be converted to a rational (word) expression
  on the alphabet $\St\cup \Te$ (or rather a finite subset thereof),
  showing that $\Lang(X)=\Glue(\Lang(\sem{X}))$ is ST-rational.
  This is indeed the approach taken in \cite{DBLP:conf/concur/FahrenbergJSZ22}.
  The other direction is substantially more difficult.
\end{remark}

\section{Higher-dimensional timed automata}
\label{se:hdta}

Unlike timed automata,
higher-dimensional automata make no formal distinction
between states ($0$-cells), transitions ($1$-cells), and higher-dimensional cells.
We transfer this intuition to higher-dimensional timed automata,
so that each cell has an invariant which specifies when it is enabled
and an exit condition giving the clocks to be reset when leaving.
Semantically, this implies that time delays may occur in any $n$-cell,
not only in states as in timed automata;
hence actions no longer have to be instantaneous
but may be active during an interval of time.

\begin{definition}
  A \emph{higher-dimensional timed automaton (HDTA)} is a structure
  $(\Sigma, C, Q, \bot, \top, \inv, \exit)$, where
  $(\Sigma, Q, \bot, \top)$ is a finite higher-dimensional automaton
  and $\inv: Q\to \Phi(C)$, $\exit: Q\to 2^C$ assign \emph{invariant}
  and \emph{exit} conditions to each cell of $Q$.
\end{definition}

As before, we will often omit $\Sigma$ and $C$ from the signature.

\begin{definition}
  The \emph{operational semantics} of an HDTA $A=(Q, \bot, \top, \inv, \exit)$
  is the
  state-labeled automaton
  $\sem A=(S, {\leadsto}, S^\bot\!, S^\top\!, \stc)$,
  with ${\leadsto}\subseteq S\times (\St\cup \Te\cup \Realnn)\times S$,
  given as follows:
  \begin{gather*}
    S =\{(q, v)\in Q\times \Realnn^C\mid v\models \inv(q)\} \qquad \stc((q, v))=\ev(q) \\
    S^\bot = \{(q, v^0)\mid q\in \bot\} \qquad S^\top= S\cap \top\times \Realnn^C \\
    \begin{aligned}
      {\leadsto} ={} &\{((q, v), d, (q, v+d))\mid \forall 0\le d'\le d: v+d'\models \inv(q)\} \\
      {}\cup{} &\{((\delta_{A}^0(q), v), \starter{\ev(q)}{A}, (q, v'))\mid A\subseteq \ev(q), v'=v[\exit(\delta_{A}^0(q))\gets 0]\models \inv(q)\} \\
      {}\cup{} &\{((q, v), \terminator{\ev(q)}{A}, (\delta_{A}^1(q), v'))\mid A\subseteq \ev(q), v'=v[\exit(q)\gets 0]\models \inv(\delta_{A}^1(q))\}
    \end{aligned}
  \end{gather*}
\end{definition}

\begin{remark}
\label{re:rtst}
This defines a real-time extension of ST-automata
where transitions may be labeled not only with starters and terminators but also with non-negative reals.
The consistency condition of Definition \ref{de:staut} is still satisfied:
for all $(q, \ilo{S}{U}{T}, r)\in {\leadsto}$, $\stc(q)=S$ and $\stc(r)=T$;
further, for all $(q, d, r)\in {\leadsto}$, $\stc(q)=\stc(r)$.
\end{remark}

In the first line of the definition of $\leadsto$ above, we allow time to
evolve in any cell of $Q$.
As before, these are called \emph{delay moves} and denoted $\delayMove{d}$ for some delay $d \in \Realnn$.
The second line in the definition of $\leadsto$ defines the start of concurrent events $A$ (denoted $\upMove{\starter{\ev(q)}{A}})$
and the third line describes what happens when finishing a set $A$ of concurrent events (denoted $\downMove{\terminator{\ev(q)}{A}}$).
These are again called \emph{action moves}.
Exit conditions specify which clocks to reset when leaving a cell.

\begin{figure}[tbp] 
  \centering
  \begin{tikzpicture}[>=stealth', x=1.3cm, y=.7cm]
    \begin{scope}
      \path[fill=black!15] (0,0) -- (4,0) -- (4,4) -- (0,4);
      \node[state, initial left] (00) at (0,0) {$q_0$};
      \node[state] (10) at (4,0) {$q_1$};
      \node[state] (01) at (0,4) {$q_2$};
      \node[state, accepting] (11) at (4,4) {$q_3$};
      \node[below] at (00.south) {$x, y\gets 0$};
      \node[below] at (10.south) {$x\ge 2; y\gets 0$};
      \node[above] at (01.north) {$y\ge 1; x\gets 0$};
      \node[above] at (11.north) {$x\ge 2\land y\ge 1$};
      \path (00) edge node[above] {$x\le 4; y\gets 0$} node[below] {$e_1 \qquad a$} (10);
      \path (00) edge node[left, align=right] {$y\le 3$ \\ $x\gets 0$ \\ $e_2$} node[right] {$b$} (01);
      \path (10) edge node[right, align=left] {$x\ge 2\land y\le 3$ \\\\ $e_3$} node[left] {$b$} (11);
      \path (01) edge node[above] {$x\le 4\land y\ge 1$} node[below] {$e_4 \qquad a$} (11);
      \node[align=center] at (2,2) {$x\le 4\land y\le 3$ \\ $\loset{a\\b}$};
      \node at (1,1) {$u$};
    \end{scope}
  \end{tikzpicture}
  \caption{HDTA of Example~\ref{ex:thds-ex1}}
  \label{fi:thda-ex1}
\end{figure}

\begin{example}
  \label{ex:thds-ex1}
  We give a few examples of two-dimensional timed automata.  The
  first, in Figure~\ref{fi:thda-ex1}, is the HDA of Figure~\ref{fi:hda-ex7} with time constraints.
  It models two actions, $a$ and
  $b$, which can be performed concurrently.  
  This HDTA models that performing
  $a$ takes between two and four time units, whereas performing
  $b$ takes between one and three time units.  To this end, we use two
  clocks $x$ and
  $y$ which are reset when the respective actions are started and then
  keep track of how long they are running.
	
  Hence $\exit(q_0)=\{ x, y\}$, and the invariants $x\le 4$ at the
  $a$-labeled transitions $e_1$, $e_4$ and at the square $u$ ensure
  that $a$ takes at most four time units.  The invariants $x\ge 2$ at
  $q_1$, $e_3$ and $q_3$ take care that $a$ cannot finish before two
  time units have passed.  Note that $x$ is also reset when exiting
  $e_2$ and $q_2$, ensuring that regardless when $a$ is started,
  whether before $b$, while $b$ is running, or after $b$ is
  terminated, it must take between two and four time units.
\end{example}

\begin{figure}[tbp]
  \centering
  \begin{tikzpicture}[>=stealth', x=1.3cm, y=.6cm]
    \begin{scope}[yshift=-10cm]
      \path[fill=black!15] (0,0) -- (4,0) -- (4,4) -- (0,4);
      \node[state, initial left] (00) at (0,0) {$q_0$};
      \node[state] (10) at (4,0) {$q_1$};
      \node[state] (01) at (0,4) {$q_2$};
      \node[state, accepting] (11) at (4,4) {$q_3$};
      \node[below] at (00.south) {$x, y\gets 0$};
      \node[below] at (10.south) {$x\ge 2\land z\ge 1; y\gets 0$};
      \node[above] at (01.north) {$y\ge 1; x\gets 0$};
      \node[above] at (11.north) {$x\ge 2\land y\ge 1\land z\ge 1$};
      \path (00) edge node[above] {$x\le 4; y\gets 0$} node[below] {$e_1 \qquad a$} (10);
      \path (00) edge node[left, align=right] {$x\ge 1\land y\le 3$ \\ $x, z\gets 0$ \\ $e_2$} node[right] {$b$} (01);
      \path (10) edge node[right, align=left] {$x\ge 2\land y\le 3\land z\ge 1$ \\ $z\gets 0$ \\ $e_3$} node[left] {$b$} (11);
      \path (01) edge node[above, pos=.45] {$x\le 5\land y\ge 1$} node[below] {$e_4 \qquad a$} (11);
      \node[align=center] at (2,2) {$1\le x\le 4\land y\le 3$ \\ $z\gets 0;\; \loset{a\\b}$};
      \node at (1,1) {$u$};
    \end{scope}
  \end{tikzpicture}
  \caption{HDTA of Example~\ref{ex:thda-ex3}}
  \label{fi:thda-ex3}
\end{figure}

\begin{example}
  \label{ex:thda-ex3}
  The HDTA in Figure~\ref{fi:thda-ex3} models the following additional constraints:
  \begin{itemize}
  \item $b$ may only start after $a$ has been running for one time unit;
  \item once $b$ has terminated, $a$ may run one time unit longer;
  \item and $b$ must finish one time unit before $a$.
  \end{itemize}
  To this end, an invariant $x\ge 1$ has been added
  to the two $b$-labeled transitions and to the $ab$-square
  (at the right-most $b$-transition $x\ge 1$ is already implied), and
  the condition on $x$ at the top $a$-transition has been changed to
  $x\le 5$.  To enforce the last condition,  
  an extra clock $z$ is introduced which is reset when $b$ terminates
  and must be at least $1$ when $a$ is terminating.

  Note that the left edge is now unreachable: when entering
  it, $x$ is reset to zero, but its edge invariant is $x\ge 1$.  This
  is as expected, as $b$ should not be able to start before~$a$.
  Further, the right $b$-labeled edge is deadlocked: when leaving it,
  $z$ is reset to zero but needs to be at least one when entering the
  accepting state.  Again, this is expected, as $a$ should not
  terminate before~$b$.  As both vertical edges are now permanently disabled, the accepting
  state can only be reached through the square.
\end{example}

\begin{definition}
  \label{def:prodHDTA}
  Given HDTAs $A_i = (\Sigma_i, C_i, Q_i, \bot_i, \top_i, \inv_i, \exit_i)$ for 
  $i=1, 2$, their \emph{tensor product} is
  $A_1 \otimes A_2 = (\Sigma, C, Q, \bot, \top, \inv, \exit)$ given by
  \begin{gather*}
    (\Sigma, Q, \bot, \top) = (\Sigma_1, Q_1, \bot_1, \top_1) \otimes (\Sigma_2, Q_2, \bot_2, \top_2), \\
    C = C_1 \sqcup C_2, \\
    \inv((x_1,x_2)) = \inv_1(x_1) \land \inv_2(x_2), \\
    \exit((x_1,x_2)) = \exit_1(x_1) \cup \exit_2(x_2).
  \end{gather*}
\end{definition}

Hence $A_1\otimes A_2$ is the tensor product of the underlying HDAs,
invariants of the product are conjunctions of the original invariants,
and exit conditions are unions.

\begin{figure}[tbp]
	\centering
	\begin{tikzpicture}[>=stealth', x=1cm, y=.7cm]
		\begin{scope}[shift={(-6,0)}]
                        \node at (-1,0) {$Y$:};
                        \node[state, initial left] (00) at (0,0) {$s_0$};
			\node[state,accepting] (01) at (3,0) {$s_2$};
			\node[below] at (00.south) {$y\gets 0$};
			\node[below] at (01.south) {$y\geq 1$};
			\path (00) edge node[below] {$y\le 3$} node[above]{$s_1$ \qquad $b$} (01);
		\end{scope}
		\begin{scope}[shift={(0,0)}]
                        \node at (-1,0) {$X$:};
			\node[state, initial left] (00) at (0,0) {$\yesthatsanldammit_0$};
			\node[state,accepting] (10) at (3,0) {$\yesthatsanldammit_2$};
			\node[below] at (00.south) {$x\gets 0$};
			\node[below] at (10.south) {$x\ge 2$};	
			\path (00) edge node[below] {$x\le 4$} node[above] {$\yesthatsanldammit_1$ \qquad $a$} (10);
		\end{scope}
	\end{tikzpicture}
	\caption{HDTAs of Example~\ref{ex:tensorEx}}
	\label{fi:tensorHDTAs}
\end{figure}

\begin{example}
	\label{ex:tensorEx}
	Let $Y$ and $X$ be the HDTAs of Figure~\ref{fi:tensorHDTAs}. 
	Then $Y \otimes X$ is the HDTA of Figure~\ref{fi:thda-ex1},
        with the following cells:
	\begin{alignat*}{3}
	 (s_0,\yesthatsanldammit_0) &= q_0 &\qquad (s_0,\yesthatsanldammit_1) &= e_1 &\qquad (s_0,\yesthatsanldammit_2) &= q_1\\
	 (s_1,\yesthatsanldammit_0) &= e_2 &\qquad (s_1,\yesthatsanldammit_1) &= u &\qquad (s_1,\yesthatsanldammit_2) &= e_3\\
	 (s_2,\yesthatsanldammit_0) &=  q_2 &\qquad (s_2,\yesthatsanldammit_1) &= e_4 &\qquad (s_2,\yesthatsanldammit_2) &= q_3
	\end{alignat*}
\end{example}

\section{Example}
\label{se:ex}

Before proceeding to define the languages of HDTAs, we give another, more realistic example
that also showcases potential gains in state-space exploration when using the non-interleaving tensor product of HDTAs.

\begin{figure}[bp]
  \centering
  \includegraphics[width=0.7\linewidth]{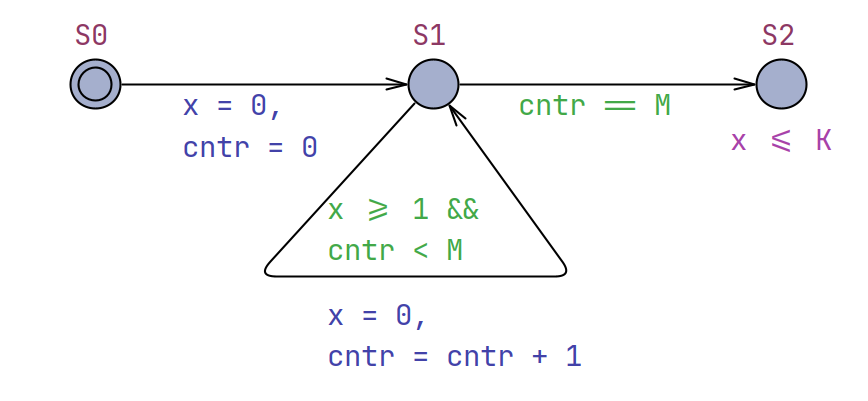}
  \caption{Timed counter automaton}
  \label{fig:app_1}
\end{figure}

Figure \ref{fig:app_1} shows a simple timed automaton, enriched with a counter and parameterized with variables $K$ and $M$.
Each counter increment takes at least one time unit,
and the automaton may pass to the final state \ssserif{S2} if the counter value is 
$M$ and at most $K$ time units have passed since the last increment.
It is clear by construction that the final state of a single component is 
always reachable.

Based on this simple automaton we will construct a larger system.
Let $N\ge 1$ and consider the system composed of $N$ such timed automata 
running in parallel.
Note that there is no explicit synchronization between different components, 
each one having their own clock and counter.
That is, different components are independent.
However, the invariant $x\le K$ of \ssserif{S2} forces all components to 
have their last increments \emph{about the same time},
introducing a global synchronization constraint.
The zone graph of the combined system grows very fast,
and when using Uppaal with default options and breadth-first or depth-first 
search to check for simultaneous reachability of \ssserif{S2},
verification becomes impossible already for $N=10$ and $M=5$.

This is due to the high amount of interleaving in the product automaton which 
leads to a large number of states ($(M+2)^N$),
but also due to the search order:
the invariant on the state \ssserif{S2} effectively forces the system to cycle through 
the sub-systems in order to schedule them in such a way
that all of them can take the their last incrementing transition $s_1\to s_1$ 
at almost the same time.
The fastest possible scheduling for the system is to wait for one time unit,
schedule each component to take its incrementing transition (the order in 
which the components interleave does not matter) without any delay and repeat.

Now consider the different search strategies implemented in Uppaal.
Depth-first search will first expand the transitions of the first component,
which will attain its state \ssserif{S2}, then continue with the second component
and so forth.
By doing so, the minimal time that the first component spends in \ssserif{S2} until the
$i$-th component reaches \ssserif{S2} is $(i-1)M$ which will eventually become larger
than $K$ if there are enough components, causing the search to backtrack.
The crucial transition is however the last incrementing transition of the 
first component, which is close to the beginning of the search stack and
therefore causes a depth-first search to explore a significant part of the 
state space before concluding reachability. 
Breadth-first search performs even worse and needs a lot of memory
to remember the constructed part of the state space.
Random depth-first search partially alleviates these problems but introduces others.

Now let us model this system using HDTAs.
The individual components of Figure~\ref{fig:app_1} stay the same,
modeling a counter for which incrementing takes at least one time unit
and which must reach the accepting state \ssserif{S2} at most $K$ time units after
the counter attaining $M$.
The combined system is now the $N$-fold tensor product of this HDTA with itself
and contains $N$-dimensional cells which model simultaneous executions.
For sake of exposition we unroll the components in what follows and add unique
labels to the edges, see Figure~\ref{fig:app_2}
for the unrolling with $M=2$.

\begin{figure}[tbp]
  \centering
  \includegraphics[width=0.7\linewidth]{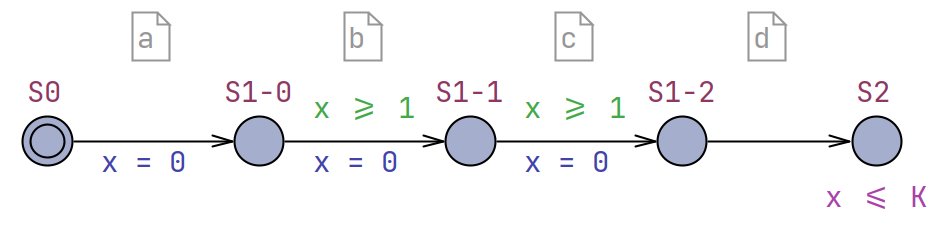}
  \caption{Unrolling of the HDTA of Figure~\ref{fig:app_1} for $M=2$}
  \label{fig:app_2}
\end{figure}

\begin{figure}[tbp]
  \centering
  \begin{tikzpicture}[node distance=1.5cm]
    \tikzset{pstate/.style={circle, fill, inner sep=0pt, minimum size=3pt}}
      
      \node[pstate] (s00) at (0,0) {};
      \node[pstate] (s10) [right of=s00] {};
      \node[pstate] (s20) [right of=s10] {};
      \node[pstate] (s30) [right of=s20] {};
      \node[pstate] (s40) [right of=s30] {};

      \node[pstate] (s01) [above of=s00] {};
      \node[pstate] (s11) [above of=s10] {};
      \node[pstate] (s21) [above of=s20] {};
      \node[pstate] (s31) [above of=s30] {};
      \node[pstate] (s41) [above of=s40] {};

      \node[pstate] (s02) [above of=s01] {};
      \node[pstate] (s12) [above of=s11] {};
      \node[pstate] (s22) [above of=s21] {};
      \node[pstate] (s32) [above of=s31] {};
      \node[pstate] (s42) [above of=s41] {};

      \node[pstate] (s03) [above of=s02] {};
      \node[pstate] (s13) [above of=s12] {};
      \node[pstate] (s23) [above of=s22] {};
      \node[pstate] (s33) [above of=s32] {};
      \node[pstate] (s43) [above of=s42] {};

      \node[pstate] (s04) [above of=s03] {};
      \node[pstate] (s14) [above of=s13] {};
      \node[pstate] (s24) [above of=s23] {};
      \node[pstate] (s34) [above of=s33] {};
      \node[pstate] (s44) [above of=s43] {};

      \node[below=0.05 of s00] (S000) {\textcolor{cyan}{\ssserif{S0}}};
      \node[below=0.05 of s10] (S10) {\textcolor{cyan}{\ssserif{S1-0}}};
      \node[below=0.05 of s20] (S20) {\textcolor{cyan}{\ssserif{S1-1}}};
      \node[below=0.05 of s30] (S30) {\textcolor{cyan}{\ssserif{S1-2}}};
      \node[below=0.05 of s40] (S40) {\textcolor{cyan}{\ssserif{S2}}};

      \node[left=0.05 of s00] (S001) {\textcolor{orange}{\ssserif{S0}}};
      \node[left=0.05 of s01] (S01) {\textcolor{orange}{\ssserif{S1-0}}};
      \node[left=0.05 of s02] (S02) {\textcolor{orange}{\ssserif{S1-1}}};
      \node[left=0.05 of s03] (S03) {\textcolor{orange}{\ssserif{S1-2}}};
      \node[left=0.05 of s04] (S04) {\textcolor{orange}{\ssserif{S2}}};
      
      \node[right=0.25 of S000] (A0) {\large \textcolor{cyan}{$a$}};
      \node[right of= A0] (B0) {\large \textcolor{cyan}{$b$}};
      \node[right of= B0] (C0) {\large \textcolor{cyan}{$c$}};
      \node[right of= C0] (D0) {\large \textcolor{cyan}{$d$}};

      \node[above=0.25 of S001] (A1) {\large \textcolor{orange}{$a$}};
      \node[above of= A1] (B1) {\large \textcolor{orange}{$b$}};
      \node[above of= B1] (C1) {\large \textcolor{orange}{$c$}};
      \node[above of= C1] (D1) {\large \textcolor{orange}{$d$}};

      \draw[pattern=north west lines, pattern color=gray] (s30) rectangle (s44);
      \draw[pattern=north west lines, pattern color=gray] (s03) rectangle (s44);

      \draw[fill=blue, fill opacity=0.3] (s00) rectangle (s11);
      \draw[fill=blue, fill opacity=0.3] (s11) rectangle (s22);
      \draw[fill=blue, fill opacity=0.3, draw=black, thick] (s22) rectangle (s33);
      \draw[fill=blue, fill opacity=0.3, draw=black, thick] (s33) rectangle (s44);
      
      \draw[arrows = {-Stealth[]}, thick] 
                     (s00) edge (s10) (s10) edge (s11)
                     (s11) edge (s21) (s21) edge (s22)
                     (s22) edge (s32) (s32) edge (s33) 
                     (s33) edge (s43) (s43) edge (s44);
      \draw[arrows = {-Stealth[]}, thick] 
                     (s00) edge (s01) (s01) edge (s11)
                     (s11) edge (s12) (s12) edge (s22)
                     (s22) edge (s23) (s23) edge (s33) 
                     (s33) edge (s34) (s34) edge (s44);

      \draw[arrows = {-Stealth[]}, thin]
                     (s10) edge (s20) (s20) edge (s30)
                     (s30) edge (s40) (s21) edge (s31)
                     (s31) edge (s41) (s32) edge (s42);
      \draw[arrows = {-Stealth[]}, thin]
                     (s01) edge (s02) (s02) edge (s03)
                     (s03) edge (s04) (s12) edge (s13)
                     (s13) edge (s14) (s23) edge (s24); 
      \draw[arrows = {-Stealth[]}, thin]
                     (s20) edge (s21) (s30) edge (s31)
                     (s31) edge (s32) (s40) edge (s41)
                     (s41) edge (s42) (s42) edge (s43); 
      \draw[arrows = {-Stealth[]}, thin]
                     (s02) edge (s12) (s03) edge (s13)
                     (s13) edge (s23) (s04) edge (s14)
                     (s14) edge (s24) (s24) edge (s34);

      \draw (s00) + (-0.05,-0.05) rectangle +(0.05,0.05);
      \node[below left=2pt and 2pt of s00] {{\small initial}};
      \draw (s44) + (-0.05,-0.05) rectangle +(0.05,0.05);
      \node[above right=1pt and 1pt of s44] {{\small final}};

      \node (tt00) at (0.75, 0.75) {{\Large $\loset{\textcolor{cyan}{a}\\\textcolor{orange}{a}}$}};
      \node (tt10) [right of=tt00] {{\Large $\loset{\textcolor{cyan}{b}\\\textcolor{orange}{a}}$}};
      \node (tt20) [right of=tt10] {{\Large $\loset{\textcolor{cyan}{c}\\\textcolor{orange}{a}}$}};
      \node (tt30) [right of=tt20] {{\Large $\loset{\textcolor{cyan}{d}\\\textcolor{orange}{a}}$}};
      \node (tt01) [above of=tt00] {{\Large $\loset{\textcolor{cyan}{a}\\\textcolor{orange}{b}}$}};
      \node (tt02) [above of=tt01] {{\Large $\loset{\textcolor{cyan}{a}\\\textcolor{orange}{c}}$}};
      \node (tt03) [above of=tt02] {{\Large $\loset{\textcolor{cyan}{a}\\\textcolor{orange}{d}}$}};

      \node (tt11) [above of=tt10] {{\Large $\loset{\textcolor{cyan}{b}\\\textcolor{orange}{b}}$}};
      \node (tt21) [above of=tt20] {{\Large $\loset{\textcolor{cyan}{c}\\\textcolor{orange}{b}}$}};
      \node (tt31) [above of=tt30] {{\Large $\loset{\textcolor{cyan}{d}\\\textcolor{orange}{b}}$}};
      \node (tt12) [right of=tt02] {{\Large $\loset{\textcolor{cyan}{b}\\\textcolor{orange}{c}}$}};
      \node (tt13) [right of=tt03] {{\Large $\loset{\textcolor{cyan}{b}\\\textcolor{orange}{d}}$}};

      \node (tt22) [above of=tt21] {{\Large $\loset{\textcolor{cyan}{c}\\\textcolor{orange}{c}}$}};
      \node (tt32) [above of=tt31] {{\Large $\loset{\textcolor{cyan}{d}\\\textcolor{orange}{c}}$}};
      \node (tt23) [right of=tt13] {{\Large $\loset{\textcolor{cyan}{c}\\\textcolor{orange}{d}}$}};

      \node (tt33) [above of=tt32] {{\large $\loset{\textcolor{cyan}{d}\\\textcolor{orange}{d}}$}};
    \end{tikzpicture}
    \caption{Tensor product for $N=2$ with unrolled template. In this example,
    all white cells are unconditionally reachable. Cells that are hatched have 
    additional timing constraints due to the invariant on the \ssserif{S2} 
    states. Cells shown in pale blue are explored using the ``expand-collapse''
    algorithm.}
    \label{fig:app_3}
\end{figure}

There are $(M+2)^N$ $N$-dimensional cells in the tensor product,
so a priori the state space of the HDTA product is \emph{bigger} than the one of the timed-automata product, when taking into account all the sub-cells.
The picture, however, changes when building the tensor product on the fly.
Here, we may choose to start several events simultaneously instead of interleaving them.
If we implement an ``expand-collapse'' strategy for exploration,
we can explore the state space in $M+2$ steps, as visualized for $N=2$ in 
Figure~\ref{fig:app_3}, where explored cells are shown in pale blue.
The idea behind this exploration strategy is to make as much progress as 
possible across all components.
Each step in this exploration consists of two parts, starting from the 
current configuration.
First the expand part, in which neighboring cells are visited such that
the number of concurrent events started is decreasing.
That is the successors of (\ssserif{S0},\ssserif{S0}) according to this ranking
are the $2$-cell $\loset{\textcolor{cyan}{a}\\\textcolor{orange}{a}}$
followed by the $1$-cells labeled $\textcolor{cyan}{a}$ and $\textcolor{orange}{a}$
(the lower and left edge of the $2$-cell).
Similarly for the collapse part, in which neighboring cells are visited such
that the number of concurrent events terminated is decreasing.
Therefore the successors of the $2$-cell $\loset{\textcolor{cyan}{a}\\\textcolor{orange}{a}}$
according to this ranking are (\ssserif{S1},\ssserif{S1})
followed by the $1$-cells labeled $\textcolor{cyan}{a}$ and $\textcolor{orange}{a}$
(the upper and right edge of the $2$-cell).

We have carefully chosen our example to expose the best possible case for our
exploration order, and whether
other models lead to similar reductions
remains to be seen.
We plan to implement HDTA model checking (and HDA model checking) in a prototype 
tool in order to assess different exploration strategies and their potential.

\section{Concurrent timed languages}
\label{se:timed-lang-words}

We introduce two formalisms for concurrent timed words:
interval delay words which generalize delay words and step sequences,
and timed ipomsets which generalize timed words and ipomsets.
Figure \ref{fig:lang-diag} shows the relations
between the different language semantics
used and introduced in this paper.

\subsection{Interval delay words}
\label{se:idw}

Intuitively, an interval delay word is a step sequence interspersed with delays.
These delays indicate how much time passes between starts and terminations of different events.

\begin{definition}
  \label{de:tcoh}
  A word $x_1\dots x_n\in (\Realnn \cup \St\cup \Te)^+ \setminus \Realnn^+$  is \emph{coherent} if,
  for all $i<k$ such that $x_i, x_k\in \St\cup \Te$
  and $\forall i<j<k: x_j\in \Realnn$, the gluing $x_i*x_k$ is defined.
\end{definition}

We exclude $\Realnn^+$ from the set of words above
as every cell of an HDTA is associated to a conclist.
So even in the case of an HDTA with $\bot = \left\lbrace x\right\rbrace = \top$
for some cell $x\in X$, the minimal accepted word contains the identity of 
$x$, \ie~$id_{\ev(x)}$.

Let $\tCoh\subseteq (\Realnn \cup \St\cup \Te)^+ \setminus \Realnn^+$ denote the subset of coherent words.
Let $\sim$ be the congruence on $\tCoh$ generated by the relations
\begin{gather*}
  \label{pg:simtcoh}
  d_1 d_2\sim d_1+d_2 \quad (d_1,d_2 \in \Realnn), \qquad 0 U \sim U 0 \sim U \quad (U \in \St \cup \Te),
  \\
  U d\, I\sim U d, \quad I\, d\, U\sim d\, U \quad (U\in \St\cup \Te, I\in \Id, d\in \Realnn),
  \\
  S_1 S_2\sim S_1*S_2 \quad (S_1, S_2\in \St),
  \qquad T_1 T_2\sim T_1*T_2 \quad (T_1, T_2\in \Te).
\end{gather*}
That is, successive delays may be added up and zero delays removed,  as may identities appearing before or after a delay and a starter or a terminator, 
and successive starters or terminators may be composed. 
Note how this extends the identifications in \eqref{eq:sim-dwords} for delay words (first line)
and the ones for step sequences in \eqref{eq:sim-stepseq} (third line).

\begin{definition}
  \label{def:idw}
  An \emph{interval delay word (idword)}
  is an element of the set
  \begin{equation*}
    \IDW= \tCoh{}_{/{\sim}}.
  \end{equation*}
\end{definition}

\begin{figure}[tbp]
  \centering
  \begin{tikzpicture}[fill=blue!10, x=1.4cm, y=1.5cm]
    \node[rectangle, draw, fill] (dwords) at (0,0.2) {delay words};
    \node[rectangle, draw, fill] (twords) at (4,0.2) {timed words\vphantom{y}};
    \node[rectangle, draw, fill] (sseq) at (1.5,1) {step sequences};
    \node[rectangle, draw, fill] (ipoms) at (5.5,1) {ipomsets};
    \node[rectangle, draw, fill] (idwords) at (1,2) {interval delay words};
    \node[rectangle, draw, fill] (tipoms) at (5,2) {timed ipomsets};
    \draw[right hook-latex](dwords) edge  node[above=1.3em] {$i_1$} (idwords) ;
    \draw[left hook-latex] (sseq) edge node[right] {$i_2$} (idwords);
    \draw[right hook-latex] (twords) edge node[above=1.3em] {$i_3$} (tipoms);
    \draw[left hook-latex](ipoms) edge node[right] {$i_4$} (tipoms);
    \draw[<->] (dwords) edge (twords);
    \draw[<->]  (sseq) edge (ipoms);
    \draw[<->] (idwords) edge (tipoms);
     \draw[<->] (dwords) edge node {$\Phi$} (twords);
     \draw[<->]  (sseq) edge node {$\Glue$} (ipoms);
     \draw[<->] (idwords) edge node {$\tGlue$} (tipoms);
  \end{tikzpicture}
  \caption{Different types of language semantics:
    below, for languages of timed automata;
    middle, for languages of HDAs;
    top, for languages of HDTAs.
    Vertical arrows denote injections,
    horizontal arrows bijections.}
  \label{fig:lang-diag}
\end{figure}

\begin{lemma}
  \label{le:tipoms-sparse}
  Every element of\/ $\IDW$ has a unique representative
  $d_0 P_1 d_1\dotsc P_n d_n$ for $n\ge 1$,
  with the property that if $n\ge 2$, then
  \begin{itemize}
  \item for all $1\le i\le n$, $P_i \notin \Id$ and
  \item for all $1\le i\le n-1$, if $d_i = 0$, then  $(P_i, P_{i+1})\in \St_+\times \Te_+\cup \Te_+\times \St_+$.
  \end{itemize}
  Such a representative is called \emph{sparse}. 
\end{lemma}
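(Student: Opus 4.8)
The plan is to reduce the statement to Lemma~\ref{le:iipoms-sparse} by a two-stage normalization: first normalize the delays, then normalize the starter/terminator part, and finally argue uniqueness. Since the congruence $\sim$ on $\tCoh$ is generated by relations that are of two distinct flavours --- the ``delay'' relations $dd'\sim d+d'$, $0\sim\epsilon$ which only touch letters in $\Realnn$, and the ``step'' relations $I\sim\epsilon$, $S_1S_2\sim S_1{*}S_2$, $T_1T_2\sim T_1{*}T_2$ which only touch letters in $\St\cup\Te$ --- I would first observe that these two families of rewrites commute in a suitable sense, so that any word can be brought to the alternating shape $I_0\,d_0\,P_1\,d_1\dots P_n\,d_n\,I_{n+1}$ with $d_i\in\Realnn$ and $P_i\in\St\cup\Te$. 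Concretely: using $0\sim\epsilon$ and $dd'\sim d+d'$, collapse every maximal block of consecutive reals into a single real (possibly $0$); then insert an explicit $0$ wherever two letters of $\St\cup\Te$ are adjacent (legitimate since $0\sim\epsilon$), and likewise pad the ends with the required identities from the $\Id.\tCoh.\Id$ format. This produces \emph{some} representative of the prescribed alternating form.

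Next I would impose the sparseness conditions. The $\St\cup\Te$-letters $P_1,\dots,P_n$, read in order, form a word in $(\St\cup\Te)^*$ which is coherent in the sense of Section~3.3 (coherence of the idword restricted to adjacent step letters with only reals between them is exactly coherence of $P_1\dots P_n$ after the reals are deleted); so by Lemma~\ref{le:iipoms-sparse} applied to $I_0P_1\dots P_nI_{n+1}\in\StepS$, there is a unique sparse representative $I_0'P_1'\dots P_m'I_{m+1}'$ with $(P_i',P_{i+1}')\in\St_+\times\Te_+\cup\Te_+\times\St_+$. The point now is to push this simplification through the \emph{timed} word: whenever the step-normalization fuses $P_iP_{i+1}$ into $P_i{*}P_{i+1}$ (both starters, or both terminators), the real $d_i$ sitting between them in the alternating form must be $0$ --- otherwise the fusion rule $S_1S_2\sim S_1{*}S_2$ cannot be applied, as it requires \emph{adjacency}. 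So such fusions are only performed across zero delays, and after performing all of them we reach a representative $I_0\,d_0\,P_1\,d_1\dots P_n\,d_n\,I_{n+1}$ with each $P_i\notin\Id$ and with the property that $d_i=0$ forces $(P_i,P_{i+1})$ to alternate between $\St_+$ and $\Te_+$ (if it did not, we could still fuse across the zero delay, contradicting minimality). Conversely, when $d_i>0$ no fusion is possible and two starters (or two terminators) in a row are allowed; this is exactly the stated condition.

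For uniqueness, suppose two sparse representatives are $\sim$-equivalent. Erase all real letters: by the compatibility of the two rewrite families, $\sim$-equivalent idwords have $\sim$-equivalent (in the sense of \eqref{eq:sim-stepseq}) step-letter projections, hence by the uniqueness in Lemma~\ref{le:iipoms-sparse} the sequences $(I_0,P_1,\dots,P_n,I_{n+1})$ agree; in particular $n$ and all the $P_i$ and boundary identities coincide. It remains to see the delays $d_0,\dots,d_n$ are determined. Here I would argue that, given the fixed skeleton of step letters, the only $\sim$-moves available between two representatives of this exact alternating shape are the delay moves $dd'\sim d+d'$ and $0\sim\epsilon$ acting within the single real slot between two consecutive step letters --- they cannot move a real past a non-identity step letter, and they cannot split the real slot without violating the alternating shape --- so each $d_i$ is the common value of that slot and is therefore unique. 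I expect the main obstacle to be making this ``delays and steps don't interfere'' claim fully rigorous: one needs a clean confluence/commutation argument for the rewrite system, or equivalently an explicit canonical-form map on $\tCoh$ whose fibres are exactly the $\sim$-classes, rather than the hand-wavy ``push the simplification through'' phrasing above. Everything else is bookkeeping on top of Lemma~\ref{le:iipoms-sparse}.
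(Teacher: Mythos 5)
Your existence argument is essentially the paper's: first use $dd'\sim d+d'$ and $0\sim\epsilon$ to reach the alternating shape $I_0 d_0 P_1 d_1\dotsm P_n d_n I_{n+1}$, and then repeatedly either remove an identity $P_k\in\Id$ (merging $d_{k-1}$ and $d_k$ into one delay) or glue two same-type letters $P_k,P_{k+1}$ across a zero delay. The paper organizes the second stage as an induction on $n$, which also supplies the termination argument that your ``perform all fusions'' phrasing leaves implicit; this part of your proposal is fine.

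The uniqueness argument, however, has a genuine gap. You erase the real letters and invoke the uniqueness clause of Lem.~\ref{le:iipoms-sparse} to conclude that two $\sim$-equivalent sparse idwords have the same skeleton $(I_0,P_1,\dots,P_n,I_{n+1})$. But the step-letter projection of a sparse idword need \emph{not} be a sparse step sequence: sparseness of an idword only forbids two consecutive starters (or two consecutive terminators) when the delay between them is \emph{zero}, and a positive delay legitimately separates two elements of $\St_+$. The paper's own Ex.~\ref{ex:tipomsetToidw} exhibits exactly this (consecutive starters separated by the delay $0.5$). Consequently, from ``the two projections are $\sim$-equivalent step sequences'' Lem.~\ref{le:iipoms-sparse} only yields that they share a common sparse representative, not that they are equal as words, and your subsequent slot-by-slot determination of the delays presupposes that the skeletons already coincide. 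The paper does not reduce uniqueness to the untimed lemma at all: it disposes of it in one sentence, asserting directly that two equivalent words already in sparse form are equal --- precisely the confluence-type fact about the timed rewrite system that you correctly flag at the end as the unresolved crux. That fact has to be established on $\tCoh$ itself (e.g.\ by exhibiting a $\sim$-invariant that separates distinct sparse words); it cannot be imported from Lem.~\ref{le:iipoms-sparse} via the projection.
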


\begin{proof}
  If two words of the form of the lemma are equivalent, then they are equal.
  This proves uniqueness.
  To show existence,
  first note that by using the equivalences
  \begin{equation*}
    d_1 d_2\sim d_1+d_2, \qquad U0\sim 0U \sim U    
  \end{equation*}
  any element of $\IDW$ may be rewritten to a word (not necessarily unique)
  \begin{equation}
    \label{eq:idword1}
    d_0 P_1 d_1\dotsm P_n d_n
  \end{equation}
  with $n \ge 1$ and $P_1,\dots, P_n\in \St\cup \Te$.

  Next we prove by induction on $n$
  that any word $w$ of the form \eqref{eq:idword1} is equivalent to one as in the lemma.
  The case $n=1$ is trivial.
  
  Now let $n\ge 2$ and suppose the property holds up to index $n-1$.
  If $w$ is not of the form \eqref{eq:idword1} at index $n$, then this may be for two reasons.
  \begin{itemize}
  \item There is $1\le k \le n-1$ such that $d_k = 0$ and $(P_k, P_{k+1})\notin \St_+\times \Te_+\cup \Te_+\times \St_+$.
    In this case, $P_k d_k P_{k+1} \sim P_k * P_{k+1}\in \St\cup \Te$,
    and $w \sim w' = d_0 P_1 d_1\dotsm d_{k-1} (P_k * P_{k+1}) d_{k+1} \dotsm P_n d_n$.
  \item There is $1\le k \le n$ such that $P_k \in \Id$.
    Then 
    \begin{itemize}
		\item either $P_{k-1}d_{k-1}P_kd_k \sim P_{k-1}d_{k-1}d_k \sim P_{k-1}(d_{k-1}+d_k)$ and $w \sim w' = d_0 P_1 d_1\dotsm P_{k-1} (d_{k-1} + d_k) P_{k+1} d_{k+1} \dotsm P_n d_n$ if $k>1$
		\item or $d_{k-1}P_kd_kP_{k+1} \sim d_{k-1}d_kP_{k+1}\sim (d_{k-1}+d_k)P_{k+1}$ and $w \sim w' = d_0 P_1 d_1\dotsm P_{k-1} (d_{k-1} + d_k) P_{k+1} d_{k+1} \dotsm P_n d_n$ if $k<n$
    \end{itemize}
  \end{itemize}
  In both cases, applying the induction hypothesis to $w'$ finishes the proof.
  \qed
\end{proof}

This is analogous to Lemma~\ref{le:iipoms-sparse},
except that here, we must admit successive starters or terminators
if they are separated by non-zero delays (see Example~\ref{ex:tipomsetToidw} below).

With concatenation of idwords inherited from the semigroup $(\St\cup \Te\cup \Realnn)^+$,
$\IDW$ forms a partial monoid.
Concatenations of sparse idwords are not generally sparse.
The identities for concatenation are the words $0\,\id_U \, 0 \sim 0\, \id_U \sim \id_U\, 0 \sim \id_U$ for $U\in \square$.

We can now provide the first injections $i_1$ and $i_2$ of Figure \ref{fig:lang-diag}:
\begin{itemize}
\item Given a delay word $w = d_0 a_1 d_1 \dots a_n d_n$,
  we let
  \begin{equation*}
    i_1(w)=d_0\, a_1\ibullet\, 0\, \ibullet a_1\, d_1 \dots a_n \ibullet\, 0\, \ibullet a_n\, d_n,
  \end{equation*}
  alternating starts and terminations of actions with $0$ delays in-between
  (and the original delays between different actions).
  In particular, $i_1(d) = d\,\id_\emptyset \sim d\,\id_\emptyset0$ for $d \in \Realnn$.
\item Given a sparse step sequence $P = P_1\dots P_n $,
  \begin{equation*}
    i_2(P)=0\, P_1\,0\dots 0\, P_n\, 0
  \end{equation*}
  is the (sparse) idword in which $0$ delays are inserted between all elements.
\end{itemize}

\subsection{Timed ipomsets}

Timed ipomsets are ipomsets with timestamps which mark beginnings and ends of events:

\begin{definition}
  Let $P$ be a set, $\sigma^-, \sigma^+: P\to \Realnn$, $\sigma=(\sigma^-, \sigma^+)$,
  and $d\in \Realnn$.
  Then $P=(P, {<_P}, {\evord}, S, T, \lambda, \sigma, d)$ is a \emph{timed ipomset (tipomset)} if
  \begin{itemize}
  \item $(P, {<_P}, {\evord}, S, T, \lambda)$ is an ipomset,
  \item for all $x \in P$, $0\le \sigma^-(x)\le \sigma^+(x)\le d$,
  \item for all $x \in S$, $\sigma^-(x) = 0$,
  \item for all $x \in T$, $\sigma^+(x) = d$, and
  \item for all $x, y \in P$, $\sigma^+(x) < \sigma^-(y) \implies x <_P y \implies \sigma^+(x) \le \sigma^-(y)$.
  \end{itemize}
\end{definition}
The \emph{activity interval} of event $x\in P$ is $\sigma(x)=[\sigma^-(x), \sigma^+(x)]$;
we will always write $\sigma(x)$ using square brackets because of this.
The \emph{untiming} of $P$ is its underlying ipomset,
\ie~$\unt(P)=(P, {<_P}, {\evord}, S, T, \lambda)$.
We will often write tipomsets as $(P, \sigma, d)$ or just $P$.

\begin{figure}[tbp]
  \centering
  \begin{tikzpicture}[x=.95cm, scale=1.1, every node/.style={transform shape}]
    \def\possh{-1.3}
    \begin{scope}[shift={(0,0)}]
      \def\hw{0.3}
      \draw[thick,-](0.015,0)--(0.015,1.7);
      \draw[thick,-](1.785,0)--(1.785,1.7);
      \fill[fill=green!50!white,-](0,1.2)--(1.8,1.2)--(1.8,1.2+\hw)--(0,1.2+\hw);
      \draw[-] (0,1.2)--(1.8,1.2);
      \draw[-] (1.8,1.2+\hw)--(0,1.2+\hw);
      \filldraw[fill=red!50!white,-](0.9,0.7)--(1.785,0.7)--(1.785,0.7+\hw)--(0.9,0.7+\hw)--(0.9,0.7);
      \fill[fill=blue!20!white,-](0,0.2)--(0.9,0.2)--(0.9,0.2+\hw)--(0,0.2+\hw)--(0,0.2);
      \draw[-] (0,0.2)--(0.9,0.2)--(0.9,0.2+\hw)--(0,0.2+\hw);
      \node at (1,1.2+\hw*0.5) {$a$};
      \node at (1.475,0.7+\hw*0.5) {$d$};
      \node at (0.6,0.2+\hw*0.5) {$c$};
      \node at (-.5,1.2+\hw*0.5) {$x_1$};
      \node at (-.5,0.7+\hw*0.5) {$x_2$};
      \node at (-.5,0.2+\hw*0.5) {$x_3$};

      \node at (0,2) {$0$};
      \node at (0.6,2) {$1$};
      \node at (1.2,2) {$2$};
      \node at (1.8,2) {$3$};
    \end{scope}
    \begin{scope}[shift={(5,0)}]
      \def\hw{0.3}
      \draw[thick,-](0.015,0)--(0.015,1.7);
      \fill[fill=green!50!white,-](0,1.2)--(1.2,1.2)--(1.2,1.2+\hw)--(0,1.2+\hw);
      \draw[-] (0,1.2)--(1.2,1.2)--(1.2,1.2+\hw)--(0,1.2+\hw);
      \filldraw[fill=pink!50!white,-](0.3,0.7)--(2.1,0.7)--(2.1,0.7+\hw)--(0.3,0.7+\hw)--(0.3,0.7);
      \filldraw[fill=blue!20!white,-](0.6,0.2)--(1.8,0.2)--(1.8,0.2+\hw)--(0.6,0.2+\hw)--(0.6,0.2);
      \draw[thick,-](2.4,0)--(2.4,1.7);
      \node at (0.7,1.2+\hw*0.5) {$a$};
      \node at (1.2,0.7+\hw*0.5) {$b$};
      \node at (1.35,0.2+\hw*0.5) {$c$};
      \node at (3,1.2+\hw*0.5) {$x_4$};
      \node at (3,0.7+\hw*0.5) {$x_5$};
      \node at (3,0.2+\hw*0.5) {$x_6$};

      \node at (0,2) {$0$};
      \node at (0.6,2) {$1$};
      \node at (1.2,2) {$2$};
      \node at (1.8,2) {$3$};
      \node at (2.4,2) {$4$};
    \end{scope}
  \end{tikzpicture}
  \caption{Tipomsets $P_1$ (left) and $P_2$ (right) of Example~\ref{ex:tipomset_bis}.
    Names of events are shown for convenience, outside the activity intervals.
  }
  \label{fi:tipomset1-uli}
\end{figure}

\begin{example}
  \label{ex:tipomset_bis}
  Figure~\ref{fi:tipomset1-uli} depicts the following tipomsets:
  \begin{itemize}
  \item $P_1 = (\{x_1,x_2,x_3\}, <_1, {\evord}, \{x_1, x_3\}, \{x_1\}, \lambda_1, \sigma_1,3)$ with
    \begin{itemize}
    \item ${<_1} = \{(x_3, x_2) \}$, ${\evord}=\{(x_1, x_2), (x_1, x_3)\}$,
    \item $\lambda_1(x_1) = a$, $\lambda_1(x_2) = d$, $\lambda_1(x_3) = c$, and
    \item $\sigma_1(x_1) = [0, 3], \sigma_1(x_2) = [1.5, 3], \sigma_1(x_3) = [0, 1.5]$
    \end{itemize}
  \item $P_2 = (\{x_4,x_5,x_6\}, <_2, x_4 \evord x_5 \evord x_6, \{x_4\}, \emptyset, \lambda_2, \sigma_2, 4)$ with
    \begin{itemize}
    \item ${<_2} = \emptyset$, $\lambda_2(x_4) = a$, $\lambda_2(x_5) = b$, $\lambda_2(x_6) = c$, and
    \item $\sigma_2(x_4) = [0, 2], \sigma_2(x_5) = [0.5, 3.5], \sigma_2(x_6) = [1, 3]$.
    \end{itemize}
  \end{itemize}
  Note that in $P_1$, the $d$-labeled event $x_2$ is not in the terminating interface as it ends exactly at time $3$.
  Further, the precedence order is \emph{not} induced by the timestamps in $P_1$:
  we have $\sigma_1^+(x_3) = \sigma_1^-(x_2)$ but $x_3 <_1 x_2$;
  setting ${<_1}=\emptyset$ instead would \emph{also} be consistent with the timestamps.
  For the underlying ipomsets,
  \begin{equation*}
    \unt(P_1) = \bigloset{\ibullet a \ibullet \\ \ibullet c\to d}, \qquad
    \unt(P_2) = \bigloset{\ibullet a \\ \pibullet b \\ \pibullet c}.
  \end{equation*}
\end{example}

We generalize the gluing composition of ipomsets to tipomsets.

\begin{definition}
  Given two tipomsets $(P, \sigma_P, d_P)$
  and
  $(Q, \sigma_Q, d_Q)$,
  the gluing composition $P*Q$ is defined
  if $\unt(P)*\unt(Q)$ is.
  Then, $P*Q = (U, \sigma_U, d_U)$, where
  \begin{itemize}
  \item $U = P*Q$ and $d_U = d_P + d_Q$,
  \item $\sigma_U^-(x)=\sigma_P^-(x)$ if $x\in P$ and $\sigma_U^-(x)=\sigma^-_{Q}(x) + d_P$ else,
  \item $\sigma_U^+(x)=\sigma_Q^+(x) + d_P$ if $x\in Q$ and $\sigma_U(x)=\sigma^+_{P}(x)$ else.
  \end{itemize}
\end{definition}

The above definition is consistent for events $x\in T_P=S_Q$:
here, $\sigma_U^-(x)=\sigma_P^-(x)$ and $\sigma_U^+(x)=\sigma_Q^+(x) + d_P$.

\begin{figure}[tbp]
  \centering
  \begin{tikzpicture}
    \def\hw{0.3}
    \draw[thick,-](0.015,0)--(0.015,1.7);
    \draw[thick,-](4.185,0)--(4.185,1.7);
    \fill[fill=green!50!white,-](0,1.2)--(3,1.2)--(3,1.2+\hw)--(0,1.2+\hw);
    \draw[-] (0,1.2)--(3,1.2) -- (3,1.2+\hw)--(0,1.2+\hw);
    \filldraw[fill=red!50!white,-](0.9,0.7)--(1.785,0.7)--(1.785,0.7+\hw)--(0.9,0.7+\hw)--(0.9,0.7);
    \fill[fill=blue!20!white,-](0,0.2)--(0.9,0.2)--(0.9,0.2+\hw)--(0,0.2+\hw)--(0,0.2);
    \draw[-] (0,0.2)--(0.9,0.2)--(0.9,0.2+\hw)--(0,0.2+\hw);

    \filldraw[fill=pink!50!white,-](2.1,0.7)--(3.9,0.7)--(3.9,0.7+\hw)--(2.1,0.7+\hw)--(2.1,0.7);
    \filldraw[fill=blue!20!white,-](2.4,0.2)--(3.6,0.2)--(3.6,0.2+\hw)--(2.4,0.2+\hw)--(2.4,0.2);

    \node at (1.25,1.2+\hw*0.5) {$a$};
    \node at (1.45,0.7+\hw*0.5) {$d$};
    \node at (0.65,0.2+\hw*0.5) {$c$};
    \node at (2.9,0.7+\hw*0.5) {$b$};
    \node at (3.2,0.2+\hw*0.5) {$c$};
    \node at (-.5,1.2+\hw*0.5) {$x_1$};
    \node at (-.5,0.7+\hw*0.5) {$x_2$};
    \node at (-.5,0.2+\hw*0.5) {$x_3$};
    \node at (4.8,1.2+\hw*0.5) {$x_4$};
    \node at (4.8,0.7+\hw*0.5) {$x_5$};
    \node at (4.8,0.2+\hw*0.5) {$x_6$};

    \node at (0,2) {$0$};
    \node at (0.6,2) {$1$};
    \node at (1.2,2) {$2$};
    \node at (1.8,2) {$3$};
    \node at (2.4,2) {$4$};
    \node at (3,2) {$5$};
    \node at (3.6,2) {$6$};
    \node at (4.2,2) {$7$};
  \end{tikzpicture}
  \caption{Gluing $P_1*P_2$, see Example~\ref{ex:gluing}.
    Again, event names are displayed outside of the corresponding activity intervals.
    The $a$ action extends over the events $x_1$ and $x_4$ which have been glued.}
  \label{fi:tipomset1_bis-uli}
\end{figure}

\begin{example}
  \label{ex:gluing}
  Continuing Example~\ref{ex:tipomset_bis}, Figure~\ref{fi:tipomset1_bis-uli} depicts the gluing of $P_1$ and $P_2$,
  which is the tipomset
  $P = (\{x_1,x_2,x_3,x_5,x_6\}, {<}, {\evord}, \{x_1, x_3\}, \emptyset, \lambda, \sigma, 7)$
  with
  \begin{itemize}
  \item ${<} = \{(x_3, x_2), (x_2, x_5), (x_3, x_5), (x_2, x_6), (x_3, x_6)\}$,
  \item ${\evord} = \{(x_1, x_2), (x_1, x_3), (x_1, x_5), (x_5, x_6), (x_1, x_6)\}$,
  \item $\lambda(x_1) = a$, $\lambda(x_2) = d$, $\lambda(x_3) = c$, $\lambda(x_5) = b$, $\lambda(x_6) = c$,
  \item $\sigma(x_1) = [0, 5]$, $\sigma(x_2) = [1.5, 3]$, $\sigma(x_3) = [0, 1.5]$, $\sigma(x_5) = [3.5, 6.5]$, and
    $\sigma(x_6) = [4, 6]$.
  \end{itemize}
  In this example, events $x_1$ and $x_4$ have been glued;
  more formally we have $f(x_1) = x_4$ in the unique isomorphism $f$ between
  the terminator of $P_1$ and the starter of $P_2$.
  Thus
  \begin{equation*}
    \unt(P) = \unt(P_1) * \unt(P_2) =\left[\!
      \vcenter{\hbox{
          \tikz[x=1.4cm, y=1cm]{
            \node (0) at (-.8,0) {$\!\!\ibullet a$};
            \node (1a) at (-0.20, -0.45) {$d$};
            \node (1b) at (0.4, -0.45) {$b$};
            \node (2a) at (-0.8, -.45) {$\!\!\ibullet c\vphantom{d}$};
            \node (2b) at (0.4, -1) {$c$};
            \draw[-latex] (1a) to (1b);
            \draw[-latex] (2a) to (1a);
            \draw[-latex] (1a) to (2b);
          }}}
      \!\right].
  \end{equation*}
\end{example}

The next lemma, whose proof is trivial, shows that untiming respects gluing composition.

\begin{lemma}
  \label{lem:untRespectsGlue}
  For all tipomsets $P$ and $Q$, $P*Q$ is defined iff\/ $\unt(P)*\unt(Q)$ is,
  and in that case, $\unt(P) * \unt(Q) = \unt(P*Q)$. \qed
\end{lemma}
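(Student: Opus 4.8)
The plan is to simply unwind the definitions: both halves of the statement are essentially immediate. For the first half, the definition of gluing for tipomsets says \emph{verbatim} that $P*Q$ is defined exactly when $\unt(P)*\unt(Q)$ is, so there is nothing to prove beyond quoting it.

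For the equality $\unt(P)*\unt(Q)=\unt(P*Q)$, I would again appeal directly to the definition of $P*Q$: when it is defined it is given as a tuple $(U,\sigma_U,d_U)$ in which $U$ \emph{is declared to be} the ipomset $\unt(P)*\unt(Q)$ — that is, its carrier set, the orders $<_U$ and $\evord$, the interfaces $S_P$ and $T_Q$, and the labelling are precisely those of the ipomset gluing — while $\sigma_U$ and $d_U$ are the newly added timing data. Since $\unt(-)$ forgets exactly $\sigma$ and $d$, applying it to $(U,\sigma_U,d_U)$ returns $U=\unt(P)*\unt(Q)$ directly.

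The only point that warrants a genuine (if short) argument is the implicit well-formedness claim that the tuple produced by the definition really is a tipomset, i.e.\ that it meets all the conditions in the definition of tipomset. That $U$ is an ipomset is exactly the hypothesis under which the gluing is defined. The bounds $0\le\sigma_U^-(x)\le\sigma_U^+(x)\le d_U$ and the interface conditions ($\sigma_U^-(x)=0$ for $x\in S_P$, $\sigma_U^+(x)=d_U$ for $x\in T_Q$) follow by writing $U=(P\setminus T_P)\,\cup\,T_P\,\cup\,(Q\setminus S_Q)$ (recall $T_P=S_Q$), substituting the corresponding bounds for $P$ and $Q$, and using $d_U=d_P+d_Q$ together with the consistency equalities $\sigma_U^-(x)=\sigma_P^-(x)$, $\sigma_U^+(x)=\sigma_Q^+(x)+d_P$ noted after the definition for the shared events. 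For the last condition, $\sigma_U^+(x)<\sigma_U^-(y)\Rightarrow x<_U y\Rightarrow\sigma_U^+(x)\le\sigma_U^-(y)$, I would do a case split on the locations of $x$ and $y$ with respect to this same three-way partition: when both lie in $P$ (resp.\ both in $Q$) it reduces to the corresponding property of $P$ (resp.\ of $Q$), since there $<_U$ agrees with $<_P$ (resp.\ $<_Q$) and $\sigma_U$ is $\sigma_P$ (resp.\ $\sigma_Q$ shifted by the constant $d_P$), a $<_P$-maximal or $<_Q$-minimal shared event making the relevant sub-case vacuous; when $x\in P\setminus T_P$ and $y\in Q\setminus S_Q$ we have $x<_U y$ by the third disjunct of the ipomset-gluing order and $\sigma_U^+(x)\le d_P\le\sigma_U^-(y)$; and the remaining mixed sub-cases (e.g.\ $x\in Q\setminus S_Q$, $y\in P$) are ruled out because then $\sigma_U^+(x)\ge d_P\ge\sigma_U^-(y)$, so the hypothesis cannot hold.

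There is no real obstacle here — as the paper says, the proof is trivial. If one insists on naming the most delicate spot, it is the bookkeeping of the shared interface events $T_P=S_Q$ in the case analysis, which belong to both $P$ and $Q$ and require the stated consistency of $\sigma_U$ on them.
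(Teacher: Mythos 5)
Your proposal is correct and matches the paper, which omits the proof entirely as trivial: both halves are indeed immediate from the definition of tipomset gluing, whose first clause is the definedness equivalence and whose second clause declares the underlying ipomset of $P*Q$ to be $\unt(P)*\unt(Q)$. Your additional verification that the glued tuple satisfies the tipomset axioms (in particular the case analysis for $\sigma_U^+(x)<\sigma_U^-(y)\Rightarrow x<_U y\Rightarrow\sigma_U^+(x)\le\sigma_U^-(y)$ over the partition $P\setminus T_P$, $T_P=S_Q$, $Q\setminus S_Q$) is a correct and welcome addition that the paper leaves implicit.
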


\begin{definition}
  \label{def:tisomorphism}
  An \emph{isomorphism} of tipomsets $(P, \sigma_P, d_P)$
  and $(Q, \sigma_Q, d_Q)$ is an ipomset isomorphism $f: P\to Q$ for which
  $\sigma_P = \sigma_Q\circ f$ and $d_P = d_Q$.
\end{definition}

In other words, two tipomsets are isomorphic
if they share the same activity intervals, durations, precedence order, interfaces, and essential event order.
As for (untimed) ipomsets, isomorphisms between tipomsets are unique,
hence we may switch freely between tipomsets and their isomorphism classes.

\begin{remark}
  Analogously to ipomsets,
  one could define a notion of subsumption for tipomsets
  such that isomorphisms would be invertible subsumptions.
  We refrain from doing this here,
  mostly because we have not seen any need for it.
  Note that as per Example~\ref{ex:thda-ex3-lang} below,
  untimings of HDTA languages are not closed under subsumption.
\end{remark}

\subsection{Translations}
\label{se:tglue}

We may now provide the mappings $i_3$ and $i_4$ of Figure \ref{fig:lang-diag}:
\begin{itemize}
\item Let $w=(a_1, t_1)\dotsc (a_n, t_n)\, t_{n+1}$ be a timed word.
  We define
  \begin{equation*}
    i_3(w) = (\{x_1,\dotsc, x_n\}, <, \sigma, \emptyset, \emptyset, \emptyset, \lambda, d)
  \end{equation*}
  to be the tipomset with
  $x_i < x_j \iff i < j$, $\sigma(x_i) = [t_i, t_i]$,
  $\lambda(x_i) = a_i$, and $d = t_{n+1}$.
\item Finally, given an ipomset $P$, we set
  \begin{equation*}
    i_4(P) = (P, \sigma, d)
  \end{equation*}
  to be the tipomset with $d=0$ and $\sigma(x)=[0, 0]$ for all $x\in P$.
\end{itemize}
The following is now clear.

\begin{lemma}
  \label{le:ldiag-inj}
  The vertical mappings $i_1,\dots,i_4$ of Figure~\ref{fig:lang-diag} are injective
  and commute with the horizontal bijections.
  \qed
\end{lemma}

To complete the translations presented in Figure~\ref{fig:lang-diag}, that started with $\Phi$ in Lemma~\ref{le:dw_to_tw} and $\Glue$ in Lemma~\ref{le:glue}, we now provide a definition for $\tGlue$.
Let $d_0 P_1 d_1\dotsc P_n d_n$ be an idword in sparse normal form.
Define the ipomset $P=P_1*\dotsc*P_n$
and let $d_P=\sum_{i=0}^{n}d_i$.
In order to define the activity intervals,
let $x\in P$ and denote
\begin{equation*}
  \first(x) = \min\{i \mid x\in P_i\}, \qquad
  \last(x) = \max\{i \mid x\in P_i\}.
\end{equation*}
Then $\sigma^-(x)=\sum_{i=0}^{\first(x)-1}\! d_i$ and  $\sigma^+(x)=\sum_{i=0}^{\last(x)-1}\! d_i$.
Using Lemma~\ref{le:tipoms-sparse}, this defines a mapping $\tGlue$ from idwords to tipomsets.

\begin{example}
  \label{ex:tipomsetToidw}
  Tipomset $P$ of Example~\ref{ex:gluing} is the translation of the
  following sparse idword:
  \begin{equation*}
     1.5 \loset{\ibullet
      a \ibullet \\ \ibullet c \pibullet} 0 \loset{\ibullet a \ibullet \\ \pibullet d \ibullet}
    1.5 \loset{\ibullet a \ibullet \\ \ibullet d\pibullet} 0.5 \loset{\ibullet a
      \ibullet \\ \pibullet b \ibullet} 0.5 \loset{\ibullet a \ibullet \\ \ibullet b
      \ibullet \\ \pibullet c \ibullet} 1 \loset{\ibullet a\pibullet \\ \ibullet b \ibullet \\
      \ibullet c \ibullet} 1 \loset{\ibullet b \ibullet \\ \ibullet c\pibullet} 0.5
    \, \ibullet b\, 0.5
  \end{equation*}
\end{example}

\begin{lemma}
  \label{lem:biIdwTiPoms}
  The mapping $\tGlue$ is a bijection between idwords and tipomsets.
\end{lemma}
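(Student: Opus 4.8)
The plan is to construct an explicit inverse of $\tGlue$ by \emph{slicing} a tipomset at all times where some event starts or terminates, and then to check that both composites are the identity; this establishes injectivity and surjectivity at once.

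Before that I would check that $\tGlue$ is well defined and lands in the tipomsets. Well-definedness is immediate from Lem.~\ref{le:tipoms-sparse}, which provides the unique sparse representative on which $\tGlue$ operates. The underlying ipomset of the output is interval because it lies in $\iiPoms$ by the bijection $\Glue\colon\StepS\to\iiPoms$, and the interface and duration conditions are read off directly from the definitions of $\first$, $\last$ and $d_P$. The one substantial point is the axiom $\sigma^+(x) < \sigma^-(y) \Rightarrow x <_P y \Rightarrow \sigma^+(x) \le \sigma^-(y)$: it follows from the description of precedence in a gluing $P_1*\dots*P_n$ as ``$x <_P y$ iff $\last(x) < \first(y)$'' (the inequality is \emph{strict} because in a sparse word a single step is either a proper starter or a proper terminator, hence cannot simultaneously terminate $x$ and start $y$) combined with the fact that the partial sums $\sum_{i<k}d_i$ are non-decreasing in $k$.

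For the inverse $\psi$, given a tipomset $(P,\sigma,d)$ let $0=u_0<u_1<\dots<u_p=d$ enumerate $\{0,d\}\cup\sigma^-(P)\cup\sigma^+(P)$. On each open interval $(u_{j-1},u_j)$ the active set is constant, and for each $j$ I would let $R_{u_j}$ be the restriction of $P$ to the events active at time $u_j$, with source interface the events active just before $u_j$ (taken to be $S_P$ for $j=0$) and target interface those active just after (taken to be $T_P$ for $j=p$); a short argument using the tipomset axiom shows $R_{u_j}$ is a well-formed interval ipomset. By Lem.~\ref{le:iipoms-sparse}, $R_{u_j}$ has a unique sparse step decomposition into non-identity, alternating starters and terminators, and I would set $\psi(P)$ to be the idword obtained by concatenating these blocks in order, inserting the delay $u_j-u_{j-1}$ between the block at $u_{j-1}$ and the block at $u_j$, and padding with $\id_{S_P}$ on the left and $\id_{T_P}$ on the right. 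This word is coherent (it glues to $P$) and is already in the sparse form of Lem.~\ref{le:tipoms-sparse}: every listed step is a proper starter or terminator; two consecutive steps separated by a zero delay must belong to a common block $R_{u_j}$ and therefore alternate; and steps in different blocks are separated by the strictly positive delay $u_j-u_{j-1}$, so alternation is not required there.

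Finally I would verify $\tGlue\circ\psi=\id$ and $\psi\circ\tGlue=\id$. For the first, by Lem.~\ref{lem:untRespectsGlue} the underlying ipomset of $\tGlue(\psi(P))$ telescopes to $R_{u_0}*\dots*R_{u_p}$, which equals $P$ (each event lies in exactly those $R_{u_j}$ with $u_j\in[\sigma^-(x),\sigma^+(x)]$, the restricted event orders reassemble $\evord_P$, and precedence is threaded correctly through the matching interfaces, using the tipomset axiom in the boundary case $\sigma^+(x)=\sigma^-(y)$); and the timestamps agree with $\sigma$ because $x$ is started in the block at time $\sigma^-(x)$ and terminated in the block at time $\sigma^+(x)$, so the accumulated-delay formula defining $\tGlue$ returns exactly $\sigma^\pm(x)$. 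For the second, starting from a sparse idword, the event-times of its $\tGlue$-image are its accumulated-delay values, and the block of steps sitting at each such time is by construction a sparse step decomposition of the corresponding local ipomset, hence by uniqueness (Lem.~\ref{le:iipoms-sparse}) the one $\psi$ recovers; uniqueness of sparse idword representatives (Lem.~\ref{le:tipoms-sparse}) then gives equality in $\IDW$. I expect the main obstacle to be this last bookkeeping: keeping $S_P$ and $T_P$ straight --- they are genuine extra data not recoverable from $\sigma$, which is exactly why the $\id_{S_P},\id_{T_P}$ padding is needed --- and, at times where several events start and terminate simultaneously, showing that passing through the local ipomset and its unique sparse decomposition reproduces the precedence relations faithfully, which is precisely where uniqueness of sparse step decompositions does the work.
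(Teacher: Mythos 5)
Your proof is correct. It rests on the same two pillars as the paper's own argument --- uniqueness of sparse step decompositions (Lem.~\ref{le:iipoms-sparse}) and of sparse idwords (Lem.~\ref{le:tipoms-sparse}), with delays recovered from accumulated timestamps via $\first$ and $\last$ --- but the execution is genuinely different. The paper argues injectivity and surjectivity directly: it identifies the untimed parts of two sparse idwords with the same image by appealing to Lem.~\ref{le:iipoms-sparse}, and for surjectivity inserts delays into the sparse step decomposition of $\unt(P)$. You instead build an explicit two-sided inverse by slicing $P$ at the times in $\{0,d\}\cup\sigma^-(P)\cup\sigma^+(P)$ and sparsely decomposing each local block. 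Your route is the more careful one: the untimed projection of a sparse idword is in general \emph{not} a sparse step sequence (consecutive starters may be separated by positive delays, cf.\ Ex.~\ref{ex:tipomsetToidw}), and dually the sparse step decomposition of $\unt(P)$ may merge into a single starter events whose start times differ, so uniqueness of sparse step decompositions can only be applied \emph{blockwise}, after the timestamps have been used to cut the word into blocks. Your slicing makes this cut explicit, and your observation that precedence among events active at a common slicing time can only relate events terminating exactly there to events starting exactly there (forced by $x<_P y\Rightarrow\sigma^+(x)\le\sigma^-(y)$) is precisely what makes each local block a well-formed ipomset whose unique sparse decomposition carries the order information that the timestamps alone cannot (Rem.~\ref{re:sadly}). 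The only cost of your approach is the extra bookkeeping you already acknowledge; there is no gap.
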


\begin{proof}
  To see injectivity, let $w=d_0 P_1 d_1\dots P_n d_n$ and $w'=d_0' P_1' d_1'\dotsc$ $P_n' d_m'$
  be sparse idwords such that $P=\tGlue(w)=\tGlue(w')$.
  By Lemma~\ref{le:iipoms-sparse}, $n=m$ and $(P_1,\dots, P_n)=(P_1',\dots, P_m')$.
  Hence also $\first(x)$ (resp. $\last(x))$ in $w$ is equal to $\first(x)$ (resp. $\last(x))$ in $w'$ for all $x\in P$,
  and by induction, $d_i=d_i'$ for all $i$.

  To show surjectivity, let $P$ be a tipomset,
  $P=P_1*\dots*P_n$ its unique sparse decomposition given by Lemma~\ref{le:iipoms-sparse}.
  Again by induction, we can use $\first$ and $\last$ to define the delays $d_0,\dots, d_n$,
  and then $P=\tGlue(d_0 P_1 d_1\dotsm P_n d_n)$.
  \qed
\end{proof}

\section{Languages of HDTAs}
\label{se:languages}

We are now ready to introduce languages of HDTAs as sets of timed ipomsets.
Let $A=(\Sigma, C, Q, \bot, \top, \inv, \exit)$ be an HDTA
and $\sem{A}=(S, {\leadsto}, S^\bot, S^\top, \stc)$.
A \emph{path} $\pi$ in $\sem A$ is a finite non-empty sequence of consecutive moves
$s_1\leadsto s_2\leadsto \dotsm\leadsto s_n$,
where each $s_i \leadsto s_{i+1}$ is either
$s_i \delayMove{d} s_{i+1}$ for $d\in \Realnn$,
$s_i \upMove{U} s_{i+1}$ for $U\in \St$,
or $s_i \downMove{U} s_{i+1}$ for $U\in \Te$.
(Again we do not need to consider empty paths because we always have $0$-delay transitions available.)
As usual, $\pi$ is \emph{accepting} if $s_1\in S^\bot$ and $s_n\in S^\top$.

\begin{definition}
  \label{def:evpath}
  The observable content $\ev(\pi)$ of a path $\pi$ in $\sem A$ is the tipomset
  $(P, {<_P}, {\evord_P}, S_P, T_P, \lambda_P, \sigma_P, d_P)$ defined recursively as follows:
  \begin{itemize}
  \item if  $\pi =  (q, v) \delayMove{d} (q, v+d)$, then $(P, <_P, {\evord_P}, S_P, T_P, \lambda_P) =  \id_{\ev(q)}$,
    $\sigma_P(x) = [0,d]$ for all $x \in P$, and $d_P=d$;
  \item if $\pi = (q_{1},v_{1}) \upMove{U} (q_2,v_2)$, then
    $(P, <_P, {\evord_P}, S_P, T_P, \lambda_P) = U$, $\sigma_P(x) = [0,0]$ for all $x \in P$,
    and $d_P=0$;
  \item if $\pi = (q_{1},v_{1}) \downMove{U} (q_2,v_2)$, then
    $(P, <_P, {\evord_P}, S_P, T_P, \lambda_P) = U$, $\sigma_P(x) = [0,0]$ for all $x \in P$,
    and $d_P=0$;
  \item if $\pi=\pi_1 \pi_2$, then $\ev(\pi) = \ev(\pi_1) * \ev(\pi_2)$.
  \end{itemize}
\end{definition}

\begin{definition}
  The \emph{language} of an HDTA $A$ is
  \begin{equation*}
    \Lang(A)=\{\ev(\pi) \mid \pi \text{ accepting path of } A\}.
  \end{equation*}
\end{definition}

\begin{remark}
  \label{rem:idwOrtipomsets}
  With a few simple changes to Definition~\ref{def:evpath} above,
  we can define the observable content of an HDTA path as an idword instead of a tipomset.
  (By Lemma \ref{lem:biIdwTiPoms} this is equivalent.)
  If we define $\ev((q,v) \delayMove{d} (q, v+d)) = d. \id_{\ev(q)}$ in the second case above
  and use concatenation of idwords instead of gluing composition in the last case,
  then $\ev(\pi)\in \IDW$.
  Indeed, this would be the natural definition of the language of $\sem{A}$
  seen as a real-time extended ST-automaton, \cf~Remark \ref{re:rtst}.
  Then $\Lang(A)=\tGlue(\Lang(\sem{A}))$,
  and we may see the language of an HDTA as a set of tipomsets or as a set of idwords.
\end{remark}

\begin{figure}[tbp]
  \centering
  \begin{tikzpicture}
    \begin{scope}[shift={(2,0)}]
      \def\hw{0.3}
      \draw[thick,-](0.0,0.5)--(0.0,1.7);
      \draw[thick,-](7.2,0.5)--(7.2,1.7);
      \filldraw[fill=green!50!white,-](3,1.2)--(4.8,1.2)--(4.8,1.2+\hw)--(3,1.2+\hw)--(3,1.2);
      \filldraw[fill=red!50!white,-](4.2,0.7)--(5.7,0.7)--(5.7,0.7+\hw)--(4.2,0.7+\hw)--(4.2,0.7);
      \node at (4,1.2+\hw*0.5) {$a$};
      \node at (5,0.7+\hw*0.5) {$b$};
      \foreach \x in {0,2,4,6,8,10,12} \node at (\x*0.6, 2) {$\x$};
    \end{scope}
  \end{tikzpicture}
  \caption{Tipomset of accepting path in HDTA of Example~\ref{ex:thda-ex3-lang}}
  \label{fi:tipomset_path}
\end{figure}

\begin{example}
  \label{ex:thda-ex3-lang}
  We compute the language of the HDTA $A$ of
  Figure~\ref{fi:thda-ex3} (page \pageref{fi:thda-ex3}).  As both vertical transitions are disabled,
  any accepting path must proceed along the location sequence
  $(q_0, e_1, u, e_4, q_3)$.  The general form of accepting paths is thus
  \begin{align*}
    \pi = (q_0, v^0)
    &\delayMove{d_1} (q_0, v^0+ d_1)
    \upMove{a} (e_1, v_2)
    \delayMove{d_2} (e_1, v_2+ d_2) \\
    &\upMove{b} (u, v_3)
    \delayMove{d_3} (u, v_3+ d_3)
    \downMove{b} (e_4, v_4) \\
    &\delayMove{d_4} (e_4, v_4+ d_4)
    \downMove{a} (q_3, v_5)
    \delayMove{d_5} (q_3, v_5+ d_5).
  \end{align*}
  
  There are no conditions on $d_1$, as both clocks $x$ and $y$ are reset when leaving $q_0$.
  The conditions on $x$ at the other four locations force $1\le d_2\le 4$, $1\le d_2+ d_3\le 4$, and
  $2\le d_2+ d_3+ d_4\le 5$.  As $y$ is reset when leaving $e_1$, we
  must have $1\le d_3\le 3$ and $1\le d_3+ d_4$, and the condition on
  $z$ at $q_3$ forces $1\le d_4$.
  As there are no upper bounds on clocks in $q_3$, there are no constraints on $d_5$.

  To sum up, $\Lang(A)$ is the set  of tipomsets
  \begin{equation*}
    (\{x_1,x_2\},\emptyset,x_1 \evord x_2,\emptyset,\emptyset,\lambda,\sigma,d_1 + \dots + d_5)
  \end{equation*}
  with $\lambda(x_1) = a$, $\lambda(x_2) = b$, $\sigma(x_1) = [d_1,d_1+\dots+d_4]$ and $\sigma(x_2) = [d_1+d_2,d_1+d_2+d_3]$,
  or equivalently the set of idwords
  \begin{equation*}
    d_1\, a \ibullet\, d_2 \loset{\ibullet a \ibullet \\ b \ibullet} d_3 \loset{\ibullet a  \\ \ibullet b \ibullet} d_4\, \ibullet b\, d_5
  \end{equation*}
  in which the delays satisfy the conditions above.
  As an example,
  \begin{align*}
    \pi = (q_0&, (0,0,0))
    \delayMove{5} (q_0, (5,5,5))
    \upMove{a} (e_1, (0,0,5))
    \delayMove{2} (e_1, (2,2,7)) \\
    &\upMove{b} (u, (2,0,7))
    \delayMove{1} (u, (3,1,7))
    \downMove{b} (e_4, (3,1,0)) \\
    &\delayMove{1.5} (e_4, (4.5,2.5,1.5))
    \downMove{a} (q_3, (4.5,2.5,1.5))
    \delayMove{2.5} (q_3, (7,5,4))
  \end{align*}
  is an accepting path whose associated tipomset is depicted in Figure~\ref{fi:tipomset_path}.
  Its idword is
  \begin{equation*}
    5\, a \ibullet\, 2 \loset{\ibullet a \ibullet \\ b \ibullet}
    1 \loset{\ibullet a  \\ \ibullet b \ibullet} 1.5\, \ibullet b\, 4.5
  \end{equation*}
  Note that $\unt(\Lang(A)) = \{\loset{a \\ b}\}$ which is not closed under subsumption.
\end{example}

\begin{remark}
  \label{re:sadly}
  We give an example which shows that 
  the precedence order of a tipomset
  \emph{cannot} generally be induced from the timestamps.
  Figure \ref{fi:twodiff} depicts two HDTAs in which events labeled $a$ and $b$ happen instantly.
  On the left, $a$ precedes $b$, and the language consists of the tipomset $ab$
  with duration $0$ and $\sigma(a)=\sigma(b)=[0, 0]$.
  On the right, $a$ and $b$ are concurrent, and the language
  is the set $\left\lbrace \loset{a\\b}, ab, ba \right\rbrace$ all sharing the same
  duration and timestamps.
\end{remark}

\begin{figure}[tbp]
  \centering
  \begin{tikzpicture}[>=stealth', x=1.3cm, y=.6cm]
    \begin{scope}
      \node[state, initial left] (00) at (0,0) {};
      \node[state] (10) at (2,0) {};
      \node[state, accepting] (20) at (2,2) {};
      \path (00) edge node[swap] {$a$} (10);
      \path (10) edge node[swap] {$b$} (20);
      \node[below] at (00.south) {$x \gets 0$};
      \node[above] at (20.north) {$x \le 0$};
    \end{scope}
    \begin{scope}[xshift=7cm]
      \path[fill=black!15] (0,0) -- (2,0) -- (2,2) -- (0,2);
      \node[state, initial left] (00) at (0,0) {};
      \node[state] (10) at (2,0) {};
      \node[state] (01) at (0,2) {};
      \node[state, accepting] (11) at (2,2) {};
      \node at (1,1) {$\loset{a \\ b}$};
      \path (00) edge node[swap] {$a$} (10);
      \path (00) edge node {$b$} (01);
      \path (10) edge node[swap] {$b$} (11);
      \path (01) edge node {$a$} (11);
      \node[below] at (00.south) {$x \gets 0$};
      \node[above] at (11.north) {$x \le 0$};
    \end{scope}
  \end{tikzpicture}
  \caption{Two HDTAs pertaining to Remark~\ref{re:sadly}}
  \label{fi:twodiff}
\end{figure}

\section{From Timed Automata to HDTAs}
\label{se:ta-to-hdta}

\cite{DBLP:journals/lites/Fahrenberg22} introduces a translation from timed automata to HDTAs
which we review below.
We show that the translation preserves languages.
It is not simply an embedding of timed automata as one-dimensional HDTAs,
as transitions in HDTAs are not instantaneous.
We use an extra clock to force immediacy of transitions
and write $i_1(w)$ for the idword induced by a delay word $w$ below, as presented in Section~\ref{se:idw}.

Let $A=(\Sigma, C, Q, \bot, \top, I, E)$ be a timed automaton and $C'=C\sqcup\{c_T\}$,
the disjoint union.
In the following, we denote the components of a transition $e = (q_e, \phi_e, \ell_e, R_e, q_e')\in E$.
We define the HDTA $H(A) = (\Sigma, C', L, \bot, \top, \inv, \exit)$ by $L=Q\sqcup E$ and,
for $q\in Q$ and $e\in E$,
\begin{gather*}
  \ev(q)=\emptyset, \qquad
  \ev(e)=\{\ell_e\}, \qquad
  \delta_{\ell_e}^0(e)=q_e\,,
  \qquad
  \delta_{\ell_e}^1(e)=q_e'\,, \\
  \inv(q)=I(q), \qquad
  \exit(e)=R_e, \qquad
  \inv(e) = \phi_e \land c_T \le 0, \qquad
  \exit(q) = \{c_T\}.
\end{gather*}

\begin{example}
  The HDTA on the left of Figure~\ref{fi:twodiff}
  is isomorphic to the translation of the timed automaton with the same depiction.
  (Because of the constraint $x\le 0$ in the accepting location,
  the extra clock $c_T$ may be removed.)
\end{example}

\begin{lemma}
  \label{lem:samePaths}
  For any $q_1, q_2\in Q$ and $v_1, v_2: C\to \Realnn$,
  $(q_1,v_1) \actionMove{a} (q_2,v_2)$ is an action move of $\sem A$ if and only if
  $(q_1,v'_1) \upMove{a} (e,v')$ and $(e,v') \downMove{a} (q_2,v'_2)$ are moves of $\sem {H(A)}$ such that
  \begin{itemize}
  \item for all $c \in C$, $v'_1(c) = v'(c) = v_1(c)$ and $v'_2(c) = v_2(c)$;
  \item $v'_1(c_T) \in \Realnn$ and $v'(c_T) = v'_2(c_T) = 0$.
  \end{itemize}
  In addition, $i_1(\ev((q_1,v_1) \actionMove{a} (q_2,v_2))) = \ev((q_1,v'_1) \upMove{a} (e,v') \downMove{a} (q_2,v'_2))$.
\end{lemma}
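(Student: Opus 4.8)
The plan is to unfold both sides against the definition of the operational semantics and check that the translation $H(A)$ was set up precisely so that a single instantaneous action move of $\sem A$ splits into an up-move followed by a down-move of $\sem{H(A)}$, with the auxiliary clock $c_T$ acting as a witness that no time elapses in between. First I would recall that an action move $(q_1,v_1)\actionMove{a}(q_2,v_2)$ of $\sem A$ exists iff there is a transition $e=(q_e,\phi_e,\ell_e,R_e,q_e')\in E$ with $q_e=q_1$, $q_e'=q_2$, $\ell_e=a$, $v_1\models\phi_e$, and $v_2=v_1[R_e\gets 0]$; this in particular requires $v_1\models I(q_1)$ and $v_2\models I(q_2)$ since $v_1,v_2$ must be states of $\sem A$. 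On the $H(A)$ side, since $\ev(e)=\{\ell_e\}$ is a singleton and $\delta^0_{\ell_e}(e)=q_e$, $\delta^1_{\ell_e}(e)=q_e'$, the up-move $(\delta^0_{\ell_e}(e),v'_1)\upMove{a}(e,v')$ exists iff $v'=v'_1[\exit(q_e)\gets 0]=v'_1[\{c_T\}\gets 0]$ and $v'\models\inv(e)=\phi_e\land c_T\le 0$, and the down-move $(e,v')\downMove{a}(\delta^1_{\ell_e}(e),v'_2)$ exists iff $v'_2=v'[\exit(e)\gets 0]=v'[R_e\gets 0]$ and $v'_2\models\inv(q_e')=I(q_e')$.

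For the forward direction I would take the given $e$ and set $v'_1=v_1$ on $C$ with $v'_1(c_T)$ arbitrary in $\Realnn$, then define $v'=v'_1[\{c_T\}\gets 0]$ and $v'_2=v'[R_e\gets 0]$; the point is that $R_e\subseteq C$ does not touch $c_T$, so $v'(c_T)=v'_2(c_T)=0$, while on $C$ we get $v'=v'_1=v_1$ and $v'_2=v_1[R_e\gets 0]=v_2$. The constraints to verify — $v'\models\phi_e\land c_T\le 0$ (using $v_1\models\phi_e$ and $v'(c_T)=0$), $v'_1\models\inv(q_e)=I(q_1)$, and $v'_2\models I(q_2)$ — all reduce to the corresponding facts about $v_1,v_2$ in $\sem A$, since $\phi_e$ and the invariants mention no clock in $C'\setminus C$ except the added conjunct $c_T\le 0$ which is satisfied by construction. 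The converse direction is symmetric: from the two moves of $\sem{H(A)}$ I read off the transition $e$ (it is the location of the middle cell), observe that the exit/reset bookkeeping forces $v'_1=v'=v_1$ and $v'_2=v_2$ on $C$ and $v'(c_T)=v'_2(c_T)=0$, and then $v_1\models\phi_e$, $v_2=v_1[R_e\gets 0]$, and the invariant conditions give exactly the action move of $\sem A$.

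Finally, for the statement about observable content I would compute both sides directly from Def.~\ref{def:evpath} and the definition of $\ev$ for timed-automaton paths: the action move $(q_1,v_1)\actionMove{a}(q_2,v_2)$ has $\ev = a$ as a delay word, whose induced idword $\idw(a)$ is $\id_\emptyset\, 0\, a\ibullet\,0\,\cdots$ — more precisely the single-symbol idword $a$ with no surrounding delays — while $\ev((q_1,v'_1)\upMove{a}(e,v')\downMove{a}(q_2,v'_2)) = \starter{\{a\}}{\{a\}} * \terminator{\{a\}}{\{a\}}$, which glues to the discrete ipomset $a$ with empty interfaces, i.e.\ the same idword. Since all four valuations agree with $v^0$-style conventions only up to $c_T$, which is invisible to $\ev$, the two observable contents coincide. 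I expect the main obstacle to be purely notational rather than conceptual: keeping straight which clock set each reset and each invariant ranges over (so that one genuinely sees $c_T$ untouched by $R_e$ and by the $C$-invariants), and matching the ad~hoc idword notation $\idw(w)$ — defined only informally in the surrounding text — against the gluing $\starter{U}{A}*\terminator{U}{A}$ coming from the HDTA path; once those bookkeeping conventions are pinned down, every verification is immediate.
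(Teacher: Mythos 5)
Your proposal is correct and follows essentially the same route as the paper's proof: both unfold the action move of $\sem A$ and the up/down move pair of $\sem{H(A)}$ against the definitions of the operational semantics and the construction of $H(A)$, with $c_T$ (untouched by $R_e\subseteq C$ and reset by $\exit(q_1)$) witnessing that the two HDTA moves correspond exactly to one instantaneous TA transition, and then verify $\idw(a)=a\ibullet\,0\,\ibullet a=\starter{\{a\}}{\{a\}}*\terminator{\{a\}}{\{a\}}$ directly from the definitions. The paper's version is merely terser, dismissing the second part as immediate where you compute it explicitly.
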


\begin{proof}
  Immediate from the construction. \qed
\end{proof}

\begin{lemma}
  \label{lem:sameTPaths}
  For any $a\in \Sigma$ and $d, d'\in \Realnn$,
  $d\, a\, d'$ is the label of some path in $\sem{A}$ if and only if
  $d\, a\ibullet\, 0\, \ibullet a\, d'$ is the label of some path in $\sem{H(A)}$.
\end{lemma}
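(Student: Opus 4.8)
\emph{The plan} is to reduce both directions to Lem.~\ref{lem:samePaths}, exploiting that the extra clock $c_T$ is invisible to the invariants $\inv(q)=I(q)$ of the $0$-cells $q\in Q$ of $H(A)$, while $\inv(e)=\phi_e\land c_T\le 0$ pins it to $0$ on the $1$-cells $e\in E$, forbidding any delay there. The preliminary step is a normalisation of paths. Consecutive delay moves $(q,v)\delayMove{d_1}(q,v+d_1)\delayMove{d_2}(q,v+d_1+d_2)$ of $\sem A$ or $\sem{H(A)}$ compose into a single one because the invariant conditions compose, and dually delay moves split; zero delays and empty up-/downsteps (whose observable content is an identity) may be inserted or removed freely. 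Hence a path of $\sem A$ with label $d\,a\,d'$ may be taken to be $(q_1,v_1)\delayMove{d}(q_1,v_1+d)\actionMove{a}(q_2,v_2)\delayMove{d'}(q_2,v_2+d')$, and, since $H(A)$ has dimension at most $1$, a path of $\sem{H(A)}$ with label $d\,a\ibullet\,0\,\ibullet a\,d'$ (already in sparse form, Lem.~\ref{le:tipoms-sparse}) may be taken to be $(q_1,w_1)\delayMove{d}(q_1,w_1+d)\upMove{a}(e,w_2)\downMove{a}(q_2,w_3)\delayMove{d'}(q_2,w_3+d')$ for a transition cell $e$ with $\ev(e)=\{a\}$, $\delta^0_{\{a\}}(e)=q_1$, $\delta^1_{\{a\}}(e)=q_2$: the single starter of the label comes from a non-empty upstep, which in $H(A)$ can only enter a $1$-cell $e$, from which the only non-trivial exit is the downstep terminating $a$. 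By construction of $H(A)$ this $e$ is a transition $(q_1,\phi_e,a,R_e,q_2)$ of $A$.

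For the direction $\sem A\Rightarrow\sem{H(A)}$, apply Lem.~\ref{lem:samePaths} to the action move $(q_1,v_1+d)\actionMove{a}(q_2,v_2)$: it yields an upstep $(q_1,u)\upMove{a}(e,u')$ and a downstep $(e,u')\downMove{a}(q_2,u'')$ of $\sem{H(A)}$ whose $C$-components are $v_1+d,\,v_1+d,\,v_2$; since $c_T$ is unconstrained at $q_1$ and reset upon entering $e$ (as $\exit(q_1)=\{c_T\}$) we may take $u(c_T)=d$, and then prepend the delay move $(q_1,\bar u)\delayMove{d}(q_1,u)$ with $\bar u|_C=v_1$, $\bar u(c_T)=0$ (legal since $\inv(q_1)=I(q_1)$ and the original path delays $d$ in $q_1$) and append $(q_2,u'')\delayMove{d'}(q_2,u''+d')$ (legal since $\inv(q_2)=I(q_2)$ and the original path delays $d'$ in $q_2$). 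By Rem.~\ref{rem:idwOrtipomsets} the label of this path is $d\,a\ibullet\,\ibullet a\,d'$, which equals $d\,a\ibullet\,0\,\ibullet a\,d'$ in $\IDW$.

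For the converse, start from the normalised path of $\sem{H(A)}$ and restrict all valuations to $C$. The delay moves become delay moves of $\sem A$ because $I(q_1),I(q_2)$ ignore $c_T$; and $w_2|_C=(w_1+d)|_C\models\phi_e$ together with $w_3|_C=w_2|_C[R_e\gets 0]$ exhibits the action move $(q_1,(w_1+d)|_C)\actionMove{a}(q_2,w_3|_C)$ of $\sem A$ (this is the backward implication of Lem.~\ref{lem:samePaths}). Concatenating gives a path of $\sem A$ with label $d\,a\,d'$.

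\emph{The main obstacle} is the normalisation claim that the label $d\,a\ibullet\,0\,\ibullet a\,d'$ forces a path of $\sem{H(A)}$ to contain exactly one non-empty upstep and one non-empty downstep, around a common transition cell. This is where I expect the argument to need the most care; it rests on uniqueness of the sparse normal form (Lem.~\ref{le:tipoms-sparse}) and on $H(A)$ being at most one-dimensional, so that no two non-empty upsteps can be consecutive and a $1$-cell is entered only by a non-empty upstep and left only by a non-empty downstep. The remaining ingredients — the bookkeeping for $c_T$ and the two applications of Lem.~\ref{lem:samePaths} — are routine.
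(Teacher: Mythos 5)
Your proof is correct and follows essentially the same route as the paper's: Lemma~\ref{lem:samePaths} handles the action move, and the observation that time can elapse in $H(A)$ only in the $0$-cells (where the invariants coincide with those of $A$, while the conjunct $c_T\le 0$ blocks any positive delay in the $1$-cells) handles the delays. The paper compresses this into two sentences; your path normalisation and $c_T$ bookkeeping merely spell out the details it leaves implicit.
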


\begin{proof}
  From Lemma~\ref{lem:samePaths} we know that $a$ is the label of a path in $\sem{A}$
  if and only if $i_1(a) = a\ibullet\, 0\, \ibullet a$ is the label of some path in $\sem{H(A)}$.
  We conclude by noting that by construction,
  time evolves in $H(A)$ only in $0$-dimensional locations and exactly as it evolves in $A$.
  \qed
\end{proof}

\begin{theorem}
  \label{th:ta-hdta-lang}
  For any timed automaton $A$,
  $\Lang(H(A)) = \{i_1(w) \mid w \in \Lang(A)\}$.
\end{theorem}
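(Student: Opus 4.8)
The plan is to set up a bijective-style correspondence between accepting paths of $\sem A$ and accepting paths of $\sem{H(A)}$ and to check that, under this correspondence, observable contents are related exactly by $\idw$. The whole argument hinges on understanding the possible \emph{shapes} of paths in $\sem{H(A)}$. Since $H(A)$ is one-dimensional, with $0$-cells the locations $Q$ and $1$-cells the transitions $E$, from a state $(q,v)$ with $q\in Q$ the only available action move is an upstep $\upMove{\{\ell_e\}}$ into some $e\in E$ with $\delta^0_{\ell_e}(e)=q$, and from a state $(e,v)$ with $e\in E$ the only available action move is a downstep $\downMove{\{\ell_e\}}$ into $\delta^1_{\ell_e}(e)=q'_e$. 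Moreover the upstep into $e$ resets $c_T$ to $0$ (because $\exit(q)=\{c_T\}$), and the invariant $\inv(e)=\phi_e\land c_T\le 0$ then forbids any positive delay inside $e$; delay moves can therefore only happen in $0$-cells, where $\inv(q)=I(q)$. After absorbing the zero delays and identities allowed by the relations generating $\sim$ on $\tCoh$, every accepting path $\pi'$ of $\sem{H(A)}$ thus has an idword of the form
\[
  \id_\emptyset\, d_0\, a_0\ibullet\, 0\, \ibullet a_0\, d_1\, a_1\ibullet\, 0\, \ibullet a_1 \dotsm a_n\ibullet\, 0\, \ibullet a_n\, d_{n+1}\, \id_\emptyset ,
\]
which is precisely $\idw(d_0 a_0 d_1 \dotsm a_n d_{n+1})$ (cf.\ the one-step instance in Lem.~\ref{lem:sameTPaths}).

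\textbf{Inclusion $\Lang(H(A))\subseteq\{\idw(w)\mid w\in\Lang(A)\}$.} Given such a $\pi'$, replace each matched pair $(q,v'_1)\upMove{a}(e,v')\downMove{a}(q',v'_2)$ by the single action move $(q,v_1)\actionMove{a}(q',v_2)$ supplied by Lem.~\ref{lem:samePaths}, and keep the delay moves unchanged. By that lemma the $C$-components of the valuations agree at the interfaces, so the result is a genuine path $\pi$ of $\sem A$; it is accepting because the endpoints of $\pi'$, being $0$-cells in $\bot$ and $\top$, are exactly the initial and accepting states of $\sem A$. Using the last clause of Lem.~\ref{lem:samePaths} together with the recursive definition of $\ev$ (Def.~\ref{def:evpath}) and the fact that $\idw$ turns concatenation of delay words into concatenation of idwords, one obtains $\ev(\pi)\sim d_0 a_0 d_1\dotsm a_n d_{n+1}$ and hence $\ev(\pi')=\idw(\ev(\pi))$ with $\ev(\pi)\in\Lang(A)$.

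\textbf{Inclusion $\{\idw(w)\mid w\in\Lang(A)\}\subseteq\Lang(H(A))$.} Run the same construction backwards: given an accepting path $\pi$ of $\sem A$ with label $w$, write $w$ in its canonical alternating form $d_0 a_0 d_1\dotsm a_n d_{n+1}$, replace each action move by the up--down pair through the corresponding transition cell provided by Lem.~\ref{lem:samePaths}, and keep the delay moves. This yields an accepting path of $\sem{H(A)}$ whose idword is the displayed word above, i.e.\ $\idw(w)$. Finally, both sides of the claimed equality are sets of equivalence classes --- delay words modulo $\sim$ on the timed-automaton side, tipomsets (equivalently idwords) up to isomorphism on the HDTA side --- so one checks that $\idw$ descends to $\sim$-classes; this is immediate, since the relations generating $\sim$ on $(\Sigma\cup\Realnn)^*$ are exactly the ones among those generating $\sim$ on $\tCoh$ that do not involve starters or terminators.

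\textbf{Main obstacle.} The real work is the structural step for $\sem{H(A)}$: showing that an accepting path can only have the alternating shape ``delay moves in $0$-cells / matched zero-delay up--down pair through a $1$-cell'', and in particular that no positive delay can occur inside a transition cell. This is where one-dimensionality of $H(A)$ and the auxiliary clock $c_T$ do the essential job; everything afterwards is bookkeeping with Lemmas~\ref{lem:samePaths} and~\ref{lem:sameTPaths} and the compatibility of $\ev$ and $\idw$ with composition. A minor secondary point to handle with care is the leading and trailing delays $d_0,d_{n+1}$ (which may be zero), the empty path, and the well-definedness on quotients just mentioned.
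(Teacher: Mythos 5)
Your proposal is correct and follows essentially the same route as the paper: the paper's proof is a short induction on paths built on Lemmas~\ref{lem:samePaths} and~\ref{lem:sameTPaths} together with multiplicativity of $\idw$ and $\bot_{H(A)}=\bot_A$, $\top_{H(A)}=\top_A$, which is exactly your move-by-move correspondence unpacked. The structural point you single out as the main obstacle (no positive delay inside a $1$-cell because of $c_T$, hence the alternating shape of paths in $\sem{H(A)}$) is the same fact the paper buries in the proof of Lemma~\ref{lem:sameTPaths}, so you have merely made the paper's implicit induction explicit.
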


\begin{proof}
  From Lemma~\ref{lem:sameTPaths} we know that $d\, a\, d'$ is the label of some path in $\sem{A}$
  if and only if $i_1(d\, a\, d')$ is the label of some path in $\sem{H(A)}$.
  Now for any two delay words $w$, $w'$, $i_1(w w')=i_1(w)\, i_1(w')$,
  so the theorem follows by induction on paths
  and from $\bot_{H(A)}=\bot_A$ and $\top_{H(A)}=\top_A$. \qed
\end{proof}

By the above theorem, we can reduce deciding inclusion of languages of timed automata
to deciding inclusion of HDTA languages.
It follows that inclusion of HDTA languages is undecidable:

\begin{corollary}
  For HDTAs $A_1$, $A_2$, it is undecidable whether $\Lang(A_1)\subseteq \Lang(A_2)$.
\end{corollary}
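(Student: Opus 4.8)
The plan is to reduce language inclusion for timed automata to language inclusion for HDTAs via the translation $A\mapsto H(A)$ and Theorem~\ref{th:ta-hdta-lang}. Recall that language inclusion (indeed already universality) for timed automata is undecidable~\cite{DBLP:journals/tcs/AlurD94}; the two components of the reduction I would spell out are that $A\mapsto H(A)$ is effective and that it reflects and preserves language inclusion.

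First I would observe that $A\mapsto H(A)$ is computable: from the finite data $(\Sigma,C,Q,\bot,\top,I,E)$ one reads off directly the precubical set on $L=Q\uplus E$ (the face maps $\delta^0_{\ell_e}(e)=q_e$, $\delta^1_{\ell_e}(e)=q_e'$) together with the invariant and exit assignments and the fresh clock $c_T$. So given timed automata $A_1,A_2$ over a common alphabet $\Sigma$ we can construct $H(A_1),H(A_2)$. Next I would establish that $\Lang(A_1)\subseteq\Lang(A_2)$ if and only if $\Lang(H(A_1))\subseteq\Lang(H(A_2))$. By Theorem~\ref{th:ta-hdta-lang} we have $\Lang(H(A_i))=\{\idw(w)\mid w\in\Lang(A_i)\}$, so it suffices to note that $w\mapsto\idw(w)$ is injective on delay words. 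This is immediate: by Lem.~\ref{lem:sameTPaths}, $\idw$ leaves delays untouched and merely rewrites each action letter $a$ as the block $a\ibullet\,0\,\ibullet a$, so $w$ is recovered from $\idw(w)$ by collapsing these blocks. Injectivity gives the nontrivial (``$\Leftarrow$'') direction of the equivalence — if $\idw(w)\in\idw(\Lang(A_2))$ then $\idw(w)=\idw(w')$ for some $w'\in\Lang(A_2)$, whence $w=w'\in\Lang(A_2)$ — while the forward direction is trivial.

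Combining these two points, any decision procedure for HDTA language inclusion would yield one for timed-automata language inclusion, contradicting undecidability of the latter; hence the corollary. The only point requiring a word of care is that the classical undecidability result is usually phrased for timed words, whereas we use delay languages here, but delay words and timed words are in effective bijection (Sect.~\ref{se:2.2}), so undecidability transfers verbatim. I do not expect a genuine obstacle: the substantive work is already contained in Theorem~\ref{th:ta-hdta-lang} (via Lemmas~\ref{lem:samePaths} and~\ref{lem:sameTPaths}), and the corollary is a routine reduction on top of it.
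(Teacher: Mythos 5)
Your proof is correct and follows exactly the paper's intended argument: the corollary is obtained as a reduction from the (undecidable) language-inclusion problem for timed automata via the translation $H$ and Theorem~\ref{th:ta-hdta-lang}, with the injectivity of $\idw$ supplying the nontrivial direction. Your write-up merely makes explicit the routine details (effectiveness of $H$, injectivity of $\idw$, the delay-word/timed-word bijection) that the paper leaves implicit.
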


\section{Region Equivalence}

We revisit the notions of region equivalence and region automaton
from \cite{DBLP:journals/lites/Fahrenberg22}
in order to study untimings of languages of HDTAs.
For $d\in \Realnn$ we write $\lfloor d\rfloor$ and $\langle d\rangle$
for the integral and fractional parts of $d$,
so that $d= \lfloor d\rfloor+ \langle d\rangle$.

Let $A=(\Sigma, C, Q, \bot, \top, \inv, \exit)$ be an HDTA.
Denote by $M$ the maximal constant which appears in the invariants of $A$ and
let $\cong$ denote the region equivalence on $\Realnn^C$ induced by $A$.
That is, valuations $v, v': C\to \Realnn$ are \emph{region equivalent} (denoted $v \cong v'$) if
\begin{itemize}
\item $\lfloor v(x)\rfloor=\lfloor v'(x)\rfloor$ or $v(x), v'(x)>M$ for all $x\in C$,
\item $\langle v(x)\rangle=0$ iff $\langle v'(x)\rangle=0$ for all $x\in C$, and
\item $\langle v(x)\rangle\le \langle v(y)\rangle$ iff
  $\langle v'(x)\rangle\le \langle v'(y)\rangle$ for all $x, y\in C$.
\end{itemize}
We extend $\cong$ to $\sem{A}$ by defining $(l, v) \cong (l', v')$ iff $l = l'$ and $v \cong v'$.

A notion of untimed bisimulation for HDTA has been introduced in \cite{DBLP:journals/lites/Fahrenberg22},
together with the proof that $\cong$ is such an untimed bisimulation.
We adopt the notion for our purpose.

\begin{definition}
Let $R \subseteq Q \times \mathbb R_{\ge 0}^C \times Q \times \mathbb R_{\ge 0}^C$ be a symmetric relation.
Then $R$ is an \emph{untimed bisimulation} if
$((q_0, v^0), (q_0, v^0)) \in R$ for all $q_0 \in \bot$,
and for all $((q_1, v_1), (q_2, v_2)) \in R$,
\begin{enumerate}
\item $\ev(q_1)=\ev(q_2)$ and $q_1 \in \top$ iff $q_2 \in \top$,
\item for all $(q_1, v_1) \delayMove{d_1} (q'_1, v'_1)$ there is  $((q'_1, v'_1), (q'_2, v'_2)) \in R$ such that $(q_2, v_2) \delayMove{d_2} (q'_2, v'_2)$,
\item for all $(q_1, v_1) \upMove{U} (q'_1, v'_1)$ there is  $((q'_1, v'_1), (q'_2, v'_2)) \in R$ such that $(q_2, v_2) \upMove{U} (q'_2, v'_2)$, and
\item for all $(q_1, v_1) \downMove{U} (q'_1, v'_1)$ there is  $((q'_1, v'_1), (q'_2, v'_2)) \in R$ such that $(q_2, v_2) \downMove{U} (q'_2, v'_2)$.
\end{enumerate}
\end{definition}

\begin{lemma}
  The relation $\cong$ is an untimed bisimulation.
\end{lemma}

\begin{proof}
  First, for all $q_0 \in \bot$, $(q_0, v^0) \cong (q_0, v^0)$ by definition,
  and for all $(q_1, v_1) \cong (q_2, v_2)$ , $q_1 \in \top \Leftrightarrow q_2 \in \top$ and $\ev(q_1) = \ev(q_2)$ because $q_1 = q_2$.

  Let $(l, v_1) \cong (l, v_2)$ and $(l, v_1) \leadsto (l', v'_1)$ for either of the three types of moves,
  we show that there is $v'_2$ such that $(l', v'_1) \cong (l', v'_2)$ and $(l, v_2) \leadsto (l', v'_2)$ for the same type of move.

  If ${\leadsto}={\delayMove{d_1}}$,
  then $v'_1 = v_1 + d_1$ and $l = l'$,
  so we have $d_2$ such that $v'_2 = v_2 + d_2 \cong v_1 + d_1$
  and thus $(l, v_2) \delayMove{d_2} (l', v'_2)$ and $(l, v'_1) \cong (l, v'_2)$.

  If ${\leadsto}={\upMove{U}}$,
  then $v'_1 = v_1[\exit(l) \gets 0] \models \inv(l')$.
  Let $v'_2 = v_2[\exit(l) \gets 0]$, then $v'_2 \cong v'_1$ and $v'_2 \models \inv(l')$,
  hence $(l, v_2) \upMove{U} (l', v'_2)$ and $(l, v'_1) \cong (l, v'_2)$.
  The argument for ${\leadsto}={\downMove{U}}$ is similar.
  \qed
\end{proof}

An immediate consequence is the following.

\begin{lemma}
  \label{le:transBisim}
  Let $t = (q_1,v_1) \leadsto (q_2,v_2)$ be a transition in $\sem{A}$.
  For all $v'_1\cong v_1$ there exists a transition $t'= (q_1,v'_1) \leadsto (q_2,v'_2)$ of the same type
  such that $v'_2 \cong v_2$. \qed
\end{lemma}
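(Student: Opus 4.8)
The plan is to reduce Lemma~\ref{le:transBisim} to the untimed bisimulation property of $\cong$ established in \cite{DBLP:journals/lites/Fahrenberg22}, by checking that the three types of moves of $\sem A$ each respect region equivalence. First I would recall what ``$\cong$ is an untimed bisimulation'' gives us: for region-equivalent valuations, the same action moves ($\upMove{U}$ and $\downMove{U}$) are available in $\sem A$ with targets again region equivalent, and delay moves can be matched up to region equivalence (possibly with a \emph{different} delay, since only the region reached matters, not the exact elapsed time). So the statement follows directly once we split on the type of $t$.

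The key steps, in order. (1) If $t$ is an action move $(l_1,v_1)\upMove{U}(l_2,v_2)$, then $v_2 = v_1[\exit(\delta_U^0(l_2))\gets 0]$ and $v_2\models\inv(l_2)$; since $v'_1\cong v_1$ and region equivalence is preserved by clock resets and is compatible with satisfaction of clock constraints (both are standard facts about regions, already implicit in the region construction of \cite{DBLP:journals/lites/Fahrenberg22}), the valuation $v'_2 = v'_1[\exit(\delta_U^0(l_2))\gets 0]$ satisfies $v'_2\cong v_2$ and $v'_2\models\inv(l_2)$, so $t' = (l_1,v'_1)\upMove{U}(l_2,v'_2)$ is a transition of $\sem A$. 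The case $t = (l_1,v_1)\downMove{U}(l_2,v_2)$ is symmetric, using $\exit(l_1)$ and $\inv(\delta_U^1(l_1))$. (2) If $t$ is a delay move $(l_1,v_1)\delayMove{d}(l_1,v_1+d)$ with $l_1=l_2$, then by the untimed bisimulation property there is a delay $d'$ with $v'_1+d' \cong v_1+d$ and $v'_1+d''\models\inv(l_1)$ for all $0\le d''\le d'$, i.e.\ $t' = (l_1,v'_1)\delayMove{d'}(l_1,v'_2)$ with $v'_2 = v'_1+d'\cong v_1+d = v_2$. In all three cases the source location is unchanged from $v_1$ to $v'_1$ (indeed $t'$ has source $(l_1,v'_1)$ by construction), and the target location equals that of $t$, as required.

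The main obstacle is essentially bookkeeping rather than a genuine difficulty: one must be careful that the lemma statement allows $t'$ to carry a \emph{possibly different} delay label in the delay-move case — the conclusion only asserts existence of \emph{some} transition $(l_1,v'_1)\leadsto(l_2,v'_2)$ with $v'_2\cong v_2$, not one with the same label — and this is exactly why the untimed (rather than timed) bisimulation is the right tool. The only other point to verify is that the region construction from \cite{DBLP:journals/lites/Fahrenberg22} is indeed set up so that its untimed bisimulation relation on configurations is precisely ``same location, region-equivalent valuation''; granting that, the lemma is immediate, which is why the excerpt states it as an ``immediate consequence'' and defers the proof (marking it \qed in the statement).
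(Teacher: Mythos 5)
Your proposal is correct and takes the same route as the paper: the paper gives no separate proof, stating the lemma as an immediate consequence of the fact (cited from the HDTA paper of Fahrenberg) that region equivalence $\cong$ is an untimed bisimulation, which is exactly the reduction you make. Your additional case split on the three move types, using that resets and time elapse preserve regions and that regions refine satisfaction of the invariants (whose constants are bounded by $M$), simply fills in the standard region-theoretic details the paper leaves implicit, including the correct observation that the matching delay $d'$ may differ from $d$.
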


As usual, a \emph{region} is an equivalence class of $\Realnn^C$ under $\cong$.
Let $R=\Realnn^C{}_{/{\cong}}$ denote the set of regions,
then $R$ is finite \cite{DBLP:journals/tcs/AlurD94}.

\begin{definition}
  \label{def:RA}
  The \emph{region automaton} of $A$
  is the ST-automaton $R(A)=(S, {\twoheadrightarrow},$ $S^\bot, S^\top, \stc)$ given as follows:
  \begin{gather*}
    S = \{(q, r)\in Q\times R\mid r\subseteq \sem{\inv(q)}\} \qquad \stc((q, r))=\ev(q) \\
    S^\bot = \{(q, \{v^0\})\mid q\in \bot\} \qquad S^\top=S\cap \top\times R \\
    \begin{aligned}
      {\twoheadrightarrow} ={} &\{((q, r), \id_{\ev(q)}, (q, r')) \mid \exists v\in r, v'\in r', d \in \Realnn: (q, v)\delayMove{d} (q, v'), v' = v + d\} \\
      {}\cup{} &\{((q, r), U, (q', r'))\mid \exists v\in r, v'\in r': (q, v)\upMove{U} (q', v')\} \\
      {}\cup{} &\{((q, r), U, (q', r'))\mid \exists v\in r, v'\in r': (q, v)\downMove{U} (q', v')\}
    \end{aligned}
  \end{gather*}
\end{definition}

\begin{lemma}
  \label{lem:sameUpath}
  A path $(q_0,v_0) \leadsto \dots \leadsto (q_p,v_p)$ is accepting in $\sem{A}$
  if and only if $(q_0,r_0) \twoheadrightarrow \dots \twoheadrightarrow (q_p,r_p)$,
  with $v_i \in r_i$, is accepting in $R(A)$.
\end{lemma}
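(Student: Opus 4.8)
The plan is to prove both directions by induction on the length of the path, using Lemma~\ref{le:transBisim} for the forward direction and a straightforward unwinding of Definition~\ref{def:RA} for the backward direction. First I would fix an accepting path $(l^0,v_0) \leadsto \dots \leadsto (l_p,v_p)$ in $\sem A$. Since it is accepting, $(l^0,v_0)\in S^\bot$, so $v_0 = v^0$ and $l^0\in \bot_L$; in $R(A)$ this gives the initial move $(l^0_\bot,\{v^0\}) \twoheadrightarrow (l^0,\{v^0\})$, and we set $r_0 = \{v^0\}$. For the inductive step, suppose we have produced regions $r_0,\dots,r_i$ with $v_j\in r_j$ and the path $(l^0_\bot,r_0)\twoheadrightarrow(l^0,r_0)\twoheadrightarrow\dots\twoheadrightarrow(l_i,r_i)$ in $R(A)$. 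The move $(l_i,v_i)\leadsto(l_{i+1},v_{i+1})$ is a delay, an upstep, or a downstep; in each case we set $r_{i+1}$ to be the region of $v_{i+1}$, and Definition~\ref{def:RA} directly provides the matching transition $(l_i,r_i)\twoheadrightarrow(l_{i+1},r_{i+1})$ using the witnesses $v_i\in r_i$, $v_{i+1}\in r_{i+1}$. Finally, $(l_p,v_p)\in S^\top$ means $l_p\in\top_L$, so $(l_p,r_p)\in S^\top$ in $R(A)$, and the constructed path is accepting.

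For the converse, I would start from an accepting path $(l^0_\bot,r_0) \twoheadrightarrow (l^0,r_0) \twoheadrightarrow \dots \twoheadrightarrow (l_p,r_p)$ in $R(A)$ with prescribed $v_i\in r_i$; note that the definition of $S^\bot$ in $R(A)$ forces $r_0 = \{v^0\}$, hence $v_0 = v^0$, matching $S^\bot$ in $\sem A$. Here the induction should run \emph{backwards} or, more cleanly, we reconstruct the concrete path by choosing valuations carefully: the subtlety is that the bare existence of a transition $(l_i,r_i)\twoheadrightarrow(l_{i+1},r_{i+1})$ only guarantees \emph{some} $v\in r_i$ and $v'\in r_{i+1}$ with $(l_i,v)\leadsto(l_{i+1},v')$, not that this works for our already-fixed $v_i$. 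This is exactly where Lemma~\ref{le:transBisim} is needed: given the transition in $R(A)$ comes from a concrete $(l_i,v)\leadsto(l_{i+1},v')$ with $v\cong v_i$, the lemma yields a transition $(l_i,v_i)\leadsto(l_{i+1},v'')$ with $v''\cong v'$, i.e. $v''\in r_{i+1}$; we then take $v_{i+1} = v''$. Iterating from $i = 0$ upward produces the concrete accepting path in $\sem A$, and acceptance transfers back since $l_p\in\top_L$.

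The main obstacle is the bookkeeping in the converse direction: one must be careful that the valuations $v_i$ along the concrete path are \emph{not} free to choose independently once the first one is pinned down, so the induction must propagate a single coherent choice forward using Lemma~\ref{le:transBisim} at each step. A secondary point worth spelling out is the treatment of the extra initial transition $(l^0_\bot,r_0)\twoheadrightarrow(l^0,r_0)$ with label $\id_{\ev(l^0)}$: it has no counterpart as a \emph{move} in $\sem A$, but it also contributes only the identity $\id_{\ev(l^0)}$ to the observable content (consistent with the first item of Def.~\ref{def:evpath}), so it does not affect the correspondence of paths — it is purely a device, as the paper notes, to avoid spurious $\epsilon$-paths. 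Everything else is a routine case distinction on the three kinds of moves, each case being immediate from the matching clause in Definition~\ref{def:RA}.
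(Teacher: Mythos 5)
Your proposal is correct and follows essentially the same route as the paper: the forward direction is read off directly from Definition~\ref{def:RA}, and the backward direction proceeds by induction along the path, using Lemma~\ref{le:transBisim} at each step to replace the existential witness $v\in r_i$ by the already-fixed $v_i$ and propagate a coherent choice of valuations forward. The subtlety you flag about not being able to choose the $v_i$ independently is exactly the point the paper's proof addresses with the same lemma.
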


\begin{proof}
  The direction from left to right follows directly from the definition of $\twoheadrightarrow$ above.
  For the other direction let $(q_0,r_0) \twoheadrightarrow \dots \twoheadrightarrow (q_p,r_p)$,  with $v_i \in r_i$, be an accepting path of $R(A)$.
  By construction, for all $t= (q_i,r_i) \twoheadrightarrow (q_{i+1},r_{p+1})$,
  if $t = ((q_i,r_i),\id_{\ev(q_i)},(q_{i+1},r_{p+1}))$,
  then there exist $v'_i \in r_i$, $v'_{i+1} \in r_{i+1}$ and $d \in \Realnn$ such that $v'_{i+1} = v'_i + d$ and $(q_i,v'_i) \delayMove{d} (q_{i+1},v'_{i+1})$ in $\sem{A}$.
  Since $v'_i \in r_i$ and $v'_{i+1} \in r_{i+1}$ then $v_i \cong v'_i$, $v_{i+1} \cong v'_{i+1}$ and, by Lemma~\ref{le:transBisim}, $(q_i,v_i) \delayMove{d} (q_{i+1},v_{i+1})$ in $\sem{A}$.
  Similar arguments are used if $t = ((q_i,r_i),U,(q_{i+1},r_{i+1}))$ to get $(q_i, v_i)\upMove{U} (q_{i+1}, v_{i+1})$ if $U$ is a starter and $(q_i, v_i)\downMove{U} (q_{i+1}, v_{i+1})$ if $U$ is a terminator.
  We conclude by noticing that $(q_0,r_0) \in S^\bot$ if and only if $v_0= v^0$ is the initial valuation
  and $(q_p,r_p) \in S^\top$ if and only if $q_p \in \top$.
  \qed
\end{proof}

\begin{theorem}
  \label{th:unt=region}
  For any HDTA $A$,
  $\Glue(\Lang(R(A)))=\unt(\Lang(A))$.
\end{theorem}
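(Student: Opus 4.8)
The plan is to prove the two inclusions separately, in each case using the path correspondence of Lemma~\ref{lem:sameUpath} and then matching observable content move by move. I read $\Lang(R(A))$ as a set of ipomsets by identifying an accepting word $w$ of the finite automaton $R(A)$ with $\Glue(w)$; the leading edge out of the extra copy $l^0_\bot$ contributes $\id_{\ev(l^0)}$, which is absorbed by gluing but ensures that a one-state path is recorded as $\id_{\ev(l^0)}$ rather than the empty word, in accordance with Def.~\ref{def:evpath}. The key elementary observation, checked directly from Def.~\ref{def:RA} and the definition of $\sem A$, is that the $(\St\cup\Te)$-label of an $R(A)$-edge, and likewise the untiming of the observable content of a single move of $\sem A$, is determined by the ordered pair of its source and target locations: if the two locations differ, the dimension decides up- versus downstep and the event set is forced; if they coincide, the label is $\id_{\ev(l)}$ in every case (a delay move, or a degenerate up/downstep with empty event set).

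For the inclusion $\unt(\Lang(A)) \subseteq \Lang(R(A))$, let $\pi = (l^0,v_0) \leadsto \dots \leadsto (l_p,v_p)$ be an accepting path of $\sem A$ and choose regions $r_i$ with $v_i\in r_i$. By Lemma~\ref{lem:sameUpath} the induced sequence $(l^0_\bot,r_0) \twoheadrightarrow (l^0,r_0) \twoheadrightarrow \dots \twoheadrightarrow (l_p,r_p)$ is an accepting path of $R(A)$; let $w$ be the word it spells. Writing $\pi$ as the concatenation of its single moves $m_1,\dots,m_p$ and using Def.~\ref{def:evpath}, $\ev(\pi)=\ev(m_1)*\dots*\ev(m_p)$, so by Lemma~\ref{lem:untRespectsGlue} (iterated) $\unt(\ev(\pi))=\unt(\ev(m_1))*\dots*\unt(\ev(m_p))$. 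By the observation above, $\unt(\ev(m_j))$ equals the $R(A)$-label of the $j$-th edge of the induced path, and hence $\unt(\ev(\pi))=\Glue(w)\in\Lang(R(A))$.

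For the reverse inclusion $\Lang(R(A)) \subseteq \unt(\Lang(A))$, let $(l^0_\bot,r_0)\twoheadrightarrow(l^0,r_0)\twoheadrightarrow\dots\twoheadrightarrow(l_p,r_p)$ be an accepting path of $R(A)$ spelling $w$. Lift it to $\sem A$ by induction, starting at $(l^0,v^0)$ with $v^0\in r_0$. Given $(l_i,v_i)$ with $v_i\in r_i$: by Def.~\ref{def:RA} the edge $(l_i,r_i)\twoheadrightarrow(l_{i+1},r_{i+1})$ comes from some move $(l_i,v)\leadsto(l_{i+1},v')$ of $\sem A$ with $v\in r_i$ and $v'\in r_{i+1}$; since $v_i\cong v$, Lemma~\ref{le:transBisim} yields a move $(l_i,v_i)\leadsto(l_{i+1},v_{i+1})$ of $\sem A$ with $v_{i+1}\cong v'$, hence $v_{i+1}\in r_{i+1}$. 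Concatenating these moves produces an accepting path $\pi$ of $\sem A$ (its endpoints satisfy $l^0\in\bot_L$, $l_p\in\top_L$), and since each lifted move has the same source and target locations as the corresponding $R(A)$-edge, the observation above and the same computation as before give $\unt(\ev(\pi))=\Glue(w)$. Thus $\Glue(w)\in\unt(\Lang(A))$.

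The routine part is the gluing bookkeeping. The one genuine point is the lifting in the second inclusion: at each step one must exhibit a move out of the \emph{actual} valuation $v_i$ reached so far, not merely out of the region-equivalent witness $v$ supplied by Def.~\ref{def:RA}, and this is exactly the content of the untimed-bisimulation property recorded in Lemma~\ref{le:transBisim} (and the reason region equivalence is the appropriate relation here). A minor secondary concern is aligning the degenerate cases — zero delays, a path consisting of a single state, the initial $l^0_\bot$-edge — with the identity conventions; these are handled by the extra start-location copies in $R(A)$ together with the identification $I\sim\epsilon$ built into step sequences.
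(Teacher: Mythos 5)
Your proof is correct and follows essentially the same route as the paper's: the path correspondence of Lem.~\ref{lem:sameUpath}, the observation that the $R(A)$-edge label equals the untiming of the corresponding move's observable content regardless of move type, and Lem.~\ref{lem:untRespectsGlue} to push $\unt$ through the gluing. The only differences are presentational — you split the iff-chain into two inclusions and re-derive inside the second one the lifting via Lem.~\ref{le:transBisim} that is already the content of Lem.~\ref{lem:sameUpath}.
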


\begin{proof}
  From Lemma~\ref{lem:sameUpath} we know that for all $n \geq 0$,
  a path $\pi = (q_0,v^0) \leadsto \dots \leadsto (q_n,v_n)$ in $\sem A$ is accepting
  if and only if  $\pi' = (q_0,r_0)  \twoheadrightarrow \dots \twoheadrightarrow (q_n,r_n)$
  with $v_i \in r_i$ is accepting in $R(A)$.
  Let $P_i = \ev((q_{i-1},v_{i-1}) \leadsto (q_{i},v_{i}))$ for $i=1,\dots, n$.
  Then $\ev(\pi) = P_1 * \dots * P_n$.

  By construction of $R(A)$, it is clear that
  for all $1 \le i \le n$, $\unt(P_i)$ is the label of  $(q_{i-1},r_{i-1}) \twoheadrightarrow (q_{i},r_{i})$
  regardless of the type of the move $\leadsto$.
  Thus the label of $\pi'$ is $\unt(P_1) \dotsc \unt(P_n)$.
  Hence $P_1 * \dots * P_n$ is accepted by $A$ iff $\unt(P_1) \dots \unt(P_n)$ is accepted by $R(A)$.
  We conclude:
  \begin{align*}
    \Glue(\Lang(R(A)))
    &= \{\unt(P_1) * \dots * \unt(P_n) \mid P_1*\dots *P_n \in \Lang(A)\}\\
    &= \{\unt(P_1 * \dots * P_n) \mid P_1*\dots *P_n \in \Lang(A)\}
    \qquad\quad\text{(Lemma~\ref{lem:untRespectsGlue})}\hspace*{-.5em}\\
    & = \unt(\Lang(A))
  \end{align*}

  \vspace*{-4ex}\qed
\end{proof}

By the Kleene theorem for finite automata,
$\Lang(R(A))$ is represented by a regular expression over $\St \cup \Te$.
Since $P_1 * \dots * P_n$ is accepted by $A$ if and only if
the coherent word $\unt(P_1) \dots \unt(P_n)$  is accepted by $R(A)$,
Theorems~\ref{th:kleene} and~\ref{th:unt=region} now imply the following.

\begin{corollary}
  For any HDTA $A$, $\unt(\Lang(A))\down$ is a regular ipomset language.
\end{corollary}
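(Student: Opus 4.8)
The plan is to lift a classical regular expression for the word language of the region automaton $R(A)$ through the bijection $\Glue$ and recognise the result as a \emph{rational} ipomset language, so that the Kleene theorem for HDAs (Thm.~\ref{th:kleene}) gives regularity. By Thm.~\ref{th:unt=region}, $\unt(\Lang(A))$ is the $\Glue$-image of the (classical) language $\Lang(R(A))$ of the finite automaton $R(A)$ over the finite alphabet $\St\cup\Te$; all words in $\Lang(R(A))$ are coherent, and none is empty, since an accepting run of $R(A)$ starts in an auxiliary start copy whose only outgoing transitions are labelled by a (non-empty) identity. If $\Lang(R(A))=\emptyset$ then $\unt(\Lang(A))\down=\emptyset$ is rational; otherwise I would first pass to the trim part of $R(A)$ (discarding states that are not both reachable and co-reachable), which does not change the language. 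Trimming is the crucial preparation: afterwards every word labelling \emph{any} path in $R(A)$ is a factor of some accepted word, hence coherent.

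Next I would run the state-elimination algorithm on the trim automaton to obtain a regular expression $r$ over $\St\cup\Te$ with $\Lang(r)=\Lang(R(A))$, and rewrite $r$ into $\epsilon$-free form using Kleene plus instead of star (legitimate since $\epsilon\notin\Lang(R(A))$). By the previous paragraph, every subexpression of $r$ denotes a language of coherent words. I would then translate $r$ into a rational ipomset expression $\hat r$ by setting $\hat U=U$ — recalling that starters, terminators and identities are \emph{discrete} ipomsets, hence admissible atoms — and $\widehat{r_1\cup r_2}=\hat r_1\cup\hat r_2$, $\widehat{r_1 r_2}=\hat r_1*\hat r_2$, $\widehat{r_1^+}=\hat r_1^+$. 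An induction on $r$ then shows that the ipomset language $\Lang(\hat r)$ equals $\{\Glue(w)\mid w\in\Lang(r)\}\down$: the base case is $\{U\}\down$; for a concatenation, coherence of $\Lang(r_1 r_2)$ forces $w_1 w_2$ to be coherent for \emph{every} pair $(w_1,w_2)\in\Lang(r_1)\times\Lang(r_2)$, so $\{\Glue(w)\mid w\in\Lang(r_1 r_2)\}=\{\Glue(w_1)\mid w_1\in\Lang(r_1)\}*\{\Glue(w_2)\mid w_2\in\Lang(r_2)\}$, and one concludes with the standard fact that gluing is compatible with subsumption, $(L_1*L_2)\down=(L_1\down * L_2\down)\down$ \cite{Hdalang,DBLP:conf/concur/FahrenbergJSZ22}; the Kleene-plus step is analogous. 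Hence $\Lang(\hat r)=\unt(\Lang(A))\down$, which is rational, thus regular by Thm.~\ref{th:kleene}.

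The routine ingredients are state elimination, $\epsilon$-removal, the observation that discrete ipomsets are legal atoms, and the inductive commutation of $\Glue$ with $\cup$, $*$ and $^+$. The one delicate point — and what I expect to be the main obstacle — is the clash between word concatenation, which is total, and gluing of ipomsets, which is partial: $\Lang(R(A))$ need not be closed under the congruence $\sim$ (merging two consecutive starters changes which clocks the region automaton resets, so $\Glue$ is genuinely non-injective on $\Lang(R(A))$), so one cannot simply declare $\Lang(R(A))$ to be a step-sequence language and invoke the bijection between step sequences and ipomsets. Trimming $R(A)$ is precisely what neutralises the partiality: it guarantees that every subexpression of the extracted regular expression denotes only coherent words, which is exactly the hypothesis under which $\Glue$ distributes over the rational operations. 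Alternatively, one might derive the statement more directly from the ST-automaton machinery of \cite{DBLP:conf/ictac/AmraneBFZ23}, to which the region automaton construction is closely related.
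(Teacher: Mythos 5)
Your proposal is correct and follows essentially the same route as the paper: extract a regular expression over $\St\cup\Te$ from the finite region automaton $R(A)$, identify its language with $\unt(\Lang(A))$ via Thm.~\ref{th:unt=region}, and conclude with the Kleene theorem for HDAs (Thm.~\ref{th:kleene}). The only difference is that you supply the bookkeeping the paper leaves implicit (trimming, $\epsilon$-removal, and the induction showing that $\Glue$ commutes with the rational operations up to subsumption closure); note that trimming is not actually needed for coherence, since every transition of $R(A)$ is labelled by a starter or terminator whose source and target conclists are $\ev$ of the source and target cells, so the label of \emph{any} path of $R(A)$ glues.
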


In \cite{DBLP:conf/ictac/AmraneBFZ23} it is shown that inclusion of regular ipomset languages is decidable.
Now untimings of HDTA languages are not regular because they are not closed under subsumption,
but the proof in \cite{DBLP:conf/ictac/AmraneBFZ23}, using ST-automata,
immediately extends to a proof of the fact that also inclusion of untimings of HDTA languages is decidable:

\begin{corollary}
  For HDTAs $A_1$ and $A_2$, it is decidable whether $\unt(\Lang(A_1))\subseteq \unt(\Lang(A_2))$.
\end{corollary}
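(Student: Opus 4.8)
The plan is to reduce the problem to inclusion of ordinary regular word languages over a finite alphabet, reusing the region automaton of Def.~\ref{def:RA}. By Thm.~\ref{th:unt=region} we have $\unt(\Lang(A_i)) = \Lang(R(A_i))$, where $R(A_i)$ is a finite automaton whose transitions carry labels in $\St\cup\Te$ and whose accepted words are coherent words in $(\St\cup\Te)^*$. Since $\Glue$ is a bijection between step sequences and ipomsets, and step sequences are $\sim$-classes of (bracketed) coherent words, deciding $\unt(\Lang(A_1))\subseteq\unt(\Lang(A_2))$ is the same as deciding inclusion of the corresponding sets of step sequences, \ie~of the $\sim$-saturations inside $\Coh$ of the coherent words accepted by $R(A_1)$ and $R(A_2)$.

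The core step is to select in each $\sim$-class its unique sparse representative (Lem.~\ref{le:iipoms-sparse}) and to transform each $R(A_i)$ effectively into a finite automaton $R'(A_i)$ over $\St_+\cup\Te_+$ --- additionally decorated with the conclists of the source and target cells, which fix the interfaces --- that accepts exactly the sparse step decompositions of the ipomsets of $\unt(\Lang(A_i))$. This amounts to (i) eliminating the identity-labelled delay transitions by the usual $\epsilon$-removal, and (ii) contracting every maximal run of consecutive starter transitions into a single transition labelled by the gluing of its labels, and symmetrically for terminators; these steps realise exactly the relations $I\sim\epsilon$, $S_1S_2\sim S_1*S_2$ and $T_1T_2\sim T_1*T_2$. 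Step (ii) stays within finite automata because in a run of consecutive starters the started events accumulate in the target cell, so the length of the run is bounded by $\dim(A_i)$, and hence only finitely many gluings $S_1*\dots*S_k$ can occur as labels (and likewise for terminators). By Lem.~\ref{le:iipoms-sparse} the resulting $\epsilon$-free contracted words are precisely the sparse step decompositions, so, by uniqueness of that decomposition, the map sending an ipomset to its sparse decomposition restricts to a bijection from $\unt(\Lang(A_i))$ onto $\Lang(R'(A_i))$.

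Applying the same bijection on both sides, $\unt(\Lang(A_1))\subseteq\unt(\Lang(A_2))$ holds iff $\Lang(R'(A_1))\subseteq\Lang(R'(A_2))$ as plain word languages over a finite alphabet, and inclusion of regular word languages is decidable, which proves the corollary. This is exactly the procedure of \cite{DBLP:conf/ictac/AmraneBFZ23} via ST-automata, with one simplification: there, the closure of HDA languages under $\subsu$ was used to additionally saturate the word languages under subsumption before comparing them, whereas here $\unt(\Lang(A))$ is in general not subsumption-closed (Ex.~\ref{ex:thda-ex3-lang}) and we simply omit that saturation. I expect the main obstacle to be step (ii): the $\epsilon$-removal of (i) is routine, but checking that the contraction of maximal starter/terminator runs is a well-defined finite automaton operation yielding exactly the sparse normal form requires the $\dim(A_i)$ bound and careful bookkeeping of run boundaries --- precisely the technical heart carried over from \cite{DBLP:conf/ictac/AmraneBFZ23}.
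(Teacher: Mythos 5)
Your proposal is correct and follows essentially the same route as the paper: the paper likewise invokes Thm.~\ref{th:unt=region} to identify $\unt(\Lang(A_i))$ with the language of the region automaton over $\St\cup\Te$ and then appeals to the ST-automaton inclusion test of \cite{DBLP:conf/ictac/AmraneBFZ23}, noting only that the subsumption-closure step is dropped since untimings of HDTA languages are not subsumption-closed. Your reduction to inclusion of regular word languages of sparse step decompositions (with the $\dim(A_i)$ bound justifying finiteness of the contracted alphabet) is exactly the content of that cited argument, spelled out in more detail than the paper itself provides.
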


\section{Conclusion and Perspectives}

We have introduced a new language-based semantics for real-time concurrency,
informed by recent work on higher-dimensional timed automata (HDTAs)
and on languages of higher-dimensional automata.
On one side we have combined the delay words of timed automata
with the step sequences of higher-dimensional automata
into interval delay words.
On the other side we have generalized the timed words of timed automata
and the ipomsets (pomsets with interfaces) of higher-dimensional automata
into timed ipomsets.
We have further shown that both approaches are equivalent.

Higher-dimensional timed automata model concurrency with higher-di\-men\-sion\-al cells
and real time with clock constraints.
Analogously, timed ipomsets express concurrency by partial orders
and real time by interval timestamps on events.
Compared to related work on languages of time Petri nets,
what is new here are the interfaces
and the fact that each event has two timestamps (instead of only one),
the first marking its beginning and the second its termination.
This permits to introduce a gluing operation for timed ipomsets
which generalizes serial composition for pomsets.
It further allows us to generalize step decompositions of ipomsets
into a notion of interval delay words which resemble the delay words of timed automata.

As an application, we have shown that
language inclusion of HDTAs is undecidable,
but that the untimings of their languages have enough regularity
to imply decidability of untimed language inclusion.

\paragraph{Perspectives.}

We have seen that unlike languages of higher-dimensional automata,
untimings of HDTA languages are not closed under subsumption.
This relates HDTAs to partial higher-dimensional automata \cite{%
  DBLP:conf/calco/FahrenbergL15,
  DBLP:conf/fossacs/Dubut19}
and calls for the introduction of a proper language theory of these models.

Secondly, the language theory of higher-dimensional automata is rather well-behaved
in that it admits a Kleene theorem \cite{DBLP:conf/concur/FahrenbergJSZ22},
a Myhill-Nerode theorem \cite{DBLP:journals/fuin/FahrenbergZ24},
and a Büchi-Elgot-\!Trakhtenbrot theorem \cite{DBLP:conf/dlt/AmraneBFF24}.
For timed automata the situation is rather less pleasant \cite{DBLP:journals/eatcs/Asarin04},
and we are wondering how such properties will play out for HDTAs.
There is also recent work on first-order logics \cite{DBLP:journals/corr/abs-2410-12493}
and on branching-time logics \cite{DBLP:journals/corr/abs-2402-01589} for HDAs
which should be relevant in this context.

Timed automata are very useful in real-time model checking,
and our language-based semantics opens up first venues
for real-time concurrent model checking using HDTAs and some linear-time logic akin to LTL.
What would be needed now are notions of timed simulation and bisimulation---%
we conjecture that as for timed automata, these should be decidable for HDTAs---%
and a relation with CTL-type logics.
One advantage of HDTAs is that they admit a partial-order semantics,
so partial-order reduction
(which is difficult for timed automata \cite{%
  DBLP:conf/cav/BonnelandJLMS18,
  DBLP:conf/lics/0001HSW22})
may be avoided from the outset.

We have given an extensive example
of how the model checking of products of timed automata
may be accelerated when using the non-interleaving tensor product of HDTAs.
A natural next step in this line of work is to define parallel compositions of timed ipomsets
in a way so that the language of a tensor product $A\otimes B$
is the parallel composition of the languages $\Lang(A)$ and $\Lang(B)$.
To this end we also note that timed ipomsets bear some resemblance to multi-dimensional signals \cite{DBLP:conf/formats/MalerN04},
\ie~functions $[0, d\mathclose[^n\to \{0, 1\}$ for some $d\in \Realnn$ and $n\in \Nat$ with finitely many discontinuities.
This opens up connections to recent work on STL. 

Finally, a note on robustness.
Adding information about durations and timings of events to HDTAs
raises questions similar to those already existing in timed automata.
Indeed, the model of timed automata supports unrealistic assumptions about clock precision and zero-delay actions,
and adding concurrency makes the need for robustness in HDTAs even more crucial.
Other abstract assumptions are made about guards \cite{BouyerMR06, BouyerMS11, Sankur13} and clock speeds \cite{Puri00}. 
It is thus pertinent to study the robustness of HDTAs and their languages
under delay perturbations, similarly for example to the work done in \cite{%
  BouyerFM15,
  CJMM20,
  Clement22}.
It also seems crucial to explore these types of robustness in order to provide more realistic verification algorithms.
Robustness may be formalized using notions of distances and topology,
see for example \cite{%
  GuptaHJ97,
  ASARIN2015,
  DBLP:conf/formats/AsarinBD18,
  thesis/Fahrenberg22,
  DBLP:journals/tcs/FahrenbergL14}.
Distances between timed words need to take permutations of symbols into account \cite{ASARIN2015},
and it seems promising to use partial orders and timed ipomsets to formalize this,
also in relation to recent work on distances of signals \cite{DBLP:conf/qestformats/RinoFA24}.

\newcommand{\Afirst}[1]{#1} \newcommand{\afirst}[1]{#1}

\end{document}